\newenvironment{tbs}{%
   \small\tt
   \renewcommand{\labelitemi}{$\blacktriangleright$}%
   \begin{itemize}}{\end{itemize}}
\newcommand{\btbs}{\begin{tbs}}                                                                      
\newcommand{\etbs}{\end{tbs}}
\newcommand{\hide}[1]{}
\newtheorem{theo}{Theorem}
\newtheorem{prop}{Proposition}
\theoremstyle{definition}
\newtheorem{defi}{Definition}
\theoremstyle{definition}
\newtheorem{rema}{Remark}
\theoremstyle{definition}
\newtheorem{exam}{Example}
\theoremstyle{definition}
\newtheorem{claim}{Claim}
\newcommand{\funP}{\mathsf{P}}
\newcommand{\Prop}{\ensuremath{\mathtt{X}}}        
\renewcommand{\phi}{\varphi} 
\newcommand{\bbS}{\mathfrak{M}}
\title{Flat modal fixpoint logics with the converse modality}
\author{Sebastian Enqvist \\ 
\small
Email: thesebastianenqvist@gmail.com \\
Phone: +46 (0)761903819
\normalsize
}
\affil{Department of Philosophy, Stockholm University}
\date{\today}
\begin{document}

\maketitle
\begin{abstract}
We prove a generic completeness result for a class of  modal fixpoint logics corresponding to flat fragments of the two-way mu-calculus, extending earlier work by Santocanale and Venema. We observe that Santocanale and Venema's proof that least fixpoints in the Lindenbaum-Tarski algebra of certain flat fixpoint logics are constructive, using finitary adjoints, no longer works when the converse modality is introduced. Instead, our completeness proof directly constructs a model for a consistent formula, using the induction rule in a way that is similar to the standard completeness proof for propositional dynamic logic. This approach is combined with the concept of a focus, which has previously been used in tableau based reasoning for modal fixpoint logics.  
\end{abstract}
\textbf{Keywords:} Modal fixpoint logic, mu-calculus, converse modality, completeness, network

%

\newcommand{\fdia}{\Diamond^{\scriptscriptstyle\mathsf{F}}}
\newcommand{\bdia}{\Diamond^{\scriptscriptstyle\mathsf{B}}}
\renewcommand{\fbox}{\Box^{\scriptscriptstyle\mathsf{F}}}
\newcommand{\bbox}{\Box^{\scriptscriptstyle\mathsf{B}}}
\newcommand{\fnabla}{\nabla^{\scriptscriptstyle\mathsf{F}}}
\newcommand{\bnabla}{\nabla^{\scriptscriptstyle\mathsf{B}}}
\newcommand{\edge}[1]{\stackrel{#1}{\longrightarrow}}
\newcommand{\gpath}[1]{\;{{\stackrel{{#1}*}{\longrightarrow}}}\;}
\newcommand{\reach}[1]{\;{{\stackrel{{#1}*}{\longrightarrow}}}\;}
\newcommand{\equp}[1]{\equiv^\uparrow_{#1}}
\newcommand{\simup}[3]{#1 \setminus (\upgen{#1}{#3}) = #2 \setminus (\upgen{#2}{#3})}
\newcommand{\simdown}[3]{#1 \setminus (\downgen{#1}{#3}) = #2 \setminus (\downgen{#2}{#3})}

\newcommand{\eqdown}[1]{\equiv^\downarrow_{#1}}
\newcommand{\upgen}[2]{{#2}^\uparrow_{#1}}
\newcommand{\downgen}[2]{{#2}^\downarrow_{#1}}
\newcommand{\dfrl}{\mathsf{d}}
\newcommand{\tset}[1]{[\![ #1 ]\!] }

\newcommand{\languagesym}{\mathtt{MLC}}
\newcommand{\logicsym}{\mathbf{MLC}}
\newcommand{\alg}{\mathfrak{A}}
\newcommand{\algb}{\mathfrak{B}}
\newcommand{\mapof}[2]{[\![ #1 (-,\vec{#2})]\!]}
\newcommand{\mapn}[2]{[\![ #1 (-,{#2}_1,...,{#2}_n]\!]}
\newcommand{\mapp}[1]{(\! (\chi) \!))}
\newcommand{\mapoff}{[\![ \fdia (-)]\!]}
\newcommand{\cO}{\mathcal{O}}

\def\labelitemi{--}
\section{Introduction}

Many modal logics, particularly those appearing in computer science and formal verification, are obtained by extending a basic modal language with operators given by least or greatest fixpoint definitions, relying on the Knaster-Tarski theorem \cite{knaster1928theoreme,tarski1955lattice}. A classic example is the propositional dynamic logic of regular programs \cite{fisc:prop79}, along with its extensions, like  concurrent $\mathtt{PDL}$ \cite{hare:dyna00}, or dynamic game logic \cite{pari:logi85}. In these logics, the \emph{Kleene star} is introduced as an operation on programs/games, where the program $\pi^*$ is understood as ``finite iteration'' of the program $\pi$. This is a least fixpoint construction.
Other examples of such \emph{modal fixpoint logics} include computation tree logic $\mathtt{CTL}$ \cite{clar:desi81} and its extension $\mathtt{CTL}^*$  \cite{emer:some86},  the logic of common knowledge \cite{halp:know90}, and of course the full \emph{modal $\mu$-calculus} \cite{koze:resu83} where a variable binding operation $\mu x.(-)$ is introduced to allow explicit least fixpoint definitions in the object language. 

A key theme in modal logic research is to develop tools for proving completeness for modal logics with respect to some state based or ``possible worlds'' semantics, usually Kripke semantics  but also topological semantics \cite{vanb:moda07}, neighborhood semantics \cite{chel:moda80} or coalgebraic semantics \cite{moss:coal99}. The standard modal $\mu$-calculus is interpreted on Kripke models and has an elegant complete axiomatization, introduced by Kozen in \cite{koze:resu83} and proved complete by Walukiewicz in \cite{walu:comp93}. It extends the standard axiom system for the minimal normal modal logic $\mathbf{K}$ with a single axiom and a single rule, jointly expressing the definition of $\mu x. \varphi(x)$ as the smallest pre-fixpoint for $\varphi(x)$. These are the \emph{pre-fixpoint axiom}:
$$ \varphi(\mu x. \varphi(x)) \rightarrow \mu x. \varphi(x) $$
and the \emph{induction rule}:
$$\frac{\varphi(\gamma) \rightarrow \gamma}{\mu x. \varphi(x) \rightarrow \gamma}$$
As opposed to the case of ``basic'' modal logics, there does not exist an extensive arsenal of general tools for proving completeness of modal fixpoint logics, of generality comparable to that of for example the Sahlqvist completeness theorem \cite{sahl:comp75}. Moreover, completeness proofs for concrete modal fixpoint logics can often be highly involved, witnessed especially by Reynold's completeness proof for $\mathtt{CTL}^*$ \cite{reyn:axio01} or Walukiewicz's completeness proof for the modal $\mu$-calculus. Some work in a more general direction has started to emerge, however: for example, a general completeness result for Venema's coalgebraic fixpoint logic introduced in \cite{vene:auto06} was recently proved in \cite{enqv:comp16}. In another direction,  Sahlqvist completeness with respect to a form of \emph{descriptive Kripke frames}\footnote{These are Kripke frames in which valuations are restricted to certain ``admissible subsets'', which are clopen sets in a Stone topology on the frame. Descriptive frames play an important role in completeness theory for modal logic \cite{blac:moda01}.} for modal $\mu$-calculi was established in \cite{bezh:sahl12}. But most relevant for this paper is the work of Santocanale and Venema, who prove a general completeness result, with respect to the standard Kripke semantics, for a class of modal fixpoint logics called \emph{flat modal fixpoint logics} \cite{sant:comp10}. These are extensions of basic modal logics with ``fixpoint connectives'' defined using only a single fixpoint variable, i.e. they are all fragments of the single-variable and hence of the \emph{alternation-free} modal $\mu$-calculus \cite{brad:moda98}. Schr\"{o}der and Venema later generalized the result to a coalgebraic setting in \cite{schr:flat10}.

The model construction being used in \cite{sant:comp10} can essentially be seen as a sort of argument by ``selection of a submodel of the canonical model''. This approach has a certain familiarity to it, as transforming the canonical model is a common strategy for completeness proofs in modal logic \cite{blac:moda01}. Furthermore, the kind of combinatorial problems that are involved in the completeness proof for the full $\mu$-calculus do not arise, since they are mostly due to the subtleties of alternating least- and greatest fixpoints.

In the same spirit,  we shall here investigate flat fragments of the \emph{two-way modal $\mu$-calculus}, in which the modal ``base logic'' is extended to include a converse modality. The difference with the standard $\mu$-calculus is that we allow modal operators to quantify both over the successors ($\fdia$) and the predecessors ($\bdia$) of the point of evaluation with respect to the accessibility relation. We can also see the logic as an \emph{axiomatic} extension of the minimal normal modal logic in a signature with two modalities $\fdia,\bdia$, obtained by adding the extra axioms:
$$p \rightarrow \fbox \bdia p \quad \quad \quad p \rightarrow \bbox \fdia p$$
The frame class determined by these axioms is the class of all frames with two accessibility relations, in which each accessibility relation is the converse of the other.

The two-way $\mu$-calculus is a well established logic. An automata-theoretic decision procedure for the two-way $\mu$-calculus was presented by Vardi in  \cite{vard:reas98}, and since the logic lacks the finite model property the satisfiability problem over finite models has been studied separately by Boja\'{n}czyk in \cite{boja:twow02}. The two-way $\mu$-calculus along with other \emph{enriched} modal mu-calculi \cite{satt:hybr01, bona:enri08}  has later gained attention in connection with description logics.  The two-way $\mu$-calculus is also closely related to \emph{guarded fixpoint logic}, see \cite{grad:guar02}. As far as we know, no completeness theorem for the two-way $\mu$-calculus is known so far.

The converse modality has also been studied in connection with other modal fixpoint logics and fragments of the two-way $\mu$-calculus: for example, the converse program constructor is  well known  in $\mathtt{PDL}$ \cite{stre:loop81}, and temporal logics  like $\mathtt{LTL}$ and $\mathtt{CTL}$ with converse have been studied for example in \cite{gore:ando14}. A version of the full computation tree logic $\mathtt{CTL}^*$ with a past modality has also been considered, and Reynolds extended his completeness theorem for $\mathtt{CTL}^*$ to include past operators in \cite{reyn:axio05}. For a decision procedure for the alternation-free fragment of the two-way $\mu$-calculus, see \cite{tana:deci05}.

The main result of this paper is a completeness theorem for Kozen's axioms, for a class of flat fixpoint logics with the converse modality.  As in \cite{sant:comp10}, we will need some constraints on the fixpoint connectives we consider\footnote{Santocanale and Venema also give a more general completeness result (using a more involved axiom system than the ones we consider here), by ``simulating'' arbitrary flat fixpoint logics using untied ``systems of fixpoint equations'', much like alternating tree automata can be simulated by non-deterministic ones. We will not follow this route here, but will briefly consider the issue in the concluding section.}, restricting attention to what we call \emph{disjunctive} fixpoint connectives (called ``untied formulas'' in \cite{sant:comp10}). We prove a generic completeness theorem for such logics, covering $\mathtt{CTL}$ with both a future and a past modality, and non-deterministic ``until'' and  ``since'' operators, as an immediate application. The proof is  necessarily different from that of Santocanale and Venema,  due to the fact that one of their main algebraic results for flat modal fixpoint logics - that disjunctive formulas correspond to \emph{finitary $\cO$-adjoints} \cite{sant:comp08} in the Lindenbaum-Tarski algebra - becomes simply false in the presence of the converse modality. This is explained in more detail in Section \ref{algebras}. 

The paper assumes some basic familarity with modal logic, algebras and fixpoint theory.

\section{Flat modal fixpoint logics with converse}

\subsection{Syntax and semantics}

We begin by introducing the syntax of basic modal logic extended with the backwards modality. Given a set of variables $\Prop$, we denote by $\languagesym(\Prop)$ the set of formulas defined by the following grammar:
$$\varphi := \bot  \mid p \mid \neg \varphi \mid \varphi \vee \varphi \mid \fdia \varphi \mid \bdia \varphi$$
where $p \in \Prop$. We define $\top = \neg \bot$, $\fbox \varphi = \neg \fdia \neg \varphi$, $\bbox \varphi = \neg \bdia \neg\varphi$, $\varphi \wedge \psi = \neg( \neg \varphi \vee \neg \psi )$, $\varphi \rightarrow \psi = \neg \varphi \vee \psi$ and $\varphi \leftrightarrow \psi = (\varphi \rightarrow \psi) \wedge (\psi \rightarrow \varphi)$. A formula $\varphi$ is said to be \emph{positive} in the variable $p$ if every occurrence of $p$ in $\varphi$ is in the scope of an even number of negations.

Throughout the paper we fix a variable $x$, which we call the \emph{recursion variable}, and a countable supply $\mathtt{Q}$ of variables $q_0,q_1,q_2,...$ called \emph{parameter variables}. 
\begin{defi}
An \emph{$n$-place fixpoint connective} is a formula $\chi(x,q_1,...,q_n) \in \languagesym(\{x\} \cup \mathtt{Q})$ that is positive in the variable $x$. A fixpoint connective $\chi(x,q_1,...,q_n)$ is said to be \emph{guarded} if every occurrence of the variable $x$ in $\chi(x,q_1,...,q_n)$ is in the scope of some modal operator. 
\end{defi}
Note that an $n$-place fixpoint connective is actually a formula in $n + 1$ free variables, but the  recursion variable $x$ should be viewed as an auxiliary device used to introduce an $n$-place connective by a least fixpoint definition. 

We will often write tuples of parameter variables as $\vec{q} = q_1,...,q_n$. Given a set $\Gamma$ of fixpoint connectives, we define the set of formulas of the extended language $\languagesym_\Gamma(\Prop)$ by the following grammar:
$$\varphi := \bot \mid p \mid \neg \varphi \mid \varphi \vee \varphi \mid \fdia \varphi \mid \bdia \varphi  \mid \sharp_\chi(\theta_1,...,\theta_n)$$
where $\chi(x,q_1,...,q_n) \in \Gamma$, and $\theta_1,...,\theta_n \in \languagesym_\Gamma(\Prop)$. The \emph{fixpoint nesting depth} of a formula $\varphi$ is the maximal number of nested occurrences of operators $\sharp_{\chi}$ in $\varphi$.

A central role will be played by the so called \emph{cover modality}, or nabla modality, which produces a modal formula from a \emph{set} of formulas  \cite{moss:coal99,walu:comp01,kupk:comp08}. We shall not formally introduce the nabla modality into the syntax as a primitive operator here, but merely use the symbol as an abbreviation:
$$\fnabla \{\varphi_1,...,\varphi_n\} := \fdia \varphi_1 \wedge ... \wedge \fdia \varphi_n \wedge \fbox (\varphi_1 \vee ... \vee \varphi_n)$$
$$\bnabla \{\varphi_1,...,\varphi_n\} := \bdia \varphi_1 \wedge ... \wedge \bdia \varphi_n \wedge \bbox (\varphi_1 \vee ... \vee \varphi_n)$$
Note that if we take the empty disjunction to be $\bot$ and the empty conjunction to be $\top$, we get $\fnabla \emptyset \Leftrightarrow \fbox \bot$ and $\bnabla \emptyset \Leftrightarrow \bbox \bot$.
It is well known that the box and diamond modalities can also be defined in terms of nablas, using the following standard equivalences:
$$\fdia \varphi \Leftrightarrow \fnabla \{\varphi, \top\} \quad \quad \fbox \varphi \Leftrightarrow \fnabla \emptyset \vee \fnabla \{\varphi\}$$
and similarly for $\bdia,\bbox$. 

The nabla modality already has an established place in modal logic: it paved the way to the \emph{coalgebraic} approach to modal logic, and was also independently used by Walukiewicz in his famous completeness proof for the modal $\mu$-calculus. In the context of flat modal fixpoint logics, it played the role of defining the \emph{untied formulas} for which Santocanale and Venema's completeness theorem is stated.

\begin{defi}
The set of \emph{forward-looking disjunctive formulas} in  $\languagesym(\{x\} \cup \mathtt{Q}) $ is defined by the following grammar:

$$\varphi := \bot \mid \top \mid x \mid \theta \wedge \varphi \mid \varphi \vee \varphi \mid \fnabla \Gamma $$
where $x$ does not occur in $\theta$ and $\Gamma$ is  a finite set of forward-looking disjunctive formulas. 

The set of \emph{backward-looking disjunctive formulas}  in  $\languagesym(\{x\} \cup \mathtt{Q}) $ is defined by the following grammar:

$$\varphi := \bot \mid \top \mid x \mid \theta \wedge \varphi \mid \varphi \vee \varphi \mid \bnabla \Gamma $$
where $x$ does not occur in $\theta$ and $\Gamma$ is  a finite set of backward-looking disjunctive formulas. 

A fixpoint connective  is said to be \emph{disjunctive} if it is either forward-looking disjunctive or backward-looking disjunctive. 
\end{defi}

What we call ``disjunctive'' here is called ``untied in $x$'' in \cite{sant:comp10}. Intuitively, disjunctive formulas avoid conjunctions between distinct subformulas containing the recursion variable $x$, with the exception of ``harmless'' conjunctions that are implicit in the cover modalities.

Kripke semantics for logics $\languagesym_\Gamma$ are defined as usual:
\begin{defi}
A Kripke frame is a pair $\mathfrak{F} = (W,R)$ where $W$ is a non-empty set and $R \subseteq W \times W$. Elements of $W$ will be referred to as states. Given a set of  propositional variables $\Prop$, a Kripke model $\bbS$ is defined to be a triple $(W,R,V)$ where $(W,R)$ is a Kripke frame and $V: \Prop \to \funP W$ is a valuation  of the variables. The \emph{truth set} of a formula in $\bbS = (W,R,V)$ is defined by the following recursion:
\begin{itemize}
\item $\tset{\bot}_\bbS = \emptyset$, $\tset{p}_\bbS = V(p)$,
\item $\tset{\neg \varphi}_\bbS = W \setminus \tset{\varphi}_\bbS$, $\tset{\varphi \vee \psi}_\bbS = \tset{\varphi}_\bbS \cup \tset{\psi}_\bbS$,
\item $\tset{\fdia \varphi}_\bbS = R^{-1}(\tset{\varphi}_\bbS)$ and $\tset{\bdia \varphi}_\bbS = R(\tset{\varphi}_\bbS)$,
\item $\tset{\sharp_{\chi}(\theta_1,...,\theta_n)}_\bbS = \bigcap \{Z \subseteq W \mid \tset{\chi(x,\theta_1,...,\theta_n)}_{\bbS[x \mapsto Z]} \subseteq Z\}$ for $\chi(x,q_1,...,q_n) \in \Gamma$.
\end{itemize}
Here, $R(Z) = \{w \in W \mid \exists v \in Z: \; v R w\}$, $R^{-1}(Z) = \{w \in W \mid \exists v \in Z: \; w R v\}$ and  $\bbS[x \mapsto Z]$ is the model $(W,R,V[x \mapsto Z])$ where the valuation $V[x \mapsto Z]$ is like $V$ except for mapping $x$ to $Z$. 

For $w \in W$ we write $\bbS,w \Vdash \varphi$ for $w \in \tset{\varphi}_\bbS$ and refer to $(\bbS,w)$ as a pointed Kripke model. We say that $\varphi$ is \emph{valid} in the Kripke semantics if $\bbS,w \Vdash \varphi$ for every pointed Kripke model $(\bbS,w)$, and we write $\Vdash \varphi$ to express this.
 \end{defi}

\begin{exam}
Consider a one-place fixpoint connective $\chi_1(x,q) = \fbox x \vee q$ and a one-place connective $\chi_2(x) = \bbox x \vee q$. The formula $\sharp_{\chi_1} \varphi$ then says that one cannot move forward along the accessibility relation forever, without eventually visiting a point where $\varphi$ is true. Similarly, the formula $\sharp_{\chi_2} \varphi$ says that that one cannot move backward along the accessibility relation forever, without eventually visiting a point where $\varphi$ is true. 

In particular, consider the formula:
$$\neg \sharp_{\chi_1}(\neg\sharp_{\chi_2}\bot)$$
This formula says that there is some infinite ``forward'' path, such that from each point on this path there are no infinite ``backward'' paths. This formula is satisfiable, but  has no finite models, so just like the full two-way $\mu$-calculus the flat fixpoint logics we consider here do not generally have the finite model property. 
\end{exam}

\begin{defi}
Let $X,Y$ be any sets and $R \subseteq X \times Y$ a binary relation. Then we say that $R$ is \emph{full} (with respect to $X,Y$) if $Y \subseteq R(X)$ and $X \subseteq R^{-1}(Y)$.
\end{defi}
The following is a standard fact:

\begin{prop}
Let $(\bbS,w)$ be a pointed Kripke model. Then $\bbS,w\Vdash \fnabla \Psi$ iff there is a full relation $Z \subseteq R(w) \times \Psi$ such that $\bbS,v \Vdash \psi$ whenever $v Z \psi$. Similarly, $\bbS,w\Vdash \bnabla \Psi$ iff there is a full relation $Z \subseteq R^{-1}(w) \times \Psi$ such that $\bbS,v \Vdash \psi$ whenever $v Z \psi$
\end{prop}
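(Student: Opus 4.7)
The plan is to unfold the abbreviation $\fnabla \Psi = \bigwedge_{\psi \in \Psi} \fdia \psi \wedge \fbox \bigvee \Psi$ and match each conjunct against one of the two fullness conditions. I will write out the argument only for the forward-looking case; the backward-looking statement follows by the symmetric proof, with $R(w)$ replaced by $R^{-1}(w)$ and $\fdia, \fbox$ replaced by $\bdia, \bbox$.

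For the ``only if'' direction, I would assume $\bbS, w \Vdash \fnabla \Psi$ and take as the witnessing relation the natural ``truth-of-$\psi$-at-$v$'' relation
\[
Z := \{ (v, \psi) \in R(w) \times \Psi \mid \bbS, v \Vdash \psi \},
\]
which by construction satisfies $v Z \psi \Rightarrow \bbS, v \Vdash \psi$. Fullness then reduces directly to the two conjunct types in $\fnabla \Psi$: for each fixed $\psi \in \Psi$ the conjunct $\fdia \psi$ supplies some $v \in R(w)$ with $\bbS, v \Vdash \psi$, hence $(v,\psi) \in Z$; for each fixed $v \in R(w)$ the conjunct $\fbox \bigvee \Psi$ forces $\bbS, v \Vdash \psi$ for some $\psi \in \Psi$, hence $(v, \psi) \in Z$ as well.

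For the converse, given a full $Z \subseteq R(w) \times \Psi$ with $v Z \psi \Rightarrow \bbS, v \Vdash \psi$, I verify each conjunct of $\fnabla \Psi$ separately. For each $\psi \in \Psi$, fullness on the $\Psi$-side supplies $v \in R(w)$ with $v Z \psi$, so $\bbS, v \Vdash \psi$, witnessing $\bbS, w \Vdash \fdia \psi$. For the box conjunct, pick any $v \in R(w)$; fullness on the $R(w)$-side produces $\psi \in \Psi$ with $v Z \psi$, so $\bbS, v \Vdash \psi$, and therefore $\bbS, v \Vdash \bigvee \Psi$. Conjoining these facts yields $\bbS, w \Vdash \fnabla \Psi$. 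The whole argument is purely mechanical; there is no substantive obstacle, and stating the result here just fixes the set-theoretic reformulation of the cover modality that will be convenient for the later constructions.
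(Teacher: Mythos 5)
Your proof is correct and is exactly the standard argument; the paper itself omits the proof entirely, labelling the proposition ``a standard fact''. Both directions are handled properly (the canonical truth relation for ``only if'', conjunct-by-conjunct verification for ``if''), and the symmetric treatment of the backward case is fine.
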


Given a fixpoint connective $\chi(x,q_1,...,q_n)$ and a model $\bbS$ we shall denote the monotone map sending a subset $Z$ of $W$ to  $\tset{\chi(x,\theta_1,...,\theta_n)}_{\bbS[x \mapsto Z]}$ by $\mapn{\chi}{\theta}_\bbS$. The truth condition for $\sharp_\chi$ then expresses that $\tset{\sharp_{\chi}(\theta_1,...,\theta_n)}_\bbS$ is the least pre-fixpoint for the map $\mapn{\chi}{\theta}_\bbS$, which by the Knaster-Tarski theorem is the least fixpoint of the same map. Another direct consequence of the Knaster-Tarski theorem is the following:
\begin{prop}
\label{p:approximants}
Let $\chi(x,q_1,...,q_n)$ be any fixpoint connective in $\Gamma$ and let $\theta_1,...,\theta_n$ be a tuple of formulas in $\languagesym_\Gamma(\Prop)$.  Then we have: $$\Vdash \chi^k(\bot,\theta_1,...,\theta_n) \rightarrow \sharp_\chi (\theta_1,...,\theta_n)$$ for all $k \in \omega$, where  $\chi^k(\bot,\theta_1,...,\theta_n)$ is defined by the recursion:
\begin{itemize}
\item $\chi^0(\bot,\theta_1,...,\theta_n) = \chi(\bot,\theta_1,...,\theta_n)$,
\item $\chi^{m + 1}(\bot,\theta_1,...,\theta_n) = \chi(\chi^m(\bot,\theta_1,...,\theta_n),\theta_1,...,\theta_n)$.
\end{itemize}
\end{prop}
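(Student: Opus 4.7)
The plan is to prove the statement by induction on $k$, using the Knaster-Tarski characterization of $\tset{\sharp_\chi(\theta_1,\ldots,\theta_n)}_\bbS$ as the least fixpoint of the monotone map $\mapn{\chi}{\theta}_\bbS$. Throughout, I shall write $\vec{\theta}$ for $\theta_1,\ldots,\theta_n$ and $F = \mapn{\chi}{\theta}_\bbS$; this map is monotone because $\chi$ is positive in $x$. I shall also rely on the standard substitution lemma $\tset{\chi(\psi,\vec{\theta})}_\bbS = F(\tset{\psi}_\bbS)$, which follows by a routine induction on the complexity of $\chi$; in particular $\tset{\chi^0(\bot,\vec{\theta})}_\bbS = F(\emptyset)$ and more generally $\tset{\chi^{m+1}(\bot,\vec{\theta})}_\bbS = F(\tset{\chi^m(\bot,\vec{\theta})}_\bbS)$.

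For the base case $k=0$, fix an arbitrary pointed model $(\bbS,w)$ and let $Z = \tset{\sharp_\chi(\vec{\theta})}_\bbS$. Since $\emptyset \subseteq Z$, monotonicity of $F$ gives $F(\emptyset) \subseteq F(Z)$. By the semantic clause for $\sharp_\chi$ the set $Z$ is a pre-fixpoint of $F$, so $F(Z) \subseteq Z$; combining, $\tset{\chi^0(\bot,\vec{\theta})}_\bbS = F(\emptyset) \subseteq Z$, which is exactly what is needed.

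For the inductive step, assume $\Vdash \chi^m(\bot,\vec{\theta}) \to \sharp_\chi(\vec{\theta})$; that is, $\tset{\chi^m(\bot,\vec{\theta})}_\bbS \subseteq Z$ in every model $\bbS$. Applying $F$ and using monotonicity,
\[
\tset{\chi^{m+1}(\bot,\vec{\theta})}_\bbS = F\bigl(\tset{\chi^m(\bot,\vec{\theta})}_\bbS\bigr) \subseteq F(Z) \subseteq Z,
\]
where the last inclusion is again the pre-fixpoint property of $Z$. This completes the induction.

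There is really no substantive obstacle: the argument is a textbook application of Knaster-Tarski combined with positivity of $\chi$ in $x$. The one point where a little care is required is the substitution lemma tying syntactic substitution of $\bot$ (respectively of an iterate $\chi^m(\bot,\vec{\theta})$) for $x$ to the semantic operation of evaluating $\chi(x,\vec{\theta})$ under the valuation sending $x$ to the corresponding truth set; once this is in hand, the induction is immediate.
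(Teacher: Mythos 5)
Your proof is correct and is exactly the argument the paper has in mind: the paper states this proposition without proof as a ``direct consequence of the Knaster--Tarski theorem,'' and your induction on $k$ via the substitution lemma, monotonicity of $\mapn{\chi}{\theta}_\bbS$, and the pre-fixpoint property of $\tset{\sharp_\chi(\theta_1,\ldots,\theta_n)}_\bbS$ is the standard way of spelling that out. No issues.
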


\subsection{Axiom systems}

In this section we introduce Hilbert-style proof systems for flat fixpoint logics. Let $\Gamma$ be any set of fixpoint connectives. We present the proof system $\logicsym_\Gamma$ by the following axioms and rules. As axioms, we take all propositional tautologies and the following axiom schemata:
\begin{itemize}
\item $\neg \fdia \bot $ and $\neg \bdia \bot$, 
\item $\fdia(\varphi \vee \psi) \leftrightarrow (\fdia \varphi \vee \fdia \psi)$ and $\bdia(\varphi \vee \psi) \leftrightarrow (\bdia \varphi \vee \bdia \psi)$,
\item $\varphi \rightarrow \fbox \bdia \varphi$ and $\varphi \rightarrow \bbox \fdia \varphi$,
\item $\chi(\sharp_\chi (\theta_1,...,\theta_n),\theta_1,...,\theta_n) \rightarrow \sharp_\chi (\theta_1,...,\theta_n)$, for $\chi(x,q_1,...,q_n) \in \Gamma$.
\end{itemize}

As a rule schema we take Replacement of Equivalents:
$$\frac{\varphi \leftrightarrow \psi}{\theta[\varphi/p] \leftrightarrow \theta[\psi/p]}$$
where $\theta[\varphi/p]$  and $\theta[\psi/p]$ denote the result of uniformly replacing $\varphi$ and $\psi$ for $p$ in $\theta$, respectively.  Finally, we take the Kozen-Park Induction Rule:
$$\frac{\chi(\gamma,\theta_1,...,\theta_n) \rightarrow \gamma}{\sharp_\chi (\theta_1,...,\theta_n) \rightarrow \gamma}$$
We write $\logicsym_\Gamma \vdash \varphi$ to say that $\varphi$ is derivable in the system $\logicsym_\Gamma$, or just $\vdash \varphi$ if the system $\logicsym_\Gamma$ is clear from context. We may also write $\varphi \vdash \psi$ instead of $\vdash \varphi \rightarrow \psi$. The proof of the following proposition is standard:
\begin{prop}[Soundness]
For every formula $\varphi \in \languagesym_\Gamma(\Prop)$, if $\logicsym_\Gamma \vdash \varphi$ then $\Vdash \varphi$.
\end{prop}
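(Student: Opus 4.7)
The plan is to prove soundness by induction on the length of a derivation of $\varphi$ in $\logicsym_\Gamma$. The base case requires checking that each axiom schema is valid on every pointed Kripke model, and the inductive step requires showing that the rules (modus ponens, implicit in the Hilbert-style presentation, together with Replacement of Equivalents and the Kozen--Park Induction Rule) preserve validity.

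For the axioms, the propositional tautologies are immediate and $\neg\fdia\bot,\neg\bdia\bot$ reduce to $R(\emptyset)=R^{-1}(\emptyset)=\emptyset$. The distributivity axioms for $\fdia,\bdia$ over $\vee$ reduce to the identities $R(A\cup B)=R(A)\cup R(B)$ and $R^{-1}(A\cup B)=R^{-1}(A)\cup R^{-1}(B)$. The converse axioms $\varphi\to\fbox\bdia\varphi$ and $\varphi\to\bbox\fdia\varphi$ hold because $R$ and $R^{-1}$ are converse relations: if $w\in\tset{\varphi}_\bbS$ and $wRv$, then $w\in R^{-1}(\{v\})\subseteq R^{-1}(\tset{\bdia\varphi}_\bbS)$, whence $v\in\tset{\bdia\varphi}_\bbS$, and symmetrically for the other axiom. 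The pre-fixpoint axiom $\chi(\sharp_\chi(\vec\theta),\vec\theta)\to\sharp_\chi(\vec\theta)$ is immediate from the Knaster--Tarski theorem: by definition $\tset{\sharp_\chi(\vec\theta)}_\bbS$ is the least pre-fixpoint, hence a fixpoint, of the monotone map $\mapn{\chi}{\theta}_\bbS$.

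For the rules, modus ponens trivially preserves validity. Replacement of Equivalents is handled by a routine induction on the structure of $\theta$: assuming $\tset{\varphi}_\bbS=\tset{\psi}_\bbS$ in every Kripke model $\bbS$, one shows $\tset{\theta[\varphi/p]}_\bbS=\tset{\theta[\psi/p]}_\bbS$, the only case requiring thought being $\theta=\sharp_\chi(\theta_1,\ldots,\theta_n)$, where the inductive hypothesis gives $\tset{\theta_i[\varphi/p]}_\bbS=\tset{\theta_i[\psi/p]}_\bbS$ and hence the two monotone maps $\mapn{\chi}{\theta[\varphi/p]}_\bbS$ and $\mapn{\chi}{\theta[\psi/p]}_\bbS$ coincide, so their least fixpoints coincide. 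The Kozen--Park rule is the essential semantic step: if $\chi(\gamma,\vec\theta)\to\gamma$ is valid, then in every model $\bbS$ the set $\tset{\gamma}_\bbS$ satisfies $\mapn{\chi}{\theta}_\bbS(\tset{\gamma}_\bbS)\subseteq\tset{\gamma}_\bbS$, i.e.\ it is a pre-fixpoint of $\mapn{\chi}{\theta}_\bbS$, and therefore contains the least such pre-fixpoint $\tset{\sharp_\chi(\vec\theta)}_\bbS$.

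I do not expect any genuine obstacle in this proof; it is labelled ``standard'' in the excerpt precisely because each item reduces to a textbook verification. The only small subtlety is the interaction of substitution with the binder $\sharp_\chi$ in the Replacement step, where one must keep in mind that $p$ is a parameter variable distinct from the recursion variable $x$ bound by $\sharp_\chi$, so substitution is well-behaved and commutes with the truth-set computation used in the inductive hypothesis.
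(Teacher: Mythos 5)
Your proof is correct and is precisely the standard soundness argument the paper invokes without writing out: induction on derivations, validity of each axiom schema (with Knaster--Tarski yielding the pre-fixpoint axiom and the soundness of the Kozen--Park rule), and preservation of validity by modus ponens and Replacement of Equivalents. One cosmetic slip worth fixing: in verifying $\varphi\to\fbox\bdia\varphi$, the chain $w\in R^{-1}(\{v\})\subseteq R^{-1}(\tset{\bdia\varphi}_\bbS)$ already presupposes $v\in\tset{\bdia\varphi}_\bbS$; the intended (and immediate) computation is $v\in R(\{w\})\subseteq R(\tset{\varphi}_\bbS)=\tset{\bdia\varphi}_\bbS$, so every $R$-successor $v$ of $w$ satisfies $\bdia\varphi$.
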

We write $\logicsym$ for $\logicsym_\emptyset$.

\begin{prop}
We have: $$\logicsym \vdash \fbox(\varphi \rightarrow \psi) \rightarrow (\fbox \varphi \rightarrow \fbox \psi)$$ and 
$$\logicsym \vdash \bbox(\varphi \rightarrow \psi) \rightarrow (\bbox \varphi \rightarrow \bbox \psi).$$ Also, the standard Necessitation Rule is derivable in $\logicsym$ for both box modalities $\fbox$ and $\bbox$.
\end{prop}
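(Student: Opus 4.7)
The plan is to derive both the K-axiom and Necessitation as standard consequences of the distribution axiom $\fdia(\varphi \vee \psi) \leftrightarrow (\fdia \varphi \vee \fdia \psi)$, the ``nothing comes of falsum'' axiom $\neg \fdia \bot$, and the rule of Replacement of Equivalents; the symmetric results for $\bbox$ are derived identically from the corresponding axioms for $\bdia$, so I would state the $\fbox$ case and observe that the derivation transfers verbatim. Note that the converse axioms $\varphi \rightarrow \fbox \bdia \varphi$ and $\varphi \rightarrow \bbox \fdia \varphi$ play no role here: normality of each box only requires the behaviour of its own dual diamond.

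First, I would record \emph{monotonicity} of $\fdia$ as a derived rule: given $\vdash \varphi \rightarrow \psi$, propositional logic yields $\vdash (\varphi \vee \psi) \leftrightarrow \psi$, so Replacement of Equivalents applied to the context $\fdia p$ gives $\vdash \fdia(\varphi \vee \psi) \leftrightarrow \fdia \psi$; combined with the distribution axiom this yields $\vdash (\fdia \varphi \vee \fdia \psi) \leftrightarrow \fdia \psi$, hence $\vdash \fdia \varphi \rightarrow \fdia \psi$.

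For the K-axiom, I would work with the dual reformulation. Propositionally one has $\vdash \neg \psi \rightarrow (\neg(\varphi \rightarrow \psi) \vee \neg \varphi)$. Applying the monotonicity of $\fdia$ just derived, then the distribution axiom, gives
\[
\vdash \fdia \neg \psi \;\rightarrow\; \fdia \neg(\varphi \rightarrow \psi) \vee \fdia \neg \varphi.
\]
Contraposing and rewriting boxes as $\fbox\theta = \neg \fdia \neg \theta$ via propositional manipulation yields $\vdash \fbox(\varphi \rightarrow \psi) \rightarrow (\fbox \varphi \rightarrow \fbox \psi)$, as required.

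For Necessitation, suppose $\vdash \varphi$. Then propositionally $\vdash \neg \varphi \leftrightarrow \bot$, so Replacement of Equivalents (with the context $\fdia p$) gives $\vdash \fdia \neg \varphi \leftrightarrow \fdia \bot$; combined with the axiom $\neg \fdia \bot$ this yields $\vdash \neg \fdia \neg \varphi$, i.e.\ $\vdash \fbox \varphi$. The $\bdia, \bbox$ case is word-for-word identical. There is no real obstacle: every step is either a propositional tautology, an instance of the listed axiom schemata, or an application of Replacement of Equivalents. The only point requiring a moment of care is being explicit about which substitution context one uses in Replacement, since the rule is stated in terms of a single propositional variable $p$.
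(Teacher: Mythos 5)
Your proof is correct and follows essentially the same route as the paper: both arguments reduce the $\mathbf{K}$-schema to the additivity axiom for $\fdia$ (resp.\ $\bdia$) via the dual diamond formulation plus Replacement of Equivalents, and both derive Necessitation by applying Replacement to a provable equivalence with a propositional constant ($\neg\varphi \leftrightarrow \bot$ in your version, $\varphi \leftrightarrow \top$ in the paper's). The only differences are organizational (your explicit monotonicity lemma and contraposition versus the paper's direct disjunctive rewriting), and your observation that the interaction axioms are not needed matches the paper's proof.
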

\begin{proof}
For $\logicsym \vdash \fbox(\varphi \rightarrow \psi) \rightarrow (\fbox \varphi \rightarrow \fbox \psi)$, just rewrite this formula equivalently as:
$$\fdia(\varphi \wedge \neg \psi) \vee \fdia \neg \varphi \vee  \neg \fdia \neg \psi$$
Applying the additivity axiom for $\fdia$ we get the equivalent formula:
$$\fdia((\varphi \wedge \neg \psi) \vee \neg \varphi) \vee  \neg \fdia \neg \psi$$
which reduces to:
$$\fdia(\neg \psi \vee \neg \varphi) \vee  \neg \fdia \neg \psi$$
Applying additivity again  we get the equivalent formula:
$$\fdia \neg \psi \vee \fdia \neg \varphi \vee \neg \fdia \neg \psi$$
which is obviously provable. The theorem $\logicsym \vdash \bbox(\varphi \rightarrow \psi) \rightarrow (\bbox \varphi \rightarrow \bbox \psi)$ is proved in the same manner, and the necessitation rule is derived using replacement of equivalents applied to the provable formulas $\fbox \top$ and $\bbox \top$.
\end{proof}

We say that the system $\logicsym_\Gamma$ is \emph{Kripke complete} if for every formula $\varphi \in \languagesym_\Gamma(\Prop)$, if $\Vdash \varphi$ then $\logicsym_\Gamma \vdash \varphi$. We can now state the main theorem of the paper:

\begin{theo}
\label{t:main}
Let $\Delta$ be a set of disjunctive and guarded fixpoint connectives. Then the system $\logicsym_\Delta$ is Kripke complete.
\end{theo}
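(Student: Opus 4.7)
The plan is to show that every $\logicsym_\Delta$-consistent formula $\varphi_0$ is satisfiable, and to do so by a direct model construction from a tableau-like object (a \emph{network}) whose nodes are labeled by consistent sets of formulas augmented with a \emph{focus}. The algebraic route used by Santocanale and Venema is unavailable, since disjunctive formulas no longer correspond to finitary $\cO$-adjoints in the Lindenbaum--Tarski algebra once $\bdia$ is added; instead, the induction rule will be applied contrapositively, in the style of the completeness proof for propositional dynamic logic.

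First I would set up the bookkeeping. Working in a suitable Fischer--Ladner-style closure of $\varphi_0$, I would put each formula into a normal form in which the fixpoint formulas $\sharp_\chi(\vec\theta)$ appear unfolded as disjunctions of conjunctions of literals, $\sharp_\chi$-formulas, and modal atoms of the form $\fnabla\Gamma$ or $\bnabla\Gamma$. The disjunctive hypothesis guarantees that unfolding a fixpoint formula again has this shape and that no ``harmful'' conjunctions between recursive subformulas ever arise, while guardedness ensures that the recursion variable is always protected by a modality so that unfolding strictly ``makes progress''.

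Second I would attach a \emph{focus} to each node: a distinguished $\sharp$-subformula that we are currently committed to fulfilling. The focus is propagated by disjunctive choices, transferred to witnesses across cover modalities (forward for $\fnabla$-focused formulas, backward for $\bnabla$-focused ones), and discharged as soon as it ceases to be a fixpoint formula. A \emph{bad path} is one along which some fixpoint formula remains in focus forever. Using the Kozen--Park induction rule, I would rule out bad paths: from a hypothetical bad path one extracts a formula $\gamma$ (intuitively the disjunction over all consistent labels appearing cofinally on the path) satisfying $\chi(\gamma,\vec\theta)\vdash\gamma$, whence $\sharp_\chi(\vec\theta)\vdash\gamma$; this contradicts the fact that $\neg\sharp_\chi(\vec\theta)$, which is implicit in focusing on $\sharp_\chi(\vec\theta)$ forever, lives in each label on the path.

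The main obstacle is exactly the presence of $\bdia$. In the purely forward setting a network can be built top-down by repeatedly adding successors to satisfy $\fnabla$-obligations, and focus threads are naturally extended along this construction. With the backward modality, adding a state to satisfy a forward cover can retroactively generate backward cover obligations at the new state, and symmetrically; the construction must therefore alternate forward and backward saturation steps, coherently with the converse axioms $p \rightarrow \fbox\bdia p$ and $p \rightarrow \bbox\fdia p$, while ensuring that the focus of a newly added predecessor or successor is chosen so that the soundness-of-focus argument still delivers a suitable $\gamma$ from any hypothetical bad path. Making this saturation close up to bisimilarity, so that only finitely many labels arise and the bad-path extraction produces an actually derivable implication, is where the technical heart of the argument lies.
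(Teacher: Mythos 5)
Your overall strategy --- Fischer--Ladner closure, a step-by-step network construction with foci, and a PDL-style use of the induction rule in place of the $\cO$-adjoint argument --- is the same as the paper's. But there are two genuine gaps. First, the application of the induction rule is mis-stated. You propose to extract from a hypothetical bad path a formula $\gamma$, \emph{the disjunction of the labels occurring on the path}, with $\chi(\gamma,\vec{\theta})\vdash\gamma$, and to derive a contradiction from $\sharp_\chi\vec{\theta}\vdash\gamma$. This has the wrong polarity: every label on the path trivially entails that disjunction, and the labels contain $\sharp_\chi\vec{\theta}$ (not its negation), so $\sharp_\chi\vec{\theta}\vdash\gamma$ contradicts nothing. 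The paper instead argues positively: it lets $\delta$ be the disjunction of $\bigwedge A$ over all atoms $A$ that admit a finite tree-shaped network in which the focus is eventually finished, proves $\vdash\chi(\delta,\vec{\theta})\to\delta$ by explicitly assembling such a network for any atom consistent with $\chi(\delta,\vec{\theta})$ (this is where disjunctiveness and the cover modalities do their work), and then concludes from the induction rule that \emph{every} atom containing $\sharp_\chi\vec{\theta}$ admits such a network. Relatedly, a ``bad path'' is the wrong combinatorial object: fulfilment of a $\fnabla$- or $\bnabla$-deferral is a well-founded condition on a tree of witnesses given by a full relation to all successors (resp.\ predecessors), not a condition on a single branch, which is why the paper works with the inductively defined notion of a deferral being ``finished in $k$ steps.''

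Second, the point you defer as ``where the technical heart of the argument lies'' --- making forward and backward saturation coexist without retroactively destroying already-fulfilled obligations --- is precisely the part that needs a mechanism, and the one you suggest (closing up to bisimilarity so that only finitely many labels arise) cannot work: these logics lack the finite model property, so no filtration-style finiteness argument is available. The paper's solution is structural: networks are finite \emph{anticonfluent} DAGs carrying two saturation sets $S_F,S_P$; the subnetwork relation $\sqsubseteq$ forbids adding successors to $S_F$-nodes or predecessors to $S_P$-nodes, so finished deferrals persist (Proposition~\ref{p:stayfinished}); an amalgamation lemma (Proposition~\ref{p:amalgamation}) glues the independently constructed repairs; and, crucially, each disjunctive connective is required to be purely forward-looking or purely backward-looking, so a single deferral never generates obligations in both directions. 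Without some substitute for these ingredients the construction as you describe it does not close.
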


This result can be strengthened to drop the guardedness constraint, as we will see in Section \ref{guardedness}.
Before we proceed with the proof of Theorem \ref{t:main}, we briefly consider the canonical model construction for logics $\logicsym_\Gamma$:
\begin{defi}
Let $\Gamma$ be a set of fixpoint connectives. The \emph{canonical model} for the logic $\logicsym_\Gamma$, denoted $\mathfrak{C}_\Gamma$, is the triple $(W^C,R^C,V^C)$ where:
\begin{itemize}
\item $W^C$ is the set of maximal consistent sets of formulas in $\languagesym_\Gamma(\Prop)$,
\item $\Theta_1 R^C \Theta_2$ iff for all formulas $\varphi$ such that $\fbox \varphi \in \Theta_1$, we have $\varphi \in \Theta_2$,
\item $V^C(p) = \{\Theta \in W^C \mid p \in \Theta\}$.
\end{itemize}
\end{defi}
Technically, the canonical model can in fact be described as a descriptive model for the fixpoint logic $\logicsym_\Gamma$ in the sense of \cite{bezh:sahl12}, but this theme will not be further explored here. 
\begin{prop}
\label{p:canon}
Let   $\mathfrak{C}_\Gamma = (W^C,R^C,V^C)$ be the canonical model for the logic $\logicsym_\Gamma$. Then the following are equivalent:
\begin{enumerate}
\item $\Theta_1 R^C \Theta_2$,
\item for all $\varphi \in \languagesym_\Gamma(\Prop)$: $\bbox \varphi \in \Theta_2$ implies $\varphi \in \Theta_1$,
\item for all $\varphi \in \languagesym_\Gamma(\Prop)$: $\varphi \in \Theta_2$ implies $\fdia \varphi \in \Theta_1$,
\item for all $\varphi \in \languagesym_\Gamma(\Prop)$: $\varphi \in \Theta_1$ implies $\bdia \varphi \in \Theta_2$.
\end{enumerate}
\end{prop}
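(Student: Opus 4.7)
The plan is to prove the four conditions equivalent by exploiting two kinds of symmetry: the propositional duality within each pair $\fbox/\fdia$ and $\bbox/\bdia$, and the two converse axioms $\varphi \rightarrow \fbox \bdia \varphi$, $\varphi \rightarrow \bbox \fdia \varphi$ relating the forward and backward modalities. My proposed cycle is $(1) \Leftrightarrow (3)$, $(2) \Leftrightarrow (4)$, $(1) \Rightarrow (4)$, and $(2) \Rightarrow (1)$; together these deliver all remaining implications.

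For $(1) \Leftrightarrow (3)$, the argument is purely propositional once one recalls that each $\Theta_i$ is a maximal consistent set, hence closed under provable consequence, negation, and the theorem $\neg \fbox \alpha \leftrightarrow \fdia \neg \alpha$. Substituting $\neg \psi$ for $\varphi$ in the defining clause $\fbox \varphi \in \Theta_1 \Rightarrow \varphi \in \Theta_2$ and transposing through MCS-closure turns it into $\psi \in \Theta_2 \Rightarrow \fdia \psi \in \Theta_1$, and the reverse direction is identical. The same maneuver with $\bbox, \bdia$ establishes $(2) \Leftrightarrow (4)$.

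The real content lies in the two cross-directions, each of which invokes precisely one converse axiom. For $(1) \Rightarrow (4)$, assume $\Theta_1 R^C \Theta_2$ and $\varphi \in \Theta_1$; by the axiom $\varphi \rightarrow \fbox \bdia \varphi$ together with MCS-closure, $\fbox \bdia \varphi \in \Theta_1$, and the defining clause of $R^C$ immediately yields $\bdia \varphi \in \Theta_2$. For $(2) \Rightarrow (1)$, suppose $(2)$ holds and $\fbox \varphi \in \Theta_1$, and argue by contradiction: if $\neg \varphi \in \Theta_2$, then instantiating the axiom $\psi \rightarrow \bbox \fdia \psi$ at $\psi = \neg \varphi$ places $\bbox \fdia \neg \varphi$ in $\Theta_2$, whence $(2)$ forces $\fdia \neg \varphi \in \Theta_1$, contradicting $\fbox \varphi \in \Theta_1$ via the duality already exploited. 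I expect no genuine obstacle here; the whole proof is standard tense-logic bookkeeping, and the only moment requiring a little care is choosing the correct axiom and negation pattern in the direction $(2) \Rightarrow (1)$.
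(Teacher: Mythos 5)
Your proof is correct and uses essentially the same ingredients as the paper's: the propositional dual pairings $(1)\Leftrightarrow(3)$ and $(2)\Leftrightarrow(4)$ via MCS-closure, plus one application of each converse axiom for the cross-directions. The paper merely arranges the cycle differently (proving $(1)\Leftrightarrow(2)$ directly rather than $(1)\Rightarrow(4)$ and $(2)\Rightarrow(1)$), which is an inessential bookkeeping difference.
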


\begin{proof}
For $(1) \Leftrightarrow (2)$: suppose $\Theta_1 R^C \Theta_2$, and suppose that  $\bbox \varphi \in \Theta_2$. If $\varphi \notin \Theta_1$ then $\neg \varphi \in \Theta_1$, so $\fbox \bdia \neg \varphi \in \Theta_1$, i.e. $\fbox \neg \bbox \varphi \in \Theta_1$. By definition of $R^C$ we get $\neg \bbox \varphi \in \Theta_2$, which means $\Theta_2$ is inconsistent, contradiction.  The converse implication is proved similarly. 

For $(1) \Leftrightarrow (3)$, suppose $\Theta_1 R^C \Theta_2$, and suppose $\varphi \in \Theta_2$. If $\fdia \varphi \notin \Theta_1 $ then $\neg \fdia \varphi \in \Theta_1$, so $\fbox \neg \varphi \in \Theta_1$. By definition of $R^C$ we get $\neg \varphi \in \Theta_2$, so $\Theta_2$ is inconsistent, contradiction. The converse implication is proved by a dual argument. Finally, the equivalence $(2) \Leftrightarrow (4)$ is proved in the same manner. 
\end{proof}

\begin{prop}[Existence Lemma]
\label{p:existence}
Let $\Theta$ be an element of the canonical model $\mathfrak{C}_\Gamma$, and let $\varphi \in \languagesym_\Gamma(\Prop)$. If $\fdia \varphi \in \Theta$, then $\Theta$ has an $R^C$-successor $\Theta'$ with $\varphi \in \Theta'$. Similarly, if $\bdia \varphi \in \Theta$, then $\Theta$ has an $R^C$-predecessor $\Theta'$ with $\varphi \in \Theta'$.
\end{prop}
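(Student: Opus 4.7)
The plan is to carry out a standard canonical model existence argument, but using the derived necessitation rule and K-axiom for both box modalities, which were established in the preceding proposition. The two clauses of the lemma are duals of each other, so I would give the argument in full for the forward case and then sketch the dual.

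For the forward case, assume $\fdia \varphi \in \Theta$. I would define
\[ \Theta_0 := \{\varphi\} \cup \{\psi \mid \fbox \psi \in \Theta\} \]
and aim to show that $\Theta_0$ is $\logicsym_\Gamma$-consistent; any maximal consistent extension $\Theta'$ obtained from Lindenbaum's lemma then automatically satisfies $\Theta R^C \Theta'$ by the defining clause of $R^C$, and also contains $\varphi$. To see consistency, I would argue by contradiction: if $\Theta_0$ is inconsistent, compactness of derivability yields finitely many $\fbox \psi_1, \ldots, \fbox \psi_n \in \Theta$ with $\vdash \psi_1 \wedge \ldots \wedge \psi_n \rightarrow \neg \varphi$. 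Applying the derived necessitation rule for $\fbox$ and the derived normality axiom $\fbox(\alpha \rightarrow \beta) \rightarrow (\fbox \alpha \rightarrow \fbox \beta)$, together with the easy fact that $\fbox$ distributes over conjunction, I would conclude $\vdash \fbox \psi_1 \wedge \ldots \wedge \fbox \psi_n \rightarrow \fbox \neg \varphi$. Since all the antecedents lie in $\Theta$ and $\Theta$ is a maximal consistent set closed under derivability, this forces $\fbox \neg \varphi \in \Theta$, i.e.\ $\neg \fdia \varphi \in \Theta$, contradicting the assumption.

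For the backward case, assume $\bdia \varphi \in \Theta$. Thanks to clause (2) of Proposition \ref{p:canon}, $\Theta' R^C \Theta$ is equivalent to the condition that $\bbox \psi \in \Theta$ implies $\psi \in \Theta'$. So I would repeat the same construction with $\bbox$ in place of $\fbox$, setting $\Theta_0 := \{\varphi\} \cup \{\psi \mid \bbox \psi \in \Theta\}$, and use the $\bbox$-versions of necessitation and normality (both asserted in the preceding proposition) to derive a contradiction from the assumption that $\Theta_0$ is inconsistent: finitely many premises $\bbox \psi_i \in \Theta$ would yield $\bbox \neg \varphi \in \Theta$, hence $\neg \bdia \varphi \in \Theta$.

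There is no real obstacle in this proof; it is entirely routine once the necessitation rule and K-axiom have been derived for both modalities in the preceding proposition. The only thing worth being careful about is remembering to use the characterisation in Proposition \ref{p:canon}(2) for the backward-predecessor case, since the definition of $R^C$ is phrased asymmetrically in terms of $\fbox$.
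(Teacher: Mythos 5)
Your proposal is correct and is exactly the argument the paper intends: the paper's proof is just the one-line remark ``Standard argument using Lindenbaum's lemma, and using Proposition \ref{p:canon} for the case involving the converse modality,'' and you have filled in precisely that standard argument, including the right use of Proposition \ref{p:canon}(2) for the predecessor case.
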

\begin{proof}
Standard argument using Lindenbaum's lemma, and using Proposition \ref{p:canon} for the case involving the converse modality. 
\end{proof}

\subsection{Algebraic semantics and $\cO$-adjoints}
\label{algebras}
The heart of Santocanale and Venema's completeness  proof for flat modal fixpoint logics is algebraic. One of their key observations is that operators on free algebras for flat modal fixpoint logics definable by disjunctive formulas   have the nice property of being \emph{finitary $\cO$-adjoints}. As a corollary, it follows that free algebras for flat modal fixpoint logics are \emph{constructive}, i.e. the existing least fixpoints in such algebras arise as infinitary joins of their finite approximants.

Here, algebraic models will play only a minor role in the completeness proof, where they will be useful to eliminate the guardedness constraint in Theorem \ref{t:main} in Section \ref{guardedness}. The main reason for considering the algebraic semantics for flat modal fixpoint logics here is rather to point out a \emph{negative} result: as we shall see, disjunctive formulas do \emph{not} correspond to finitary $\mathcal{O}$-adjoints in free algebras for flat fixpoint logics with converse, so the proof strategy used by Santocanale and Venema to prove constructiveness of free algebras will not work here. (Constructiveness of free algebras for the logics $\logicsym_\Gamma$ does however follow as a corollary to our completeness proof.)

\begin{defi}
A \emph{modal algebra} is a tuple $\alg = (A,1_\mathfrak{A},0_\mathfrak{A},\vee_\mathfrak{A},\neg_\mathfrak{A},\fdia_\mathfrak{A},\bdia_\mathfrak{A})$ such that $(A,1_\mathfrak{A},0_\mathfrak{A},\vee_\mathfrak{A},\neg_\mathfrak{A})$ is a boolean algebra and $\fdia_\mathfrak{A},\bdia_\mathfrak{A} : A \to A$ are unary operations that preserve the bottom element $0_\mathfrak{A}$, and are additive. That is, the algebra satisfies the equations:
$$\fdia(x \vee y) = \fdia x \vee \fdia y \quad \quad \quad \bdia(x \vee y) = \bdia x \vee \bdia y $$
Such an algebra is called a \emph{residuated modal algebra} if the following equivalence holds for all $a,b \in A$:
$$\fdia_\alg a \leq b \Leftrightarrow a \leq \bbox_\alg b$$
and 
$$\bdia_\alg a \leq b \Leftrightarrow a \leq \fbox_\alg b$$
where as usual, $x \leq y$ is an abbreviation for  $x \wedge y = x$.
\end{defi}

\begin{prop}
A modal algebra  $\mathfrak{A} = (A,1_\mathfrak{A},0_\mathfrak{A},\vee_\mathfrak{A},\neg_\mathfrak{A},\fdia_\mathfrak{A},\bdia_\mathfrak{A})$ is residuated if, and only if, it validates the equations:
$$  x \leq \fbox \bdia x \quad \quad\quad x \leq \bbox \fdia x$$
\end{prop}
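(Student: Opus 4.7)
The statement is the standard correspondence between the unit-of-adjunction inequalities and the full residuation condition for a pair of adjoint operators on a Boolean algebra. The overall plan is to exploit that, in any Boolean-algebra-based modal algebra, $\fdia$ and $\bdia$ are monotone (being additive), so are $\fbox$ and $\bbox$ (by De Morgan duality), and the two directions then become essentially the ``unit then monotonicity'' and ``substitute identity'' moves familiar from adjunction theory.

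For the forward direction, assume $\alg$ is residuated. Then from $\bdia a \leq b \Leftrightarrow a \leq \fbox b$ I would just instantiate $b = \bdia x$ and $a = x$: since $\bdia x \leq \bdia x$ trivially, residuation gives $x \leq \fbox \bdia x$. Dually, instantiating the other residuation equivalence at $b = \fdia x$, $a = x$ yields $x \leq \bbox \fdia x$. This direction is immediate.

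For the converse, assume the two unit inequalities. I would establish the residuation $\fdia a \leq b \Leftrightarrow a \leq \bbox b$ in two steps. The ``only if'' direction follows by monotonicity of $\bbox$: if $\fdia a \leq b$, then $a \leq \bbox \fdia a \leq \bbox b$. For the ``if'' direction, assuming $a \leq \bbox b$ I would apply monotonicity of $\fdia$ to get $\fdia a \leq \fdia \bbox b$, and then reduce to showing the counit-style inequality $\fdia \bbox b \leq b$. That inequality is exactly the contrapositive of $\neg b \leq \fbox \bdia \neg b$, which is the first hypothesized equation instantiated at $\neg b$. The symmetric argument using the other unit inequality $\neg b \leq \bbox \fdia \neg b$ gives the second residuation $\bdia a \leq b \Leftrightarrow a \leq \fbox b$ via $\bdia \fbox b \leq b$.

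There is essentially no obstacle here beyond bookkeeping: the only thing to be careful about is that the ``counit'' inequalities $\fdia \bbox b \leq b$ and $\bdia \fbox b \leq b$ are not primitively assumed but must be derived from the unit inequalities through Boolean negation, and that the needed monotonicity of $\fbox, \bbox$ is a consequence of additivity (hence monotonicity) of $\fdia, \bdia$ combined with De Morgan. Once these two observations are in place the proof collapses to a few lines.
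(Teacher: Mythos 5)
Your proof is correct and takes essentially the same route as the paper: the forward direction is identical, and your backward direction uses the same ingredients (the unit inequalities, monotonicity of the operators from additivity, and Boolean contraposition), merely packaged by first isolating the counit inequality $\fdia \bbox b \leq b$ rather than manipulating negated elements directly as the paper does.
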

\begin{proof}
Suppose $\alg$ is residuated. Since we have $a \leq a$ for all $a \in A$, we have $\bdia_\alg a \leq \bdia_\alg a$ since  $\bdia_\alg$ is monotone, hence $a \leq \fbox_\alg \bdia_\alg a$ by residuatedness. The equation $x \leq \bbox_\alg \fdia_\alg x$ is proved in the same manner. 

Suppose the two equations hold in $\alg$. If $\fdia_\alg a \leq b$ then $ \bbox_\alg \fdia_\alg \leq \bbox_\alg b$, and since $a \leq \bbox_\alg \fdia_\alg a$ we get $a \leq \bbox_\alg b$ as required.  Conversely, if $a \leq \bbox_\alg b$ then $\bdia_\alg \neg_\alg b \leq \neg_\alg a $ so $\fbox_\alg \bdia_\alg \neg_\alg b \leq \fbox_\alg \neg_\alg a$, and since $\neg_\alg b \leq \fbox_\alg \bdia_\alg \neg_\alg b$ we get $\neg_\alg b \leq \fbox_\alg \neg_\alg a$, so $\fdia_\alg a \leq b$. The other equivalence is proved by a similar argument.  
\end{proof}

For any modal formula $\varphi(p_1,...,p_n)$ with $n$ free variables, and for any given modal algebra $\alg$, we can recursively define an $n$-place map $\varphi_\alg : A^n \rightarrow A$ corresponding to $\varphi$ in the usual manner. To be more precise, we define: $$\varphi_\alg(a_1,...,a_n) = h(\varphi(p_1,...,p_n))$$ where $h$ is the unique homomorphism from the term algebra of the language $\mathtt{MLC}_\emptyset(\{p_1,...,p_n\})$ to $\alg$ extending the valuation map $V : \Prop \to A$ that maps each variable $p_i$ to $a_i$. 

\begin{defi}
Let $\Gamma$ be a set of fixpoint connectives. A residuated modal algebra 
$\alg$
 is called a \emph{$\logicsym_\Gamma$-algebra} if, for every fixpoint connective $\chi(x,\vec{q}) \in \Gamma$ and every tuple $\vec{b} \in A^n$, the  map $\chi_\alg(-,\vec{b}) : A \to A$ has a least pre-fixpoint (with respect to the order $\leq$).
\end{defi}

If $\alg$ is a $\logicsym_\Gamma$-algebra, then we can define the map $\varphi_\alg : A^n \rightarrow A$ for any formula $\varphi(p_1,...,p_n)$ in $\languagesym_\gamma(\Prop)$ with $n$ free variables, extending the recursion by setting $(\sharp_\chi)_\alg(a_1,...,a_n)$ to be the least pre-fixpoint of the map $\chi_\alg(-,a_1,...,a_n) : A \to A$. We say that the formula  $\varphi(p_1,...,p_n)$ is \emph{valid} on $\alg$ and write $\alg \vDash \varphi(p_1,...,p_n)$ if $\varphi_\alg(a_1,...,a_n) = 1_\alg$ for all $a_1,...,a_n \in A$.

The \emph{free $\logicsym_\Gamma$-algebra} $\mathfrak{A} = (A,1_\mathfrak{A},0_\mathfrak{A},\vee_\mathfrak{A},\neg_\mathfrak{A},\fdia_\mathfrak{A},\bdia_\mathfrak{A})$ is defined by the standard Lindenbaum-Tarski construction, as the quotient of the term algebra of the language $\languagesym_\Gamma(\Prop)$ by the equivalence relation  $\equiv$  defined by: $$\varphi \equiv \psi \text{ iff } \logicsym_\Gamma \vdash \varphi \leftrightarrow \psi$$ It is standard to prove that this is a $\logicsym_\Gamma$-algebra, and that a formula $\varphi$ is provable in $\logicsym_\Gamma$ if and only if it is valid on the free $\logicsym_\Gamma$-algebra. So we get:

\begin{prop}
\label{p:algcompleteness}
The  logic $\logicsym_\Gamma$ is sound and complete with respect to $\logicsym_\Gamma$-algebras, i.e. $\logicsym_\Gamma \vdash \varphi$ iff $\alg \vDash \varphi$ for every $\logicsym_\Gamma$-algebra $\alg$.
\end{prop}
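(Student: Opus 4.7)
The plan is to establish soundness and completeness separately; both parts follow the classical Lindenbaum--Tarski pattern, and the only novelty is checking that the residuation and fixpoint conditions for $\logicsym_\Gamma$-algebras are captured by the axioms and rules.

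For soundness, I would verify that every axiom of $\logicsym_\Gamma$ is valid in every $\logicsym_\Gamma$-algebra $\alg$, and that every rule preserves validity. The propositional tautologies hold because the underlying reduct of $\alg$ is Boolean; the normality axioms hold because $\fdia_\alg$ and $\bdia_\alg$ are additive and send $0_\alg$ to $0_\alg$; and the converse axioms $p \to \fbox \bdia p$ and $p \to \bbox \fdia p$ are precisely the inequalities that, by the preceding proposition in the paper, characterise residuatedness. The pre-fixpoint axiom holds because $(\sharp_\chi)_\alg(\vec{b})$ is by definition a (in fact, the least) pre-fixpoint of $\chi_\alg(-,\vec{b})$, so $\chi_\alg((\sharp_\chi)_\alg(\vec{b}),\vec{b}) \leq (\sharp_\chi)_\alg(\vec{b})$. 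Modus ponens and Replacement of Equivalents are standard, and the Kozen--Park rule is sound because if $\chi_\alg(c,\vec{b}) \leq c$ then $c$ is a pre-fixpoint of $\chi_\alg(-,\vec{b})$, whence $(\sharp_\chi)_\alg(\vec{b}) \leq c$ by minimality.

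For completeness, I would carry out the Lindenbaum--Tarski construction. Define $\varphi \equiv \psi$ iff $\logicsym_\Gamma \vdash \varphi \leftrightarrow \psi$. Using Replacement of Equivalents and the derivability of the necessitation rules (which was established just above), one checks that $\equiv$ is a congruence with respect to all connectives, including each $\sharp_\chi$. Let $\alg$ be the quotient term algebra, writing $[\varphi]$ for the equivalence class of $\varphi$. The Boolean reduct is Boolean by propositional tautologies; $\fdia_\alg$ and $\bdia_\alg$ are additive by the additivity axioms; and residuation holds because the algebraic reformulation of the converse axioms coincides with the two inequalities $x \leq \fbox \bdia x$ and $x \leq \bbox \fdia x$. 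To show $\alg$ is a $\logicsym_\Gamma$-algebra, fix $\chi(x,\vec q) \in \Gamma$ and any tuple $[\vec\theta] \in A^n$. Every element of $A$ is of the form $[\gamma]$ for some formula $\gamma$, so the pre-fixpoint axiom gives $\chi_\alg([\sharp_\chi(\vec\theta)], [\vec\theta]) \leq [\sharp_\chi(\vec\theta)]$, while the Kozen--Park rule yields, for any $[\gamma]$ with $\chi_\alg([\gamma],[\vec\theta]) \leq [\gamma]$, the inequality $[\sharp_\chi(\vec\theta)] \leq [\gamma]$. Hence $[\sharp_\chi(\vec\theta)]$ is the least pre-fixpoint of $\chi_\alg(-,[\vec\theta])$, and in particular $(\sharp_\chi)_\alg([\vec\theta]) = [\sharp_\chi(\vec\theta)]$.

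From here, a routine induction on formula structure shows that for every $\varphi(p_1,\ldots,p_n) \in \languagesym_\Gamma(\Prop)$ one has $\varphi_\alg([p_1],\ldots,[p_n]) = [\varphi]$, where the fixpoint step uses the uniqueness of the least pre-fixpoint identified in the previous paragraph. Consequently $\alg \vDash \varphi$ implies $[\varphi] = 1_\alg$, i.e.\ $\logicsym_\Gamma \vdash \varphi$, giving the contrapositive of completeness. The main obstacle is really the verification that $\alg$ is a $\logicsym_\Gamma$-algebra for arbitrary parameters from $A$; this is resolved by the observation that $A$ consists entirely of classes of formulas, so the induction rule applied syntactically suffices to witness leastness of every parametric pre-fixpoint, and no additional constructiveness assumption (such as the finitary $\cO$-adjoint property discussed in Section \ref{algebras}) is required for this algebraic completeness theorem.
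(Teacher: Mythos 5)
Your proposal is correct and follows exactly the route the paper has in mind: the paper simply declares the Lindenbaum--Tarski construction and the verification that the quotient is a $\logicsym_\Gamma$-algebra to be ``standard,'' and you have filled in precisely those standard details (soundness axiom-by-axiom, the congruence property via Replacement of Equivalents, and the identification of $[\sharp_\chi\vec{\theta}]$ as the least pre-fixpoint of $\chi_\alg(-,[\vec{\theta}])$ via the pre-fixpoint axiom and the induction rule). Your closing observation that no constructiveness assumption is needed here is also accurate and consistent with the paper's discussion in Section \ref{algebras}.
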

We now revisit  the topic of finitary $\cO$-adjoints:

\begin{defi}
Let $\alg$ be a modal algebra and $f : A^n \to A$ any map. Then $f$ is called an \emph{$\cO$-adjoint} if there is a map $\cO_f : A \to \funP A^n$ mapping each $\vec{a} \in A^n$ to a finite set $\cO_f(\vec{a}) \subseteq A^n$, and such that for all $\vec{a} \in A^n, b \in A$: $f(\vec{a}) \leq b$ iff $a_i \leq d_i$ for some $(d_1,...,d_n) \in \cO_f(b)$ and all $i \in \{1,...,n\}$. 

A set $B \subseteq A$ is said to be $\cO_f$-closed if $b \in B$ implies $d_i \in B$ for all $(d_1,...,d_n) \in  \cO_f(b)$ and $i \in \{1,...,n\}$. The map $f$ is called a \emph{finitary $\cO$-adjoint} if for each $b \in A$, there is a finite $\cO_f$-closed set containing $b$. 
\end{defi}

A crucial result in \cite{sant:comp10} says (in rough terms) that: the maps corresponding to disjunctive formulas in the free algebras for flat modal fixpoint logics are finitary $\cO$-adjoints. Our main observation in this section is that this is not true for free $\logicsym_\Gamma$-algebras, even for very simple disjunctive fixpoint connectives like the one-place connective $y \vee \fdia x $. Note the meaning of this fixpoint connective: the formula $\sharp_{y \vee \fdia x } p$ expresses finite reachability of some point that satisfies $p$.

\begin{prop}
Let $\Gamma$ be any set of fixpoint connectives containing $y \vee \fdia x$ where  $y \in \mathtt{Q}$  is  any parameter variable. Then the map $(y \vee \fdia x)_\alg : A \times A \to A$ is not a finitary $\cO$-adjoint in the free $\logicsym_\Gamma$-algebra $\alg$.
\end{prop}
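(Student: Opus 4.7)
The plan is to pin down what the pair $(\bbox c, c)$ must look like in any candidate $\cO$-adjoint for $f = (y \vee \fdia x)_\alg$, and then to witness an infinite iteration of $\bbox$ in the free algebra using Kripke models.

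First, I would compute the downset $\{(a,b) \in A \times A : f(a,b) \leq c\}$. Since $f(a,b) = \fdia a \vee b$, residuation in a residuated modal algebra gives $\fdia a \vee b \leq c$ iff $\fdia a \leq c$ and $b \leq c$ iff $a \leq \bbox c$ and $b \leq c$. Hence this set is the principal downset $\downarrow\!\bbox c \times \downarrow\!c$ generated by the single pair $(\bbox c, c)$.

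Next, I would argue that whenever $\cO_f$ witnesses the $\cO$-adjoint property for $f$, the pair $(\bbox c, c)$ must itself belong to $\cO_f(c)$ for every $c$. Indeed, for any $(d_1,d_2) \in \cO_f(c)$, applying the ``if'' direction of the defining biconditional to $(a,b) = (d_1,d_2)$ shows $f(d_1,d_2) \leq c$, and so by Step~1 we have $d_1 \leq \bbox c$ and $d_2 \leq c$. Conversely, applying the ``only if'' direction to the pair $(a,b) = (\bbox c, c)$ (which satisfies $f(a,b) = c \leq c$) yields some $(d_1,d_2) \in \cO_f(c)$ with $\bbox c \leq d_1$ and $c \leq d_2$. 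Combining the two chains of inequalities forces $d_1 = \bbox c$ and $d_2 = c$. Consequently, any $\cO_f$-closed set containing $c$ must also contain $\bbox c$, and iterating, must contain $\bbox^n c$ for every $n \in \omega$.

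Finally, to rule out finitariness, I would exhibit an element $c \in A$ for which $\{\bbox^n c : n \in \omega\}$ is infinite. Take $c = [p]$, the equivalence class of a propositional variable. By the soundness proposition it suffices to produce, for each pair $n \neq m$, a Kripke model falsifying $\bbox^n p \leftrightarrow \bbox^m p$. I would use the infinite ``backward chain'' model $\bbS$ on states $\{w_0, w_1, w_2, \ldots\}$ with $R = \{(w_{i+1}, w_i) : i \in \omega\}$, and set $V(p) = \{w_n\}$. Unwinding the semantics of $\bbox$ as ``all $R$-predecessors satisfy $\varphi$'', we get $\bbS, w_0 \Vdash \bbox^k p$ iff $k = n$, so the formulas $\bbox^n p$ and $\bbox^m p$ are semantically distinct and therefore provably inequivalent.

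The main obstacle is really Step~2: extracting, from the abstract $\cO$-adjoint data, the fact that the canonical generator $(\bbox c, c)$ of the relevant downset must literally appear in $\cO_f(c)$. Once that is established, the rest is a routine iteration argument combined with the observation that the two-way Kripke semantics already separates iterated $\bbox$-applications on simple linear models.
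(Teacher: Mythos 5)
Your proof is correct and follows essentially the same route as the paper: both arguments use residuation to show that any witnessing $\cO_f$ must place $\bbox c$ as a component of some (indeed the unique) tuple in $\cO_f(c)$, so that $\cO_f$-closed sets are closed under $\bbox$, and then separate the iterates $\bbox^n c$ via Kripke models and soundness. The only cosmetic differences are that you pin down the full pair $(\bbox c, c)$ rather than just the diamond-component, and you iterate from $c=[p]$ with an explicit backward chain, whereas the paper iterates from $c=0$ using the formulas $\bbox^n\bot$.
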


\begin{proof}
We shall  drop the subscript referring to $\alg$ when denoting operations in $\alg$, for the sake of readability. Let $f$ denote the map $(a,b) \mapsto  a \vee \fdia b$. Suppose  that $\cO_f : A \to \funP A$ were a map witnessing finitary $O$-adjointness of  the map $f : A \times A \to A$. We first claim that for any $\cO_f$-closed set $C$, and any $c \in C$, we have $\bbox c \in C$ also. To see this, let $c \in C$.  Since $c \vee \fdia ( \bbox c)  \leq c$, there is some $(a',b') \in \cO_f(c)$ such that $c \leq a'$ and $\bbox c \leq b'$. But since $a' \leq a'$  and $b' \leq b'$, and  $(a',b') \in \cO_f(c)$, it follows that $a' \vee \fdia b' \leq c$, hence $\fdia b' \leq  c$ so $b' \leq \bbox c$. So in fact $b' = \bbox c$, hence $\bbox c \in C$.  

Now apply this claim to $c = 0$, the bottom element of $\alg$.  Let $C$ be the smallest $\cO_f$-closed set containing $0$. We get $\bbox 0\in C$, $\bbox \bbox 0 \in C$, $\bbox \bbox \bbox 0 \in C$... and so on. Since it is easy to construct Kripke models in which  the formulas   $\bbox\bot$, $\bbox \bbox \bot$, $\bbox \bbox \bbox \bot$ all have pairwise different truth sets, by soundness of $\logicsym_\Gamma$  the equivalence classes of these formulas are distinct elements of the algebra $\alg$. In other words, the elements $\bbox 0$, $\bbox \bbox 0 $, $\bbox \bbox \bbox 0 $... are all distinct elements of $C$,  which is a contradiction with our assumption that $\cO_f$ was finitary.
\end{proof}

What this means in practice is that we have to take a different route to the completeness result than the one taken by Santocanale and Venema: rather than establishing constructiveness of the free $\logicsym_\Gamma$-algebra in a first step, we will directly construct a model for a consistent formula, using the Kozen-Park induction rule ``on the fly'' to ensure satisfaction of the least fixpoint formulas one by one. This is essentially the idea of ``foci'', which has been used to produce SAT-algorithms for temporal logics \cite{lang:focu01} and more recently for the alternation-free $\mu$-calculus \cite{haus:glob16}. For the formal construction, we will use \emph{networks}, a well known technique in modal logic that is also used in \cite{sant:comp10}. See \cite{blac:moda01} for an introduction to the technique.

\section{Networks}
\subsection{Basic definitions}

In this section we introduce the basics of \emph{networks}, which we will use to construct models for consistent formulas. We start with a familiar construction from propositional dynamic logic and other modal fixpoint logics, that of \emph{Fischer-Ladner closure}. Essentially, Fischer-Ladner closed sets of formulas are closed under subformulas, single negations and single unfoldings of fixpoints. Besides this we shall add a couple of minor extra conditions here that will be convenient in some of the later proofs. 

\begin{defi}
Let $\Gamma$ be a set of fixpoint connectives, and let $\Sigma$ be a set of formulas in $\languagesym_\Gamma(\Prop)$. We say that $\Sigma$ is Fischer-Ladner closed if:
\begin{itemize}
\item if $\varphi \in \Sigma$ then every subformula of $\varphi$ is in $\Sigma$ as well,
\item if $\varphi \in \Sigma$ and is not of the form $\neg \psi$ for any $\psi$, then $\neg \varphi \in \Sigma$,
\item if $\chi(x,\vec{q}) \in \Gamma$ and $\sharp_\chi\vec{\theta} \in \Sigma$, then  $\chi(\sharp_\chi \vec{\theta},\vec{\theta}) \in \Sigma$ and $\chi(\bot,\vec{\theta}) \in \Sigma$,
\item $\fbox \bot \in \Sigma$ and $\bbox \bot \in \Sigma$.
\end{itemize}
\end{defi}

The proof of the following proposition is entirely standard and is therefore omitted:
\begin{prop}
\label{p:flclosure}
Let $\varphi$ be any formula in $\languagesym_\Gamma(\Prop)$. Then there exists a Fischer-Ladner closed and finite set $\Sigma$ with $\varphi \in \Sigma$.
\end{prop}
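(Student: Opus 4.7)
The plan is to exhibit an explicit finite superset $T \supseteq \{\varphi\}$ that is already Fischer-Ladner closed; the desired $\Sigma$ then arises as the smallest such set (or indeed one may simply take $\Sigma := T$). The crucial observation driving everything is that every fixpoint connective $\chi \in \Gamma$ lies in the \emph{basic} modal language $\languagesym(\{x\}\cup\mathtt{Q})$ and therefore contains no $\sharp$-operator whatsoever. Hence, when we unfold a fixpoint subformula $\sharp_\chi\vec\theta$ to $\chi(\sharp_\chi\vec\theta,\vec\theta)$, the only fixpoint subformulas appearing in the result are $\sharp_\chi\vec\theta$ itself together with whatever fixpoint subformulas already occur inside the parameter tuple $\vec\theta$. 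In other words, unfolding creates no genuinely new fixpoint subformulas, and this is precisely what will make the inductive closure terminate.

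Concretely, I would let $\mathit{FxSub}(\varphi) \subseteq \mathit{Sub}(\varphi)$ denote the finite set of fixpoint subformulas of $\varphi$ and define
$$T_0 := \mathit{Sub}(\varphi) \cup \mathit{Sub}(\fbox\bot) \cup \mathit{Sub}(\bbox\bot) \cup \bigcup_{\sharp_\chi\vec\theta \,\in\, \mathit{FxSub}(\varphi)} \bigl(\mathit{Sub}(\chi(\sharp_\chi\vec\theta,\vec\theta)) \cup \mathit{Sub}(\chi(\bot,\vec\theta))\bigr),$$
and then set $T := T_0 \cup \{\neg\psi : \psi \in T_0\}$. Finiteness of $T$ is immediate from the finiteness of $\mathit{FxSub}(\varphi)$ together with the obvious finiteness of $\mathit{Sub}(\psi)$ for any single formula $\psi$.

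The remaining task is to check the four Fischer-Ladner clauses for $T$. Three of them — closure under subformulas, closure under single negations, and the inclusion of $\fbox\bot,\bbox\bot$ — will fall out immediately from the construction, since each building block of $T_0$ is already subformula-closed and every newly introduced negation $\neg\psi$ contributes only the proper subformula $\psi\in T_0$. The one step where I expect any real content, and thus the main (though still very mild) obstacle, is closure under fixpoint unfolding: given $\sharp_\chi\vec\theta\in T$, not being a negation it must lie in $T_0$, and then the observation from the first paragraph forces $\sharp_\chi\vec\theta$ to be itself an element of $\mathit{FxSub}(\varphi)$ — whereupon the required unfoldings $\chi(\sharp_\chi\vec\theta,\vec\theta)$ and $\chi(\bot,\vec\theta)$ belong to $T_0$ by construction. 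The whole argument is thus a clean consequence of the flatness of the logic: fixpoint connectives being modal-only formulas prevents nested fixpoint structure from propagating through repeated unfolding.
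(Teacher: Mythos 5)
Your construction is correct, and it is precisely the ``entirely standard'' argument the paper invokes when omitting this proof: the decisive point in both cases is that connectives $\chi \in \Gamma$ live in the $\sharp$-free base language, so unfolding $\sharp_\chi\vec\theta$ produces no fixpoint subformulas beyond $\sharp_\chi\vec\theta$ itself and those already inside $\vec\theta$, which keeps the closure finite. Your verification of the four clauses (including the observation that a fixpoint formula in $T$ cannot lie in $T\setminus T_0$ because it is not a negation) is complete and sound.
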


We call the smallest Fischer-Ladner closed set containing $\varphi$ the \emph{Fischer-Ladner closure} of $\varphi$. 

\begin{defi}
Let $\Sigma$ be a Fischer-Ladner closed subset of $\languagesym_\Gamma(\Prop)$. A set of formulas $\Theta$ is then called a \emph{$\Sigma$-atom} if $\Theta$ is equal to $\Psi \cap \Sigma$ for some maximal consistent subset $\Psi \subseteq \languagesym_\Gamma(\Prop)$.
\end{defi}
For a Fischer-Ladner closed set $\Sigma$ we let $ \mathsf{At}(\Sigma)$ denote the set of $\Sigma$-atoms.
It follows immediately from Lindenbaum's lemma that any consistent formula $\varphi \in \Sigma$ is a member of some $\Sigma$-atom. The proposition below lists some basic properties of atoms, which will all be familiar:
\begin{prop}
Let $\Theta$ be a $\Sigma$-atom. Then:
\begin{itemize}
\item $\varphi \vee \psi \in \Theta$ iff $\varphi \in \Theta$ or $\psi \in \Theta$, for $\varphi \vee \psi \in \Sigma$, 
\item $\varphi \wedge \psi \in \Theta$ iff $\varphi \in \Theta$ and $\psi \in \Theta$, for $\varphi \wedge \psi \in \Sigma$, 
\item $\varphi \in \Theta$ iff $\sim\! \varphi \notin \Theta$, for $\varphi \in \Sigma$, where we define $\sim\! \neg \psi = \psi$ and $\sim \!\psi = \neg\psi$ if the main connective in $\psi$ is not $\neg$,
\item $\sharp_\chi \vec{\theta} \in \Theta$ iff $\chi(\sharp_\chi \vec{\theta},\vec{\theta}) \in \Theta$, for $\chi(x,\vec{q}) \in \Gamma$ and $\sharp_\chi \vec{\theta} \in \Theta$.
\end{itemize}
\end{prop}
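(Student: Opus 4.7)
By definition we may write $\Theta = \Psi \cap \Sigma$ for some maximal consistent set $\Psi \subseteq \languagesym_\Gamma(\Prop)$. The strategy is simply to transfer the standard properties of maximal consistent sets to $\Theta$, using Fischer-Ladner closure of $\Sigma$ to ensure that whichever formula we need to witness a property actually lies in $\Sigma$ (and hence lies in $\Theta$ precisely when it lies in $\Psi$).

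For the disjunction clause, suppose $\varphi \vee \psi \in \Sigma$. Closure under subformulas gives $\varphi, \psi \in \Sigma$. Since $\Psi$ is maximal consistent and contains all propositional tautologies, $\varphi \vee \psi \in \Psi$ iff $\varphi \in \Psi$ or $\psi \in \Psi$; intersecting with $\Sigma$ yields the claim. The conjunction case is analogous. For the negation clause, assume $\varphi \in \Sigma$: either $\varphi = \neg \psi$ for some $\psi$, in which case $\sim\varphi = \psi \in \Sigma$ by closure under subformulas, or else $\sim \varphi = \neg \varphi \in \Sigma$ by the second closure condition. Either way $\sim \varphi \in \Sigma$, and by maximal consistency of $\Psi$ exactly one of $\varphi$ and $\sim\varphi$ lies in $\Psi$, hence in $\Theta$.

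For the fixpoint clause it suffices to show $\logicsym_\Gamma \vdash \sharp_\chi \vec{\theta} \leftrightarrow \chi(\sharp_\chi \vec{\theta}, \vec{\theta})$ and to observe that $\chi(\sharp_\chi \vec{\theta}, \vec{\theta}) \in \Sigma$ by the third Fischer-Ladner closure condition, so that the biconditional transfers from $\Psi$ to $\Theta$. The right-to-left direction is exactly the pre-fixpoint axiom. For the left-to-right direction, set $a \isdef \chi(\sharp_\chi \vec{\theta}, \vec{\theta})$. The pre-fixpoint axiom yields $\vdash a \to \sharp_\chi \vec{\theta}$, and since $\chi(x, \vec{q})$ is positive in $x$, the derivable monotonicity principle ``$\vdash \alpha \to \beta$ implies $\vdash \chi(\alpha, \vec{\theta}) \to \chi(\beta, \vec{\theta})$'' (proved by a routine induction on the construction of $\chi$, using replacement of equivalents together with additivity of $\fdia$ and $\bdia$) gives $\vdash \chi(a, \vec{\theta}) \to \chi(\sharp_\chi \vec{\theta}, \vec{\theta})$, i.e.\ $\vdash \chi(a, \vec{\theta}) \to a$. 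The Kozen-Park induction rule then delivers $\vdash \sharp_\chi \vec{\theta} \to a$, as required.

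The only clause carrying any real content is the fixpoint biconditional; the remaining clauses reduce to bookkeeping about which formulas survive intersection with $\Sigma$. The one step that deserves care is the monotonicity derivation for positive $\chi$, but this is a standard and mechanical induction, so no serious obstacle is expected.
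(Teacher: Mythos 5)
Your proof is correct; the paper itself gives no argument for this proposition (it is stated as a list of "familiar" properties of atoms), and your derivation is exactly the standard one that is being tacitly invoked. In particular, the only non-routine part — deriving $\vdash \sharp_\chi \vec{\theta} \leftrightarrow \chi(\sharp_\chi \vec{\theta},\vec{\theta})$ from the pre-fixpoint axiom, monotonicity of positive contexts, and the Kozen-Park rule — is handled properly, and your use of the Fischer-Ladner closure conditions to keep the relevant witnesses inside $\Sigma$ is the right bookkeeping.
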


From now on we fix a set of fixpoint connectives $\Gamma$ and a finite, Fischer-Ladner closed $\Sigma \subseteq \languagesym_\Gamma(\Prop)$.
An important concept that we shall borrow from \cite{haus:glob16} is that of a \emph{deferral} of a fixpoint formula:
\begin{defi}
Let $\Sigma$ be a Fischer-Ladner closed subset of $\languagesym_\Gamma(\Prop)$ and let $\chi(x,\vec{q}) \in \Gamma$.
A \emph \emph{potential $\sharp_\chi \vec{\theta}$-deferral} is a formula of the form $\psi(x,\vec{q})$ which is a subformula of $\chi(x,\vec{q})$ in which the variable $x$ occurs at least once. A potential $\Sigma$-deferral is a potential $\sharp_\chi \vec{\theta}$-deferral for some $\sharp_\chi \vec{\theta} \in \Sigma $.
\end{defi}

We shall let $D(\Sigma)$ denote the set of potential $\Sigma$-deferrals. We let $\dfrl$ denote the size of the set $D(\Sigma)$ (which is of course finite since $\Sigma$ is), and we fix a bijective enumeration $\eta : \{1,...,\dfrl\} \to D(\Sigma)$.

We can now define the basic concept of a network. A network will be a certain directed acyclic graph, with atoms assigned to its nodes, representing an approximation of a model for some given formula. We will use the following notation: if $G$ is a directed graph, we write $\underline{G}$ for the set of nodes of $G$, and we write $u \edge{G} v$ for $u,v \in \underline{G}$ to say that there is an edge from $u$ to $v$ in $G$. In this case we say that $v$ is a successor of $u$, and that $u$ is a predecessor of $v$. The \emph{opposite} of a directed graph $G$, denoted $G^{\mathit{op}}$, is defined by setting $\underline{G}^{\mathit{op}} = \underline{G}$ and $u \edge{G^{\mathit{op}}} v$ iff $v \edge{G} u$.

\begin{defi}
Let $\Sigma$ be a finite, Fischer-Ladner closed set of formulas. A \emph{$\Sigma$-prenetwork} $\mathcal{N} = (G,L,S_F,S_P)$ consists of a DAG (directed acyclic graph) $G$ together with a labelling function $L : \underline{G} \to \mathsf{At}(\Sigma)$, and such that:
\begin{itemize}
\item If $\fbox \varphi \in L(u)$ and $u \edge{G} v$ then $\varphi \in L(v)$.
\item If $\bbox \varphi \in L(v)$ and $u \edge{G} v$ then $\varphi \in L(u)$.
\end{itemize}
and $S_F$, $S_P$ are subsets of $\underline{G}$. 
 A $\Sigma$-prenetwork $\mathcal{N} = (G,L,S_F,S_P)$ is called a \emph{$\Sigma$-network} if the following conditions hold:
\begin{itemize}
\item  If $u \in S_F$ then for each formula $\fdia \varphi \in L(u)$ and each $i \in \{1,...,\dfrl\}$ there exists a successor $v^\varphi_i$ of $u$ such that $\varphi \in L(v^\varphi_i)$, and such that  $L(v^\varphi_1) =  ... = L(v^\varphi_\dfrl)$ for each $\fdia \varphi \in L(u)$ and  $v^\varphi_i \neq v^\psi_j$ whenever $\varphi \neq \psi$ or $i \neq j$. 
\item If $u \in S_P$ then for each formula $\bdia \varphi \in L(u)$ and each $i \in \{1,...,\dfrl\}$ there exists a predecessor $v^\varphi_i$ of $u$ such that $\varphi \in L(v^\varphi_i)$, and such that  $L(v^\varphi_1) =  ... = L(v^\varphi_\dfrl)$ for each $\bdia \varphi \in L(u)$ and  $v^\varphi_i \neq v^\psi_j$ whenever $\varphi \neq \psi$ or $i \neq j$. 
\end{itemize}
A node $u$ is called a \emph{head node} if it has no successors, and a \emph{tail node} if it has no predecessors.
\end{defi}
The set $S_F$ is called the set of \emph{forward saturated nodes} in $\mathcal{N}$, and the set $S_P$ is called the set of \emph{backward saturated nodes} in $\mathcal{N}$. The idea is that if a node belongs to $S_F$, then it has all the successor nodes it will ever need, and we are not allowed to add any more successors later when we extend the network $\mathcal{N}$ in our step-by-step construction of a model. Similarly, a node in $S_P$ has all the predecessor nodes it will ever need, and we are not allowed to add any more predecessors later.

Given a directed graph $G$, we write $u \edge{G*} v$ if $u \edge{G} v$ or there is  some finite path: $$u \edge{G} w_1 \edge{G} ... \edge{G} w_k \edge{G} v$$ in $G$.
In other words,  $\edge{G*}$ is defined to be  the transitive closure of $\edge{G}$, and \emph{not} the reflexive transitive closure. So if $G$ is a DAG, the relation $\edge{G*}$ is always irreflexive.

\begin{defi}
A network $\mathcal{N} = (G,L)$ is said to be \emph{anticonfluent} if for any nodes $u,v,v',w \in \underline{G}$ with $v \neq v'$ it is not the case  that $u \gpath{G} v$, $u \gpath{G} v'$, $v \gpath{G} w$ and $v' \gpath{G} w$.
\end{defi}

\begin{defi}
Let $\mathcal{N} = (G,L,S_F,S_P)$ and $\mathcal{N}' = (G',L',S_F',S_P')$ be $\Sigma$-prenetworks. We say that $\mathcal{N}'$ \emph{contains} the prenetwork $\mathcal{N}$  and write $\mathcal{N} \subseteq \mathcal{N}'$ if:
\begin{itemize}
\item $G$ is a subgraph of $G'$ ($\underline{G} \subseteq \underline{G}'$ and $u \edge{G} v$ iff $u \edge{G'} v$ for $u,v \in \underline{G}$),
\item $L = L' {\upharpoonright}_{\underline{G}}$ ($L$ is the restriction of $L'$ to $\underline{G}$),
\item $S_F \subseteq S_F'$ and $S_P \subseteq S_P'$.
\end{itemize}
\end{defi}

The relation of ``being contained in'' is too weak for our purposes, since it does not reflect the intuition that we should not add new successors or new predecessors to nodes in $S_F$ or in $S_P$, respectively. For this reason we need the stronger notion of a \emph{subnetwork}:
\begin{defi}
A network $\mathcal{N} = (G,L,S_F,S_P)$ is said to be a \emph{subnetwork} of $\mathcal{N}' = (G',L',S_F',S_P')$, written $\mathcal{N} \sqsubseteq \mathcal{N}'$, if $\mathcal{N} \subseteq \mathcal{N}'$ and, for all $u \in \underline{G}$:
\begin{itemize}
\item If $u \in S_F$ and $u \edge{G'} v$ for $v\in \underline{G}'$, then $v \in \underline{G}$ also.
\item If $u \in S_P$ and $v \edge{G'} u$ for $v\in \underline{G}'$, then $v \in \underline{G}$ also.
\end{itemize}
\end{defi}
\subsection{Basic operations on networks}

We will need some rudimentary set theoretic operations on networks.
\begin{defi}
Let $\{\mathcal{N}_i\}_{i \in I}$ be a family of $\Sigma$-prenetworks, where each $\mathcal{N}_i = (G_i,L_i,S_F^i,S_P^i)$, and suppose that for all $i,j \in I$: if $u \in \underline{G}_i \cap \underline{G}_j$ then $L_i(u) = L_j(u)$. Then we define the union $\bigcup_{i \in I} \mathcal{N}_i$ of these prenetworks to be the tuple $(G',L',S_F',S_P')$, where:
\begin{itemize}
\item $\underline{G'} = \bigcup_{i \in I} \underline{G}_i$,
\item $u \edge{G'} v$ iff for some $i \in I$, $u,v \in \underline{G}_i$ and $u \edge{G_i} v$,
\item $L'(u) = L_i(u)$ for $u \in \underline{G}_i$,
\item $S_F' = \bigcup_{i \in I} S_F^i$ and $S_P' = \bigcup_{i \in I} S_P^i$.
\end{itemize}
\end{defi}
The following proposition is obvious. 
\begin{prop}
Suppose that $\{\mathcal{N}_i\}_{i \in I}$ is a family of $\Sigma$-networks. If the union $\bigcup_{i \in I} \mathcal{N}_i$ is defined, then it is also a $\Sigma$-network.
\end{prop}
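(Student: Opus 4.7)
The plan is to verify each of the conditions in the definition of a $\Sigma$-network for the union $\mathcal{N}' = (G', L', S_F', S_P')$, by reducing each to the corresponding property of one of the component networks $\mathcal{N}_i$. The compatibility hypothesis on the labellings makes $L'$ well-defined, and I take the clause ``the union is defined'' to implicitly include that $G'$ is a DAG. Once this is in hand, the work splits into two bits: (i) checking the two box coherence conditions required of every $\Sigma$-prenetwork, and (ii) checking the two saturation conditions distinguishing networks from prenetworks.

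For (i), suppose $\fbox\varphi \in L'(u)$ and $u \edge{G'} v$. By the definition of the union, there is some index $i \in I$ with $u,v \in \underline{G}_i$ and $u \edge{G_i} v$. Since $L'(u) = L_i(u)$ and $\mathcal{N}_i$ is a $\Sigma$-prenetwork, we obtain $\varphi \in L_i(v) = L'(v)$. The dual condition for $\bbox$ is proved identically by symmetry. For (ii), suppose $u \in S_F' = \bigcup_i S_F^i$, so that $u \in S_F^{i_0}$ for some $i_0$. Since $\mathcal{N}_{i_0}$ is a $\Sigma$-network, for each $\fdia\varphi \in L_{i_0}(u) = L'(u)$ and each $k \in \{1,\dots,\dfrl\}$ we have witnessing successors $v_k^\varphi$ in $G_{i_0}$ satisfying the required labelling agreement and pairwise distinctness. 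These witnesses remain successors of $u$ in $G'$ by construction, their $L'$-labels coincide with their $L_{i_0}$-labels by compatibility, and distinct nodes of $G_{i_0}$ remain distinct in $G'$, so all the required equalities and inequalities transfer directly. The backward saturation condition for $S_P'$ is handled by exactly the same argument applied to $\mathcal{N}_{i_0}^{\mathit{op}}$.

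There is essentially no obstacle: the argument is purely bookkeeping, tracing through how each condition localises to a single component network in which it already holds. The only point where one might pause is the acyclicity of the combined graph, which does not follow automatically from acyclicity of the components together with label compatibility (one can easily build two DAGs on the same node set whose union contains a cycle); but this is precisely why the statement is phrased as a conditional in ``if the union is defined''. In the concrete applications later in the paper, the families over which one takes unions will be built so that the underlying graphs are compatible subgraphs of a common DAG, making acyclicity automatic.
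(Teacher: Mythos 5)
Your proof is correct and is exactly the routine verification the paper has in mind---the paper declares the proposition ``obvious'' and omits the proof entirely, so there is nothing to diverge from. Your observation that acyclicity of the union graph does not follow from acyclicity of the components is a fair point about the phrasing of the statement; as you say, in every application in the paper the component graphs sit compatibly inside a common DAG, so the issue never materialises.
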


\begin{defi}
Let $\mathcal{N} = (G,L,S_F,S_P)$ be a prenetwork and let $ X \subseteq \underline{G}$.  We define the restriction $\mathcal{N} {\upharpoonright} X$ to be the tuple $(G',L',S_F',S_P')$ where:
\begin{itemize}
\item $G'$ is the unique subgraph of $G$ with $\underline{G}' = X$,
\item $L' = L{\upharpoonright}_{X}$,
\item $S_F' = S_F \cap X$ and  $S_P' = S_P \cap X$.
\end{itemize}
We define the relative complement $\mathcal{N}\setminus X$ for $ X \subseteq \underline{G}$ to be the prenetwork $\mathcal{N}{\upharpoonright}(\underline{G} \setminus X)$.
\end{defi}
Note that the restriction $\mathcal{N}{\upharpoonright}_X$ need \emph{not} be a network even if $\mathcal{N}$ is, since successors of an element of $S_F$ or predecessors of an element of $S_P$  in $\mathcal{N}$ might not be present in $\mathcal{N}{\upharpoonright}_X$.

\begin{defi}
Let $\mathcal{N} = (G,L,S_F,S_P)$ be a prenetwork and let $X \subseteq \underline{G}$. We define:
$$\upgen{\mathcal{N}}{X} = X \cup \{v \in \underline{G} \mid u \edge{G*} v  \text{ for some } u \in X\}$$ and 
$$\downgen{\mathcal{N}}{X} = X \cup \{v \in \underline{G} \mid v \edge{G*} u  \text{ for some } u \in X\}$$
We write $\upgen{\mathcal{N}}{u}$ and $\downgen{\mathcal{N}}{u}$ rather than $\upgen{\mathcal{N}}{\{u\}}$ and $\downgen{\mathcal{N}}{\{u\}}$, respectively.
\end{defi}

\begin{defi}
Let $\mathcal{N} = (G,L,S_F,S_P)$, $\mathcal{N}' = (G',L',S_F',S_P')$ be  $\Sigma$-networks, and let $u$ be a node in $\underline{G} \cap \underline{G}'$. We write $\mathcal{N} \equp{u} \mathcal{N}'$ if: $$\mathcal{N} \setminus (\upgen{\mathcal{N}}{u}) = \mathcal{N}' \setminus(\upgen{\mathcal{N}'}{u}).$$ Similarly we write $\mathcal{N} \eqdown{u} \mathcal{N}'$ if: 
$$\mathcal{N} \setminus (\downgen{\mathcal{N}}{u}) = \mathcal{N}' \setminus(\downgen{\mathcal{N}'}{u}).$$
\end{defi}

\begin{defi}
Given $\mathcal{N} \subseteq \mathcal{N}'$, we say that $\mathcal{N}$ is \emph{upwards cofinal} in $\mathcal{N}'$ if whenever $u \in \mathcal{N}$ and $u \reach{G'} v$ then $v \in \mathcal{N}$. Similarly we say that $\mathcal{N}$ is \emph{downwards cofinal} in $\mathcal{N}'$ if whenever $u \in \mathcal{N}$ and $v \reach{G'} u$ then $v \in \mathcal{N}$. 
\end{defi}

The following proposition establishes a kind of amalgamation property for \emph{finite and anticonfluent networks}:

\begin{prop}
\label{p:amalgamation}
Let $\mathcal{N} = (G,L,S_F,S_P)$ be a finite and anticonfluent  $\Sigma$-network, let $u_1,...u_n$ be nodes in $G$ and let $\mathcal{N}_1,...,\mathcal{N}_n$ be finite  and anticonfluent networks such that: 
\begin{itemize}
\item  $\mathcal{N} \sqsubseteq \mathcal{N}_i$ for each $i \in \{1,...,n\}$,
\item the sets $\upgen{\mathcal{N}_1}{u_1},...,\upgen{\mathcal{N}_n}{u_n}$ are pairwise disjoint,  
\item  $\mathcal{N} \equp{u_i} \mathcal{N}_i$ for each $i \in \{1,...,n\}$, and
\item $\mathcal{N}$ is downwards cofinal in $\mathcal{N}_i$ for each $i \in \{1,...,n\}$.
\end{itemize} 
Then there exists a finite  and anticonfluent $\Sigma$-network $\mathcal{N}'$ with $\mathcal{N}_i \sqsubseteq \mathcal{N}'$ for each $i \in \{1,...,n\}$, $\mathcal{N}$ is downwards cofinal  in $\mathcal{N}'$, and such that: 
$$\simup{\mathcal{N}}{\mathcal{N}'}{\{u_1,...,u_n\}}$$
Similarly, let $\mathcal{N} = (G,L,S_F,S_P)$ be a finite and anticonfluent  $\Sigma$-network, let $u_1,...u_n$ be nodes in $G$ and let $\mathcal{N}_1,...,\mathcal{N}_n$ be finite  and anticonfluent networks such that: 
\begin{itemize}
\item  $\mathcal{N} \sqsubseteq \mathcal{N}_i$ for each $i \in \{1,...,n\}$,
\item the sets $\downgen{\mathcal{N}_1}{u_1},...,\downgen{\mathcal{N}_n}{u_n}$ are pairwise disjoint,  
\item  $\mathcal{N} \eqdown{u_i} \mathcal{N}_i$ for each $i \in \{1,...,n\}$, and
\item $\mathcal{N}$ is upwards cofinal in $\mathcal{N}_i$ for each $i \in \{1,...,n\}$.
\end{itemize}
Then there exists a finite  and anticonfluent $\Sigma$-network $\mathcal{N}'$ with $\mathcal{N}_i \sqsubseteq \mathcal{N}'$ for each $i \in \{1,...,n\}$, $\mathcal{N}$ is upwards cofinal in $\mathcal{N}'$, and such that: 
$$\simdown{\mathcal{N}}{\mathcal{N}'}{\{u_1,...,u_n\}}$$
\end{prop}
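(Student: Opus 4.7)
The plan is to take $\mathcal{N}' := \bigcup_{i=1}^{n} \mathcal{N}_i$, the union of the given networks as defined in this subsection, and to verify each of the required properties in turn. The pivotal tool is a \emph{path-decomposition lemma}: because $\mathcal{N}$ is downward cofinal in each $\mathcal{N}_i$, any edge of $\mathcal{N}_i$ with target in $\underline{\mathcal{N}}$ is already an edge of $\mathcal{N}$, so every edge of $\mathcal{N}'$ is either an edge of $\mathcal{N}$ or a ``new'' edge of a unique $\mathcal{N}_i$ whose target is a genuinely new node of $\mathcal{N}_i$. Consequently, once a forward path in $\mathcal{N}'$ leaves $\underline{\mathcal{N}}$, it enters the new nodes of a unique $\mathcal{N}_i$ and remains there; every path in $\mathcal{N}'$ therefore lies either entirely in $\mathcal{N}$ or entirely inside a single $\mathcal{N}_i$.

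Well-definedness of the union is immediate: any node shared between $\underline{\mathcal{N}_i}$ and $\underline{\mathcal{N}_j}$ for $i \neq j$ must lie in $\underline{\mathcal{N}}$ by the disjointness of the upward closures, and on $\underline{\mathcal{N}}$ all labels and $S_F, S_P$ memberships agree by $\mathcal{N} \sqsubseteq \mathcal{N}_i$ together with the equivalences $\mathcal{N} \equp{u_i} \mathcal{N}_i$. The DAG property and the anticonfluence of $\mathcal{N}'$ then follow from the path-decomposition lemma: a hypothetical cycle or confluent configuration with meeting point $w$ is, depending on whether $w \in \underline{\mathcal{N}}$ or $w$ is new in some $\mathcal{N}_i$, pushed entirely into $\mathcal{N}$ or entirely into a single $\mathcal{N}_i$, contradicting the anticonfluence of the respective network.

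The most delicate step is the subnetwork relation $\mathcal{N}_i \sqsubseteq \mathcal{N}'$. The edge-containment part is immediate: any edge of $\mathcal{N}_k$ (for $k \neq i$) between two nodes of $\underline{\mathcal{N}_i}$ lies entirely in $\underline{\mathcal{N}}$ by disjointness, hence is also an edge of $\mathcal{N}$ and thus of $\mathcal{N}_i$. For saturation preservation, consider $u \in S_F^i$ with an $\mathcal{N}_k$-successor $v$ in $\mathcal{N}'$ for $k \neq i$; then $u \in \underline{\mathcal{N}_i} \cap \underline{\mathcal{N}_k} = \underline{\mathcal{N}}$, and the case split is on whether $u$ lies in $\upgen{\mathcal{N}}{u_i}$. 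If $u \notin \upgen{\mathcal{N}}{u_i}$, the equivalence $\mathcal{N} \equp{u_i} \mathcal{N}_i$ forces $u \in S_F$, and then $\mathcal{N} \sqsubseteq \mathcal{N}_k$ yields $v \in \underline{\mathcal{N}} \subseteq \underline{\mathcal{N}_i}$. If $u \in \upgen{\mathcal{N}}{u_i}$, disjointness gives $u \notin \upgen{\mathcal{N}_k}{u_k}$, and one uses $\mathcal{N} \equp{u_k} \mathcal{N}_k$ together with the path-decomposition lemma to force $v$ out of the new part of $\mathcal{N}_k$, so that again $v \in \underline{\mathcal{N}} \subseteq \underline{\mathcal{N}_i}$. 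The argument for $S_P^i$ is dual.

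Downward cofinality of $\mathcal{N}$ in $\mathcal{N}'$ follows directly from the path-decomposition lemma, and the equivalence $\simup{\mathcal{N}}{\mathcal{N}'}{\{u_1, \ldots, u_n\}}$ reduces, via the identity $\upgen{\mathcal{N}'}{\{u_1, \ldots, u_n\}} = \bigcup_i \upgen{\mathcal{N}_i}{u_i}$, to applying the equivalences $\mathcal{N} \equp{u_i} \mathcal{N}_i$ simultaneously at each $u_i$. The main obstacle is the ``inside'' subcase of the saturation condition, where all four hypotheses must be woven together to rule out spurious crossing edges into the new parts of other $\mathcal{N}_k$'s. The backward half of the Proposition is proved by a symmetric argument, substituting upward cofinality, $\downgen{\mathcal{N}}{u_i}$-closures, and $\eqdown{u_i}$ for their downward counterparts.
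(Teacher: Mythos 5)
Your construction is the same as the paper's (take $\mathcal{N}'=\bigcup_i\mathcal{N}_i$ and verify the properties), and most of your verifications are sound: the path-decomposition lemma is a clean way to organize well-definedness, the DAG property, anticonfluence, downward cofinality, and the identity $\upgen{\mathcal{N}'}{\{u_1,\dots,u_n\}}=\bigcup_i\upgen{\mathcal{N}_i}{u_i}$. The gap sits exactly at the step you yourself call the most delicate: the subcase $u\in\upgen{\mathcal{N}}{u_i}$ of the saturation condition for $\mathcal{N}_i\sqsubseteq\mathcal{N}'$. There you assert that $\mathcal{N}\equp{u_k}\mathcal{N}_k$ together with the path-decomposition lemma ``forces $v$ out of the new part of $\mathcal{N}_k$'', but neither tool applies. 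The path-decomposition lemma (via downward cofinality) only controls edges whose \emph{target} lies in $\underline{G}$; it says nothing about an edge from an old node into a new node. And the equivalence $\mathcal{N}\equp{u_k}\mathcal{N}_k$ only identifies the \emph{induced} subnetworks on $\underline{G_k}\setminus\upgen{\mathcal{N}_k}{u_k}$, so it is silent about an edge whose source $u$ lies outside $\upgen{\mathcal{N}_k}{u_k}$ but whose target $v$ lies inside it.

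Concretely, nothing in the four hypotheses rules out the following configuration: $\mathcal{N}$ has nodes $u_1,u_2,z$ with the single edge $u_1\edge{G}z$ and empty saturation sets; $\mathcal{N}_1$ places $z$ in $S_F^1$ after gluing the required witnessing successors below $z$ (all of which lie in $\upgen{\mathcal{N}_1}{u_1}$, so $\mathcal{N}\equp{u_1}\mathcal{N}_1$ holds); and $\mathcal{N}_2$ adds one fresh node $w$ with edges $u_2\edge{G_2}w$ \emph{and} $z\edge{G_2}w$. Since $w\in\upgen{\mathcal{N}_2}{u_2}$, the restriction $\mathcal{N}_2\setminus\upgen{\mathcal{N}_2}{u_2}$ never sees the edge $z\edge{G_2}w$, so $\mathcal{N}\equp{u_2}\mathcal{N}_2$ holds, as do disjointness, $\mathcal{N}\sqsubseteq\mathcal{N}_i$, downward cofinality, and anticonfluence of each $\mathcal{N}_i$. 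Yet in the union $z\in S_F'$ acquires the successor $w\notin\underline{G_1}$, so $\mathcal{N}_1\not\sqsubseteq\mathcal{N}'$; worse, any network containing $\mathcal{N}_2$ must contain the edge $z\edge{}w$, so no amalgam with the stated properties exists. Note that the paper's own proof disposes of this item with the one-line claim that a node of $\mathcal{N}_i$ has the same successors and predecessors in $\mathcal{N}'$ as in $\mathcal{N}_i$ --- precisely the assertion that fails here --- so you have correctly located the soft spot but your repair does not close it. What is actually needed is an extra hypothesis on the $\mathcal{N}_i$ (for instance, that every $G_i$-edge with target outside $\underline{G}$ has its source in $\upgen{\mathcal{N}_i}{u_i}$, which does hold for the networks arising in the proof of Proposition \ref{p:singledef}, where the new material is a tree hanging below $u_i$), and the saturation argument should be run with that property made explicit.
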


\begin{proof}
We focus on the first part of the proposition, where the assumption is that the sets $\upgen{\mathcal{N}_1}{u_1},...,\upgen{\mathcal{N}_n}{u_n}$ are pairwise disjoint, $\mathcal{N} \sqsubseteq \mathcal{N}_i$ for each $i \in \{1,...,n\}$,   $\mathcal{N} \equp{u_i} \mathcal{N}_i$  for each $i \in \{1,...,n\}$ and $\mathcal{N}$ is downwards cofinal in $\mathcal{N}_i$    for each $i \in \{1,...,n\}$. The other case is handled in an analogous manner. 
We set $\mathcal{N}' = \mathcal{N}_1 \cup ... \cup \mathcal{N}_n$, which is defined because the assumptions entail that $L_i(u) = L_j(u) = L(u)$ if $u \in \underline{G}_i \cap \underline{G}_j$ for $i,j \in \{1,...,n\}$ (note that  $u \in \underline{G}_i \cap \underline{G}_j$ for $i \neq j$ implies $u \in \underline{G}$!). Clearly this network is finite, and $\mathcal{N}$ is downwards cofinal in $\mathcal{N}'$ since it is downwards cofinal in each $\mathcal{N}_i$. The situation for the special case of two networks $\mathcal{N}_1,\mathcal{N}_2$ extending $\mathcal{N}$ and with $\mathcal{N}' = \mathcal{N}_1 \cup \mathcal{N}_2$ is depicted in Figure \ref{f:amalg}, where each shaded area shows: (A) the whole network $\mathcal{N}'$, (B) the subnetwork $\mathcal{N}$, (C) the set $\upgen{\mathcal{N}_1}{u_1}$, (D) the set $\upgen{\mathcal{N}_2}{u_2}$, (E) the network $\mathcal{N}_1$, (F) the network $\mathcal{N}_2$.
\begin{figure}[h]
\begin{center}
\setlength{\unitlength}{0.5cm}
\begin{picture}(16,20)(0,0)
\put(2,0){\includegraphics[width = 7cm]{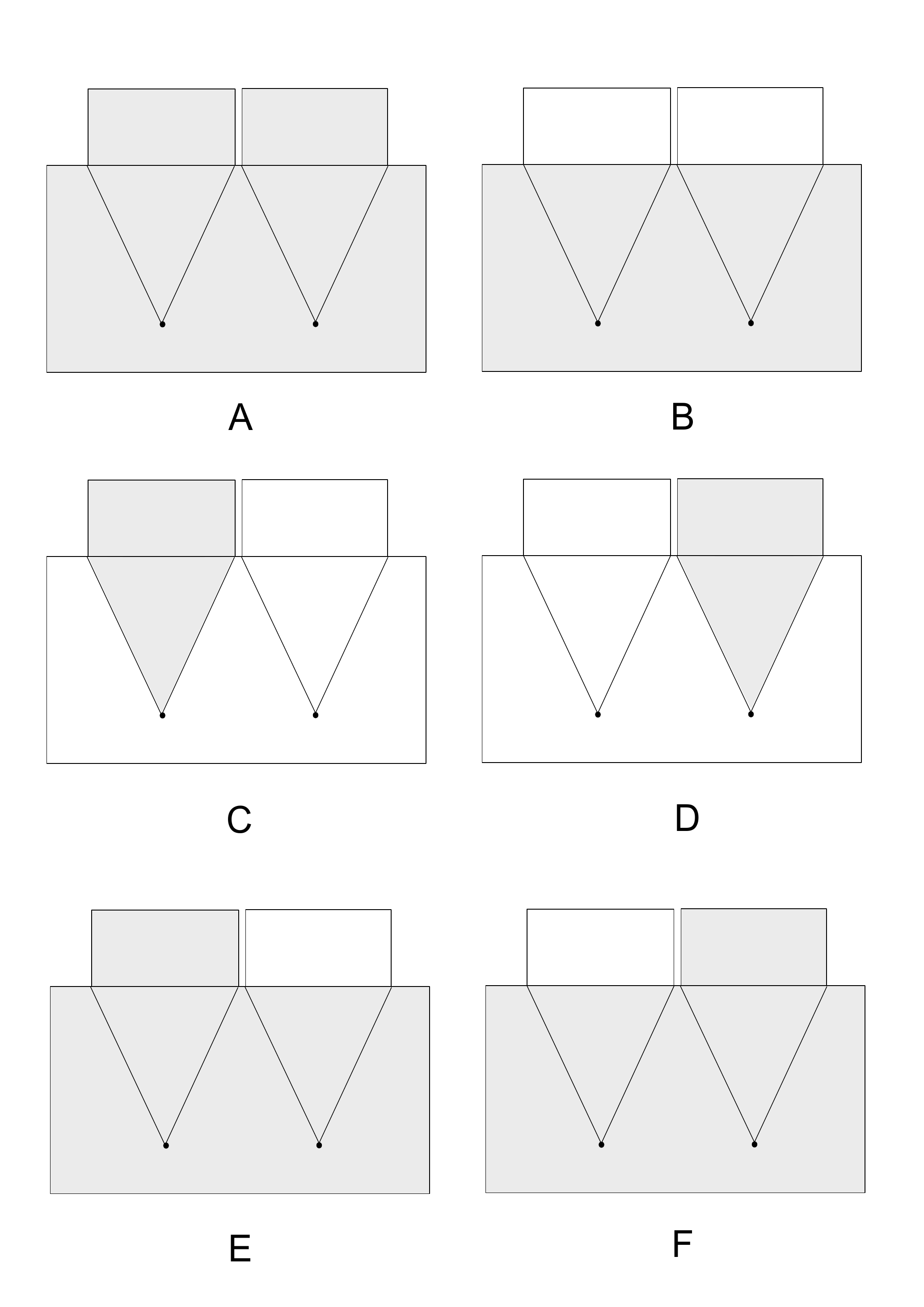}}
\end{picture}
\end{center}
\caption{\label{f:amalg} Amalgamation of networks}
\end{figure}

We have to check that the following claims all hold:
\begin{enumerate}
\item $\mathcal{N}'$ is anticonfluent,
\item $\mathcal{N}_i \sqsubseteq \mathcal{N}'$ for each $i \in \{1,...,n\}$,
\item $\simup{\mathcal{N}}{\mathcal{N}'}{\{u_1,...,u_n\}}$.
\end{enumerate}
For the proof of item (1), suppose $v_1,v_2,v_3,v_4$ are nodes in $\mathcal{N}'$ with $v_1 \reach{G'} v_2, v_1 \reach{G'} v_3, v_2 \reach{G'} v_4$ and $v_3 \reach{G'} v_4$, where $v_2 \neq v_3$. If both $v_2$ and $v_3$ belong to $\mathcal{N}$, then $v_1$ does too since $\mathcal{N}$ is downwards cofinal in $\mathcal{N}'$. Let $i$ be an index with $v_4$ belonging to $\mathcal{N}_i$. Since $\mathcal{N} \sqsubseteq \mathcal{N}_i$, all four nodes $v_1,v_2,v_3,v_4$ belong to $\mathcal{N}_i$, and this contradicts the assumption that $\mathcal{N}_i$ was anticonfluent. If one of $v_2,v_3$ belongs to $\mathcal{N}$ but not the other, say only $v_2$ belongs to $\mathcal{N}$, then pick $i$ with $v_3$ belonging to $\mathcal{N}_i$. Since $v_3$ is not in $\mathcal{N}$ and $\mathcal{N} \equp{u_i} \mathcal{N}_i$ we have $u_i \reach{G_i} v_3$, i.e. $v_3 \in \upgen{\mathcal{N}_i}{u_i}$. Since $v_3 \reach{G'} v_4$ and $\mathcal{N}$ is downwards cofinal in each $\mathcal{N}_j$, $v_4$ is not in $\mathcal{N}$. So we find some $\mathcal{N}_j$ with $v_4 \in \upgen{\mathcal{N}_j}{u_j}$. Since $v_3 \reach{G'} v_4$, by the construction of $\mathcal{N}'$ and the assumption that the sets $\upgen{\mathcal{N}_1}{u_1},...,\upgen{\mathcal{N}_n}{u_n}$ are pairwise disjoint, clearly this can only be the case if $i = j$. So we have $v_4$ belonging to $\mathcal{N}_i$ as well, and $v_1,v_2$ belong to $\mathcal{N}_i$ since they both belong to $\mathcal{N}$. Again we reach a contradiction since $\mathcal{N}_i$ was assumed to be anticonfluent. If neither $v_2$ nor $v_3$ belong to $\mathcal{N}$, then there must be $i,j$ with $v_2 \in \upgen{\mathcal{N}_i}{u_i}$ and $v_3 \in \upgen{\mathcal{N}_j}{v_j}$. Since $v_2 \reach{G'} v_4$ and $v_2 \reach{G'} v_4$, it again follows by similar reasoning as before that $i = j$ and $v_4 \in \mathcal{N}_i$. Furthermore, since $v_1 \reach{G'} v_2 $ and  $v_1 \reach{G'} v_3$, it is clear that $v_1$ belongs to $\mathcal{N}_i$ also: otherwise, it does not belong to $\mathcal{N}$, hence belongs to some set $\upgen{\mathcal{N}_j}{v_j}$  and again we get $i = j$, contradiction. So again, all four nodes $v_1,v_2,v_3,v_4$ belong to $\mathcal{N}_i$, and we get a contradiction. This shows that $\mathcal{N}'$ is anticonfluent.

Item (2) follows because the successors and predecessors of a node $v$ in $\mathcal{N}_i$ are the same as they are in $\mathcal{N}'$. Finally, item (3) is obvious from the assumption that  $\mathcal{N} \equp{u_i} \mathcal{N}_i$  for each $i \in \{1,...,n\}$.
\end{proof}

\subsection{Defects}

As usual in networks based arguments, our model construction will produce an infinite chain of finite approximating networks, where each network in the chain removes all the \emph{defects} of the previous one. We will introduce two types of defects, corresponding to modalities and fixpoint formulas. 

\begin{defi}
A \emph{$\fdia$-defect} in a network $\mathcal{N}$  is a node $u \notin S_F$. 
Similarly, a  \emph{$\bdia$-defect}  in a network $\mathcal{N}$  is a node $u \notin S_P$. 
\end{defi}

Defining defects corresponding to fixpoint formulas is more involved.

\begin{defi}
Suppose $\psi(x,\vec{q}) \in \Gamma$ is a potential $\Sigma$-deferral where $\psi(x,\vec{q})$ is a subformula of some formula  $\chi(x,\vec{q}) \in \Gamma$ with $\sharp_\chi \vec{\theta} \in \Sigma$.
Then $\psi(x,\vec{q})$ is called a $\sharp_\chi \vec{\theta}$-\emph{deferral} at $u$ in a network $\mathcal{N}$ if $\psi(\sharp_\chi \vec{\theta}, \vec{\theta}) \in L(u)$. In particular, the variable $x$ counts a  $\sharp_\chi \vec{\theta}$-deferral at $u$  if $\sharp_\chi \vec{\theta} \in L(u)$, and the formula $\sharp_\chi \vec{\theta}$ is then  called a \emph{focus} at $u$. 
\end{defi}

\begin{defi}
\label{d:finished}
Let $d$ be a $\sharp_\chi\vec{\theta}$-deferral at $u$ in $\mathcal{N} = (G,L,S_F,S_P)$ where $\chi(x,\vec{q})$ is a disjunctive fixpoint connective. We say that $d$ is \emph{finished in $k$ steps} at $u$ in $\mathcal{N}$ if:
\begin{itemize}
\item $d = x$ and $\chi(\bot,\vec{\theta}) \in L(u)$, or
\item $d = x$, $k = h + 1$ and the deferral $\chi(x,\vec{q})$ is finished in $h$ steps at $u$ in $\mathcal{N}$, or
\item $d = \psi_1(x,\vec{q}) \vee \psi_2(x,\vec{q})$ and for some $i \in \{1,2\}$, $\psi_i(\sharp_\chi \vec{\theta},\vec{\theta}) \in L(u)$ and the formula $\psi_i(x,\vec{q})$ is either not a potential $\sharp_\chi\vec{\theta}$-deferral or it is finished in $k$ steps, or
\item $d = \gamma \wedge \psi(x,\vec{q})$ where $x$ does not appear in $\gamma$,  $\psi(\sharp_\chi \vec{\theta},\vec{\theta}) \in L(u)$ and  $\psi_i(x,\vec{q})$ is either not a $\sharp_\chi\vec{\theta}$-deferral or it is finished in $k$ steps, and $\gamma \in L(u)$, or
\item  $d = \fnabla \{\psi_1(x,\vec{q}),...,\psi_n(x,\vec{q})\}$, $u \in S_F$, and there is a full relation $Z$ between the set of successors of $u$ in $G$ and the set of formulas $\{\psi_1(x,\vec{q}),...,\psi_n(x,\vec{q})\}$ such that:
\begin{enumerate}
\item if $v Z \gamma(x,\vec{q})$ then $\gamma(\sharp_\chi \vec{\theta}, \vec{\theta}) \in L(v)$, and
\item if $v Z \gamma(x,\vec{q})$ and $\gamma(x,\vec{q})$ is a $\sharp_\chi \vec{\theta}$-deferral, then this deferral is finished in $k$ steps at $v$.
\end{enumerate}
\item  $d = \bnabla \{\psi_1(x,\vec{q}),...,\psi_n(x,\vec{q})\}$, $u \in S_P$, and there is a full relation $Z$ between the set of predecessors of $u$ in $G$ and the set of formulas $\{\psi_1(x,\vec{q}),...,\psi_n(x,\vec{q})\}$ such that:
\begin{enumerate}
\item if $v Z \gamma(x,\vec{q})$ then $\gamma(\sharp_\chi \vec{\theta}, \vec{\theta}) \in L(v)$, and
\item if $v Z \gamma(x,\vec{q})$ and $\gamma(x,\vec{q})$ is a $\sharp_\chi \vec{\theta}$-deferral, then this deferral is finished in $k$ steps at $v$.
\end{enumerate}
\end{itemize} 
We say that a deferral $d$ is \emph{eventually finished} at $u$ in the network $\mathcal{N}$ if for some $k < \omega$, it is finished in $k$ steps. If this is not the case then we call $d$ a $\mu$-\emph{defect} of $u$. If $d$ is eventually finished then the smallest $k$ witnessing this is called the \emph{timeout} of $d$ at $u$ in $\mathcal{N}$. 

We say that the focus $\sharp_\chi \vec{\theta} \in L(u)$ is eventually finished at $u$ if the $\sharp_\chi \vec{\theta}$-deferral $x$ is eventually finished at $u$.
\end{defi}


An important property of the subnetwork relation is that any deferral finished in a subnetwork is finished in the larger network also:

\begin{prop}
\label{p:stayfinished}
Suppose $\mathcal{N} \sqsubseteq \mathcal{N}'$, and let $\chi(x,\vec{q})$ be a disjunctive fixpoint connective. Then any $\sharp_\chi \vec{\theta}$-deferral $d$ at a node $u \in \mathcal{N}$ that is eventually finished in the network  $\mathcal{N}$ is eventually finished in $\mathcal{N}'$ also.
\end{prop}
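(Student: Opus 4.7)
The plan is to prove this by induction on the timeout $k$ of $d$ at $u$ in $\mathcal{N}$, with a secondary induction on the structural complexity of $d$ (using lexicographic order on the pair (timeout, complexity)). The proof reduces to checking each clause of Definition \ref{d:finished} case by case. Throughout, I will use two simple facts from the definition of $\sqsubseteq$: first, $L(u) = L'(u)$ for every $u \in \underline{G}$; second, $S_F \subseteq S_F'$ and $S_P \subseteq S_P'$.

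First I would dispose of the easy propositional cases. If $d = x$ and $\chi(\bot,\vec{\theta}) \in L(u)$, then the same membership holds in $L'(u)$, so $d$ is finished in $\mathcal{N}'$ at $u$. If $d = x$ with $k = h+1$ and $\chi(x,\vec{q})$ finished in $h$ steps at $u$ in $\mathcal{N}$, the primary induction hypothesis (on $k$) gives that $\chi(x,\vec{q})$ is eventually finished at $u$ in $\mathcal{N}'$, and hence so is $d = x$. The disjunctive and conjunctive cases ($d = \psi_1 \vee \psi_2$ and $d = \gamma \wedge \psi$) use the secondary induction hypothesis (on structural complexity) together with the fact that membership of subformulas in $L(u)$ transfers to $L'(u)$.

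The interesting cases are the two modal ones, and here lies the crux. Suppose $d = \fnabla \{\psi_1,\ldots,\psi_n\}$ is finished in $k$ steps at $u$ in $\mathcal{N}$, witnessed by $u \in S_F$ and a full relation $Z$ between the successors of $u$ in $G$ and $\{\psi_1,\ldots,\psi_n\}$. Since $S_F \subseteq S_F'$, we have $u \in S_F'$. The key observation is that, by the subnetwork condition, every $\mathcal{N}'$-successor $v$ of $u$ already lies in $\underline{G}$, and conversely every $\mathcal{N}$-successor of $u$ remains a successor in $\mathcal{N}'$ since $G$ is a subgraph of $G'$. Thus $u$ has exactly the same set of successors in $\mathcal{N}$ and $\mathcal{N}'$, and $L'$ agrees with $L$ on them. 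Hence the same relation $Z$ is a full witness in $\mathcal{N}'$. Condition (1) transfers trivially via $L = L'$ on $\underline{G}$; for condition (2), each subformula $\gamma$ paired by $Z$ with some successor $v$ is a $\sharp_\chi\vec{\theta}$-deferral of strictly smaller complexity than $d$, finished in $k$ steps at $v$ in $\mathcal{N}$, so the secondary induction hypothesis yields finishedness at $v$ in $\mathcal{N}'$. The $\bnabla$ case is entirely symmetric, using $S_P \subseteq S_P'$ and the predecessor clause of the subnetwork definition.

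The main obstacle, and the reason the statement is true at all, is this modal step: it is precisely the strengthened notion of subnetwork $\sqsubseteq$ (as opposed to mere containment $\subseteq$) that guarantees the set of successors of a saturated node is frozen, so a full witnessing relation in $\mathcal{N}$ remains a full witnessing relation in $\mathcal{N}'$. Were we working with $\subseteq$ instead, new successors could appear in $\mathcal{N}'$ for which no $\psi_i$ has been assigned by $Z$, destroying fullness; this is exactly what motivated introducing $\sqsubseteq$ in the first place.
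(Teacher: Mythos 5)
Your proof is correct and follows essentially the same route as the paper's: a nested induction on the timeout and the structural complexity of the deferral, with the modal cases resolved by observing that the subnetwork condition freezes the successor (resp.\ predecessor) set of a node in $S_F$ (resp.\ $S_P$), so the witnessing full relation $Z$ carries over unchanged. The only cosmetic difference is that the paper proves the slightly stronger claim that the timeout is preserved exactly, whereas you only track eventual finishedness, which also suffices.
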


\begin{proof}
By nested induction on the timeout of $d$ at $u$ in $\mathcal{N}$ and the complexity of the deferral $d$, we prove that each eventually finished deferral $d$ has the same timeout in $\mathcal{N}'$ as it had in $\mathcal{N}$.  If $d = x$ and $d$ has timeout $0$ then $\chi(\bot,\vec{\theta}) \in L(u) = L'(u)$, so $d$ is finished with timeout $0$ in $\mathcal{N}'$ as well. If $d = x$ and $d$ has timeout $k + 1$ then the deferral $\chi(x,\vec{q})$ has timeout $k$ in $\mathcal{N}$, and we can apply the induction hypothesis to the timeout $k$. If $d = d_1 \vee d_2$ and has timeout $k$ in $\mathcal{N}$ then either  one of the disjuncts belongs to $L(u)$ and is not a deferral (this case is trivial), or $d_1$ has timeout $k $ in $\mathcal{N}$ or $d_2$ has timeout $k$ in $\mathcal{N}$, and we can apply the induction hypothesis to either $d_1$ or $d_2$. The easy case of $d = \gamma \wedge \psi(x,\vec{q})$ where $x$ does not appear in $\gamma$ is left to the reader.

Finally, suppose $d = \fnabla \{\psi_1(x,\vec{q}),...,\psi_n(x,\vec{q})\}$. If $d$ is eventually finished at $u $ in $\mathcal{N}$, then $u \in S_F$  and there is a full relation $Z$ between successors of $u$ in $G$ and the formulas $\{\psi_1(x,\vec{q}),...,\psi_n(x,\vec{q})\}$ such that:
 if $v Z \gamma(x,\vec{q})$ then $\gamma(\sharp_\chi \vec{\theta}, \vec{\theta}) \in L(v)$, and
if $v Z \gamma(x,\vec{q})$ and $\gamma(x,\vec{q})$ is a $\sharp_\chi \vec{\theta}$-deferral, then this deferral is finished in $k$ steps at $v$.
 But since $u \in S_F$ and $\mathcal{N} \sqsubseteq \mathcal{N}'$, it follows by the definition of a subnetwork that the successors of $u$ in $G'$ are the same as in $G$. So $Z$ is still a full relation between the successors of $u$ in $G'$ and the formulas $\{\psi_1(x,\vec{q}),...,\psi_n(x,\vec{q})\}$, and the required conditions (1) and (2) from Definition \ref{d:finished} hold (the second condition due to the induction hypothesis applied to the formulas $\psi_1(x,\vec{q}),...,\psi_n(x,\vec{q})$). The case of $d = \bnabla \{\psi_1(x,\vec{q}),...,\psi_n(x,\vec{q})\}$ is proved in the same way, using $u \in S_P$.
\end{proof}

\begin{defi}
A network is said to be \emph{perfect} if it has no defects.
\end{defi}

From  perfect networks, we can construct satisfying models for consistent formulas:

\begin{defi}
The Kripke model $\bbS_\mathcal{N} = (W_\mathcal{N},R_\mathcal{N},V_\mathcal{N})$ induced by a $\Sigma$-network $\mathcal{N} = (G,L,S_F,S_P)$ is defined by setting $W_\mathcal{N} = \underline{G}$, $u R_\mathcal{N} v$ iff $u \edge{G} v$, and set $$V_{\mathcal{N}}(p) = \{u \in W_\mathcal{N} \mid p \in L(u)\}$$ for $p \in \Sigma$, $V(p) = \emptyset$ otherwise.
\end{defi}

\begin{prop}
\label{p:ismodel}
Suppose $\Sigma$ is a finite Fischer-Ladner closed set of formulas in $\languagesym_\Gamma(\Prop)$, where $\Gamma$ is a sef of disjunctive fixpoint connectives. If $\mathcal{N} = (G,L,S_F,S_P)$ is a perfect $\Sigma$-network and $u \in \underline{G}$, then $\bbS_\mathcal{N},u \Vdash \bigwedge L(u)$.
\end{prop}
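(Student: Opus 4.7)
The plan is to prove by induction on the Fischer-Ladner complexity of $\varphi$ the stronger biconditional $\varphi \in L(v) \Leftrightarrow \bbS_\mathcal{N}, v \Vdash \varphi$ for every $v \in \underline{G}$ and every $\varphi \in \Sigma$; since each $\Sigma$-atom contains exactly one of $\varphi$ and ${\sim}\varphi$, the biconditional implies the statement. Propositional and Boolean cases are immediate from the atom properties and the definition of $V_\mathcal{N}$. For the modal cases, perfection guarantees that every node lies in $S_F$ and $S_P$: if $\fdia \psi \in L(u)$, saturation yields a successor $v$ with $\psi \in L(v)$, so the IH on $\psi$ gives $\bbS_\mathcal{N}, u \Vdash \fdia \psi$; conversely, if some successor $v$ has $\bbS_\mathcal{N}, v \Vdash \psi$, the IH gives $\psi \in L(v)$, and $\fdia \psi \in L(u)$ must hold because $\neg \fdia \psi \in L(u)$ would, via the provable equivalence to $\fbox \neg \psi$ and the prenetwork $\fbox$-condition, force $\neg \psi \in L(v)$, contradicting $\psi \in L(v)$. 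The $\fbox, \bdia, \bbox$ cases are analogous, using $S_P$ and the dual prenetwork condition.

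The central case is $\varphi = \sharp_\chi \vec{\theta}$ with $\chi$ disjunctive. For $\sharp_\chi \vec{\theta} \in L(u)$, since $\mathcal{N}$ has no $\mu$-defects, the $\sharp_\chi \vec{\theta}$-deferral $d = x$ is eventually finished at $u$ with some timeout $k$. I establish by well-founded induction on pairs (timeout, structural complexity of the deferral) the claim that whenever a $\sharp_\chi \vec{\theta}$-deferral $d$ is finished in $k$ steps at $v$, then $\bbS_\mathcal{N}, v \Vdash d(\sharp_\chi \vec{\theta}, \vec{\theta})$. The cases follow Definition~\ref{d:finished}: $d = x$ with $\chi(\bot, \vec{\theta}) \in L(v)$ invokes the outer IH on $\chi(\bot, \vec{\theta})$ (strictly lower Fischer-Ladner complexity) and closes via monotonicity and the pre-fixpoint axiom; disjunctions and conjunctions-with-$x$-free conjuncts reduce to a chosen witness; the $\fnabla$ case extracts the full relation $Z$ between the successors of $v$ (all present because $v \in S_F$) and the $\psi_i$, applies the earlier $\fnabla$ semantic characterization via full relations, and at each matched successor appeals either to the outer IH (if $\psi_i$ is $x$-free) or to the inner IH (smaller timeout); the $\bnabla$ case is symmetric via $S_P$. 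Applying the claim at $d = x$ and $v = u$ yields $\bbS_\mathcal{N}, u \Vdash \sharp_\chi \vec{\theta}$.

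For the converse, assume $\sharp_\chi \vec{\theta} \notin L(u)$ and set $P = \{v \in W_\mathcal{N} : \sharp_\chi \vec{\theta} \in L(v)\}$. I aim to show $P$ is a pre-fixpoint of the monotone map $F(Z) = \tset{\chi(x, \vec{\theta})}_{\bbS_\mathcal{N}[x \mapsto Z]}$, so that $u \notin P$ places $u$ outside the least pre-fixpoint and hence $\bbS_\mathcal{N}, u \not\Vdash \sharp_\chi \vec{\theta}$. The auxiliary lemma, proved by structural induction on subformulas $\psi(x, \vec{q})$ of $\chi$, states that $v \in \tset{\psi}_{\bbS_\mathcal{N}[x \mapsto P,\, \vec{q} \mapsto \tset{\vec{\theta}}_{\bbS_\mathcal{N}}]} \Leftrightarrow \psi(\sharp_\chi \vec{\theta}, \vec{\theta}) \in L(v)$ for every $v$. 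The base case $\psi = x$ is the definition of $P$; the case $\psi = q_i$ invokes the outer IH on $\theta_i$ (strictly smaller Fischer-Ladner complexity); the Boolean and modal cases proceed as in the first paragraph. Instantiating at $\psi = \chi$ gives $F(P) = \{v : \chi(\sharp_\chi \vec{\theta}, \vec{\theta}) \in L(v)\} = P$ by the atom unfolding property, so $P$ is indeed a (pre-)fixpoint.

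The principal technical difficulty lies in synchronizing the nested inductions: the outer Fischer-Ladner induction must provide an IH on $\chi(\bot, \vec{\theta})$, on each $\theta_i$, and on every subformula of $\chi(\sharp_\chi \vec{\theta}, \vec{\theta})$ strictly below $\sharp_\chi \vec{\theta}$; the inner positive induction must be well-founded on (timeout, deferral structure); and the auxiliary negative induction must accurately track how the semantic substitution $x \mapsto P$ corresponds, at the label level, to the syntactic substitution $x \mapsto \sharp_\chi \vec{\theta}$. Once these inductive schemes are in place, the case analysis unfolds mechanically from the clauses of disjunctive deferrals and the saturation conditions supplied by perfection.
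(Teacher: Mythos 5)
Your proposal is correct and follows essentially the same route as the paper: the same outer induction on fixpoint nesting depth and complexity, the same inner lexicographic induction on (timeout, deferral complexity) for the positive fixpoint case, and the same pre-fixpoint argument via the set $P = \{v \mid \sharp_\chi \vec{\theta} \in L(v)\}$ for the negative case. The only cosmetic difference is that the paper carries the finite approximants $\chi^k(\bot,\vec{\theta})$ through the inner induction and concludes via Proposition \ref{p:approximants}, whereas you substitute $\sharp_\chi \vec{\theta}$ directly and close each step with the semantic fixpoint equation; both are sound.
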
 

 \begin{proof}
We prove by a nested induction on the fixpoint nesting depth and the  complexity of a formula $\varphi \in \Sigma$ that $\varphi \in L(u)$ iff $\bbS_\mathcal{N},u \Vdash \varphi$. The cases for propositional variables and boolean connectives are completely standard. The cases for modal operators is straightforward, using the fact that $\mathcal{N}$ has no $\fdia$-defects or $\bdia$-defects.

Now suppose that $\sharp_\chi \vec{\theta} \in L(u)$. Since $\mathcal{N}$ has no $\mu$-defects,  the deferral $\chi(x,\vec{q})$ has a timeout $k$. We prove, by simultaneous induction on the timeout $k$ of an eventually finished $\sharp_\chi \vec{\theta}$-deferral $\psi(x,\vec{q})$ at a node $u$, and on the complexity of the deferral, that $\bbS_\mathcal{N}, u\Vdash \psi(\delta^k,\vec{\theta})$, where we write $\delta^k = \chi^k(\bot,\vec{\theta})$. That is, $\delta^k$ is defined inductively as $\delta^0 = \chi(\bot,\vec{\theta})$ and $\delta^{k + 1} = \chi(\delta^k,\vec{\theta})$. The desired conclusion then follows from Proposition \ref{p:approximants}. We focus on the case where $\chi$ is forward-looking since the other case is proved in the same manner. 

The base case is where $d = x$ and $k = 0$. Then $\chi(\bot,\vec{\theta}) \in L(u)$, and since this formula has lower fixpoint depth than $\sharp_\chi \vec{\theta}$ the main induction hypothesis applies and we get $\bbS_\mathcal{N},u \Vdash \chi(\bot,\vec{\theta})$, i.e. $\bbS_\mathcal{N},u \Vdash \delta^0$ as required. 

Now suppose the induction hypothesis holds for all timeouts smaller than $k$ and for all deferrals of lower complexity than $d$. We prove that the hypothesis holds also for $d$ with the timeout $k$. The induction step for disjunctions is entirely straightforward, and so is the case for $\gamma \wedge \psi(x,\vec{\theta})$ where $x $ does not occur in $\gamma$. The case where $d = x$ and $k = 0$ has already been taken care of. The case for $d = x$ and $k = k' +1$ for some $k'$ is handled by noting that $\delta^{k} = \chi(\delta^{k'},\vec{\theta})$, and we apply the induction hypothesis on $k'$  to the deferral $\chi(x,\vec{q})$ to get $\bbS_\mathcal{N},u \Vdash    \chi(\delta^{k'},\vec{\theta})$, i.e.  $\bbS_\mathcal{N},u \Vdash    \delta^{k}$ as required.

For the case where $d = \fnabla \{\psi_1(x,\vec{q}),...,\psi_n(x,\vec{q})\}$, we have $u \in S_F$,  and there is a full relation $Z$ between successors of $u$ and the formulas $\{\psi_1(x,\vec{q}),...,\psi_n(x,\vec{q})\}$ such that:
\begin{enumerate}
\item if $v Z \gamma(x,\vec{q})$ then $\gamma(\sharp_\chi \vec{\theta}, \vec{\theta}) \in L(v)$, and
\item if $v Z \gamma(x,\vec{q})$ and $\gamma(x,\vec{q})$ is a $\sharp_\chi \vec{\theta}$-deferral, then this deferral is finished in $k$ steps at $v$.
\end{enumerate}
Now if $v Z \gamma(x,\vec{q})$ and $\gamma(x,\vec{q})$ is not a deferral, the variable $x$ does not appear in $\gamma(x,\vec{q})$ and so  the formula $\gamma(\sharp_\chi \vec{\theta}, \vec{\theta})$ has smaller fixpoint nesting depth than $\sharp_\chi \vec{\theta}$, so  by the induction hypothesis on the formula $\gamma(\delta^{k}, \vec{\theta})$  we have $\bbS_\mathcal{N},v \Vdash \gamma(\delta^{k},\vec{\theta})$. On the other hand, if $\gamma(x,\vec{q})$ is a deferral then this deferral is finished in $k$ steps at $v$, and the induction hypothesis on the complexity of deferrals applies to the deferral $\gamma(x,\vec{q})$, so again we have $\bbS_\mathcal{N},v \Vdash \gamma(\delta^{k},\vec{\theta})$. Since $Z$ was a full relation between the successors of $u$ and the formulas $\{\psi_1(x,\vec{q}),...,\psi_n(x,\vec{q})\}$ it follows that:
$$\bbS_\mathcal{N},v \Vdash \fnabla \{\psi_1(\delta^{k},\vec{\theta}),...,\psi_n(\delta^{k},\vec{\theta}) $$
as required. 
The case where $d = \bnabla \{\psi_1(x,\vec{q}),...,\psi_n(x,\vec{q})\}$ is handled in the same manner.

Conversely, suppose that $\sharp_\chi \vec{\theta} \notin L(u)$. Then $u \notin F = \{v \in G \mid \sharp_\chi \vec{\theta} \in L(v)\}$, and so it suffices to show that $F$ is a prefixpoint for the map $\tset{\chi(-,\vec{\theta})}_{\bbS_\mathcal{N}}$. Since $F =  \{v \in G \mid \chi(\sharp_\chi \vec{\theta},\vec{\theta}) \in L(v)\}$ -- because the formula $\sharp_\chi \vec{\theta} \leftrightarrow  \chi(\sharp_\chi \vec{\theta},\vec{\theta})$ is provable -- it suffices to show by induction on the complexity of a $\sharp_\chi \vec{\theta}$-deferral $\psi(x,\vec{q})$ that for all $v \in \underline{G}$, $\psi(\sharp_\chi \vec{\theta},\vec{\theta}) \in L(v)$  iff $\bbS_\mathcal{N}[x \mapsto F],v \Vdash \psi(x,\vec{\theta})$.  For the case where $\psi(x,\vec{q}) = x$, we have $\psi(\sharp_\chi \vec{\theta},\vec{\theta}) = \sharp_\chi \vec{\theta}$ and the result follows immediately by definition of the valuation of $x$ as the set $\{v \in G \mid \chi(\sharp_\chi \vec{\theta},\vec{\theta}) \in L(v)\}$  in the model $\bbS_\mathcal{N}[x \mapsto F]$. The rest of the induction now proceeds by a straightforward induction on the complexity of $\psi(x,\vec{q})$, using the main induction hypothesis on the formulas $\vec{\theta}$, so we omit the details.
\end{proof}

The rest of the completeness proof is devoted to showing how to remove defects in a network. We start with the easy case:

\begin{prop}
\label{p:removefpdefects}
For every node $u$ in a finite anticonfluent network $\mathcal{N}$, there exists a finite and anticonfluent network $\mathcal{N}'$ with $\mathcal{N} \sqsubseteq \mathcal{N}'$ in which $u$ is not an $\fdia$-defect. Similarly, for every node $u$ in a finite anticonfluent network $\mathcal{N}$, there exists a finite and anticonfluent network $\mathcal{N}'$ with $\mathcal{N} \sqsubseteq \mathcal{N}'$ in which $u$ is not a $\bdia$-defect.
\end{prop}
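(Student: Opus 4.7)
The plan is to handle the $\fdia$-defect case directly, and then obtain the $\bdia$-defect case by a completely symmetric construction. Fix an $\fdia$-defect $u$ in the finite anticonfluent network $\mathcal{N} = (G, L, S_F, S_P)$. Since $L(u)$ is a $\Sigma$-atom, I first pick a maximal consistent set $\Psi \subseteq \languagesym_\Gamma(\Prop)$ with $L(u) = \Psi \cap \Sigma$. For each of the finitely many formulas $\fdia \varphi \in L(u)$, the Existence Lemma (Proposition \ref{p:existence}) provides a maximal consistent $\Psi_\varphi$ with $\Psi R^C \Psi_\varphi$ and $\varphi \in \Psi_\varphi$, and I set $\Theta_\varphi := \Psi_\varphi \cap \Sigma$; this is a $\Sigma$-atom containing $\varphi$ by Fischer--Ladner closure.

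Next I define $\mathcal{N}'$ by adjoining, for each $\fdia \varphi \in L(u)$ and each $i \in \{1, \ldots, \dfrl\}$, a fresh sink $v^\varphi_i$ labelled by $\Theta_\varphi$ with its only incoming edge coming from $u$, while setting $S_F' := S_F \cup \{u\}$ and $S_P' := S_P$. The prenetwork labelling conditions on the new edges come straight from Proposition \ref{p:canon}: if $\fbox \psi \in L(u) \subseteq \Psi$, then $\psi \in \Psi_\varphi$ by definition of $R^C$, so $\psi \in \Theta_\varphi$; and if $\bbox \psi \in L(v^\varphi_i) \subseteq \Psi_\varphi$, then $\psi \in \Psi$ by Proposition \ref{p:canon}(2), so $\psi \in L(u)$. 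Since the new nodes are sinks attached via a single incoming edge, $\mathcal{N}'$ is still a finite DAG; and the subnetwork relation $\mathcal{N} \sqsubseteq \mathcal{N}'$ holds because $u \notin S_F$ originally and no node of $S_P$ has been given a new predecessor. By construction the new $u \in S_F'$ satisfies the $S_F$-saturation clause (distinct copies of $\Theta_\varphi$ for each $\varphi$), so $u$ is no longer an $\fdia$-defect.

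The only subtle point, and the one I expect to be the main obstacle, is preservation of anticonfluence. Suppose for contradiction that $v_1, v_2, v_3, v_4 \in \underline{G}'$ realise a forbidden configuration with $v_2 \neq v_3$. Since fresh nodes have no outgoing edges, $v_1, v_2, v_3$ must lie in the old network. If $v_4$ is also old, then the whole configuration sits in $\mathcal{N}$, contradicting its anticonfluence. If $v_4 = v^\varphi_i$ is new, every path into $v_4$ factors through its unique predecessor $u$, so both $v_2$ and $v_3$ reach $u$ inside $\mathcal{N}$; but then taking $u$ itself as the common descendant yields a confluence already in $\mathcal{N}$, again a contradiction. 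The $\bdia$ case is dual: use the other half of the Existence Lemma to produce predecessor atoms, attach $\dfrl$ fresh source nodes whose unique outgoing edge lands in $u$, and include $u$ in $S_P'$; all verifications are completely symmetric. The anticonfluence reduction goes through precisely because the freshly added nodes are pendants attached to $\mathcal{N}$ through the single node $u$, so any alleged new branching collapses back to a branching already present in the original network.
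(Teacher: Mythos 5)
Your construction is exactly the one the paper uses: pick a maximal consistent extension of $L(u)$, let the canonical model supply $\dfrl$ fresh sink successors for each $\fdia\varphi \in L(u)$, put $u$ into $S_F'$, and check the labelling conditions via Proposition~\ref{p:canon}. The paper simply declares the remaining verifications ``straightforward'', so your write-up is if anything more explicit, and you correctly identify anticonfluence as the only delicate point.

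There is, however, one subcase your anticonfluence argument does not cover: the case where one of the two middle nodes of the alleged confluence, say $v_2$, is $u$ itself. Then $v_2$ does not \emph{strictly} reach $u$ (recall that $\gpath{G}$ is the irreflexive transitive closure), so ``taking $u$ as the common descendant'' does not produce a forbidden configuration inside $\mathcal{N}$. Taken to the letter, the paper's definition of anticonfluence does not exclude a triangle $a \edge{G} u$, $a \edge{G} b \edge{G} u$, and attaching any successor $v^\varphi_i$ to $u$ in such a network creates the forbidden quadruple $(a,b,u,v^\varphi_i)$; so in this subcase your reduction genuinely fails. The way out is that the definition cannot be meant literally anyway (a simple path $n_0 \edge{G} n_1 \edge{G} n_2 \edge{G} n_3$ already realises the forbidden pattern with middle nodes $n_1 \neq n_2$), and under the evidently intended reading --- at most one $G$-path between any two nodes, equivalently the two middle nodes are required to be incomparable under reachability --- your reduction is exactly right: two distinct paths into a fresh sink $v^\varphi_i$ restrict to two distinct paths into its unique predecessor $u$, which is already forbidden in $\mathcal{N}$, and the subcase $v_2 = u$ becomes vacuous since then $v_3 \gpath{G} v_2$. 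You should say which reading of anticonfluence you are using and dispose of the $v_2 = u$ subcase explicitly; the paper itself never addresses this.
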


\begin{proof}
Let $\mathcal{N} = (G,L,S_F,S_P)$.
We focus on the first part of the proposition since the second part is proved by essentially the same argument. If $u \in S_F$ then $u$ is already not an $\fdia$-defect, so we assume that $u \notin S_F$. Pick a maximal consistent set $\Gamma$ such that $L(u) = \Gamma \cap \Sigma$. For each formula $\fdia \varphi \in L(u)$, pick a successor $\Theta_\varphi$ of $\Gamma$ in the canonical model containing $\varphi$. Pick a distinct fresh object $v^\varphi_j$ of for each $\fdia \varphi \in L(u)$ and each $j \in \{1,...,\dfrl\}$, and define the graph $G'$ by taking $\underline{G'}$  to consist of the nodes in $\underline{G}$ together with all the fresh objects $v^\varphi_j$, and setting $w \edge{G'} w'$ iff either $w,w' \in \underline{G}$ and $w \edge{G} w'$, or $w = u$ and $w' = v^\varphi_j$ for some $\varphi,j$.   We set  $L'(v^\varphi_j) = \Theta_\varphi \cap \Sigma$ for $\fdia \varphi \in L(u)$ and $j \in \{1,...,\dfrl\}$, and for $w \in \underline{G}$ we set $L'(w) = L(w)$. Finally, we set $S_P' = S_P$, and $S_F' = S_F \cup \{u\}$. It is straightforward to check that $\mathcal{N}'$ is a network and that $\mathcal{N} \sqsubseteq \mathcal{N}'$, and of course $u$ is not a $\fdia$-defect of $\mathcal{N}'$ since $u \in S_F'$.
\end{proof}

\section{Using the induction rule to build networks}

The difficult part of the construction of a perfect network is showing how to finish deferrals. The first and most important step towards solving this problem is given by the following result, the proof of which is reminiscent of a well known completeness proof for propositional dynamic logic $\mathtt{PDL}$ (see \cite{blac:moda01}).

\begin{prop}
\label{p:inductionlemma}
Suppose $\Sigma$ is a finite Fischer-Ladner closed set of formulas in $\languagesym_\Delta(\Prop)$, where $\Delta$ is a sef of guarded and disjunctive fixpoint connectives. Let $A$ be a $\Sigma$-atom containing $\sharp_\chi \vec{\theta}$. If $\chi$ is forward-looking then there exists a finite network $\mathcal{N}$ containing a node $u$ with $L(u) = A$, such that $G$ is a tree rooted at $u$ (hence the network is anticonfluent), $S_P = \emptyset$, and in which the focus $\sharp_\chi \vec{\theta}$ is eventually finished at $u$. Similarly, if $\chi$ is backward-looking then there exists a finite  network $\mathcal{N}$ containing a node $u$ with $L(u) = A$,  such that $G^{\mathit{op}}$ is a tree rooted at $u$ (hence the network is anticonfluent), $S_F = \emptyset$, and in which the focus $\sharp_\chi \vec{\theta}$ is eventually finished at $u$. 
\end{prop}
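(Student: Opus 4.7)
The strategy is to adapt the standard completeness proof for propositional dynamic logic, using the Kozen--Park induction rule in place of induction on the Kleene star. I treat the forward-looking case; the backward case is entirely dual, with $\bnabla$, predecessors, and $S_F = \emptyset$ in place of $\fnabla$, successors, and $S_P = \emptyset$. Call a $\Sigma$-atom $A'$ \emph{good} if $\sharp_\chi \vec{\theta} \in A'$ and there exists a finite tree-shaped $\Sigma$-network with $S_P = \emptyset$, rooted at a node labeled $A'$, in which the focus $\sharp_\chi\vec{\theta}$ is eventually finished at the root; the aim is to show $A$ is good. Set
$$ \gamma \;:=\; \bigvee_{A' \text{ good}} \bigwedge A' $$
(with $\gamma := \bot$ if no good atom exists). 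Every good atom contains $\sharp_\chi\vec{\theta}$, so $\gamma \vdash \sharp_\chi\vec{\theta}$. If one can prove $\chi(\gamma,\vec{\theta}) \vdash \gamma$, the Kozen--Park induction rule yields $\sharp_\chi\vec{\theta} \vdash \gamma$; then consistency of $\bigwedge A$ together with the pairwise inconsistency of distinct $\Sigma$-atoms forces $A$ to appear as a disjunct of $\gamma$, so $A$ is good.

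The bulk of the work is the implication $\chi(\gamma,\vec{\theta}) \vdash \gamma$: for any $\Sigma$-atom $B$ with $\chi(\gamma,\vec{\theta}) \in \Psi_B$ (via some maximal consistent extension), one must exhibit a good-witnessing network for $B$. I would prove by induction on the structure of a disjunctive subformula $\psi(x,\vec{q})$ of $\chi$ the following claim: \emph{whenever $\psi(\gamma,\vec{\theta}) \in \Psi_{A'}$, there is a finite tree-shaped $\Sigma$-network rooted at $u$ with $L(u) = A'$ and $S_P = \emptyset$ in which, if $\psi$ contains $x$, the $\psi$-deferral at $u$ is eventually finished}. Applied to $\psi = \chi$ and $A' = B$, the $\chi$-deferral is finished at $u$ and hence the focus ($x$-deferral) one step later, so $B$ is good. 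The cases $\psi \in \{\bot,\top\}$, $\psi = \alpha \wedge \varphi$ with $x \notin \alpha$, and $\psi = \varphi_1 \vee \varphi_2$ reduce directly to the inductive hypothesis, using Fischer--Ladner closure of $\chi(\sharp_\chi\vec{\theta},\vec{\theta})$ to place the required formulas in $\Sigma$ and hence in $A'$, and monotonicity of each subformula in $x$ together with $\gamma \vdash \sharp_\chi\vec{\theta}$ to pass from $\psi(\gamma,\vec{\theta})$ back to $\psi(\sharp_\chi\vec{\theta},\vec{\theta})$. The base case $\psi = x$ is the bootstrap: $\gamma \in \Psi_{A'}$ forces $A'$ itself to be good by atom inconsistency, and the network witnessing its goodness is precisely what we need.

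The main obstacle is the case $\psi = \fnabla\{\psi_1,\dots,\psi_n\}$, because at the new root $u$ one must simultaneously satisfy two clauses: the $\fdia$-defect requirement demands $\dfrl$ fresh successors for \emph{every} $\fdia\varphi \in L(u)$, while the $\fnabla$-deferral clause of Definition~\ref{d:finished} demands that \emph{every} successor $v$ of $u$ be $Z$-related to some $\psi_j$ with $\psi_j(\sharp_\chi\vec{\theta},\vec{\theta}) \in L(v)$ and the $\psi_j$-deferral finished at $v$. The unblocking observation is that the box conjunct of $\fnabla$ gives $\fbox\bigl(\bigvee_i \psi_i(\gamma,\vec{\theta})\bigr) \in \Psi_{A'}$, so \emph{every} $R_C$-successor of $\Psi_{A'}$ contains some $\psi_j(\gamma,\vec{\theta})$. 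Concretely, I would use Proposition~\ref{p:existence} to pick, for each $\fdia\varphi \in A'$, an $R_C$-successor $\Psi^\varphi$ containing both $\varphi$ and some $\psi_{j(\varphi)}(\gamma,\vec{\theta})$, and (if some $\psi_i$ is not yet represented) an additional witness successor containing $\psi_i(\gamma,\vec{\theta})$; apply the inductive hypothesis to each chosen $\psi_{j(\varphi)}$ or $\psi_i$ at the corresponding $\Sigma$-atom to obtain a subtree; then attach $\dfrl$ fresh disjoint copies of each subtree as children of the new root $u$ labeled $A'$ and put $u \in S_F$. Defining $v Z \psi_j$ for the $\psi_j$ chosen in the construction of $v$'s subtree, fullness and the two clauses of Definition~\ref{d:finished} follow from the inductive hypothesis combined with monotonicity and Fischer--Ladner closure; the resulting graph is a tree, so anticonfluence is automatic. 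Guardedness of $\chi$ underpins the structural induction: every $x$-occurrence sits under some $\fnabla$, so the bootstrap case $\psi = x$ is only invoked at roots of subtrees already spawned by an $\fnabla$-step, which keeps the total network finite.
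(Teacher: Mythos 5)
Your proposal is correct and follows essentially the same route as the paper's own proof: the same disjunction over "good" atoms fed into the Kozen--Park induction rule, the same inner induction on the complexity of a potential deferral $\psi(x,\vec{q})$ with $\psi=x$ bootstrapped from the definition of the disjunction, and the same treatment of the $\fnabla$ case via canonical-model successors, the box conjunct, and $\dfrl$-fold copies of the inductively obtained subtrees. (One small remark: guardedness is not actually what makes this induction terminate --- it is an ordinary induction on formula complexity --- but this does not affect the argument.)
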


\begin{proof}
We focus on the case where $\chi$ is forward-looking, since the other case is proved in the same way.

Let $\delta$ be the disjunction of all formulas of the form $\bigwedge A$ such that:
\begin{itemize}
\item $A$ is a $\Sigma$-atom,
 \item $A$ contains $\sharp_\chi \vec{\theta}$,
\item there exists a finite  network $\mathcal{N}$ of which the underlying graph $G$ is a tree rooted at a node $u$, such that:
\item $L(u) = A$, and the focus $\sharp_\chi \vec{\theta}$ at $u$ is eventually finished in $\mathcal{N}$. 
\end{itemize}
Our aim is to prove $\vdash \chi(\delta,\vec{\theta}) \rightarrow \delta$;  we then get $\vdash \sharp_\chi \vec{\theta} \rightarrow \delta$ by the induction rule. Hence every atom $A$ containing $\sharp_\chi \vec{\theta}$ provably entails $\delta$, which means that the conjunction of such an atom must itself be one of the disjuncts of $\delta$: otherwise $A$ would be inconsistent with $\delta$ because any two distinct $\Sigma$-atoms are inconsistent with each other, so $A$ would be inconsistent, which contradicts the definition of an atom. By definition of $\delta$ this means that a suitable network satisfying the criteria stated in the proposition exists for the atom $A$, so we can conclude that this holds for any atom that contains $\sharp_\chi \vec{\theta}$, as required.

To prove the implication, we show that if an atom $A$ is consistent with the formula  $\chi(\delta,\vec{\theta})$, then there exists a  network $\mathcal{N} = (G,L,S_F,S_P)$ containing a node $u$ with $L(u) = A$, such that $G$ is a finite tree rooted at $u$, $S_P = \emptyset$, and the focus $\sharp_\chi \vec{\theta}$ at $u$ is eventually finished in $\mathcal{N}$.\footnote{Note that $\vdash \chi(\delta,\vec{\theta}) \rightarrow \sharp_\chi \vec{\theta}$: since clearly $\delta \vdash \sharp_\chi \vec{\theta}$ we get $\chi(\delta,\vec{\theta}) \vdash \chi(\sharp_\chi \vec{\theta},\vec{\theta})$, and the last formula is provably equivalent to $\sharp_\chi(\vec{\theta})$. So if $A$ is consistent with $\chi(\delta,\vec{\theta})$ then it is consistent with $\sharp_\chi \vec{\theta}$ and so since $\sharp_\chi \vec{\theta} \in \Sigma$ we get $\sharp_\chi \vec{\theta} \in A$.} It then follows that $\vdash \chi(\delta,\vec{\theta}) \rightarrow \delta$, for suppose otherwise: then there is a maximal consistent set $\Gamma$ containing $\chi(\delta,\vec{\theta})$ and $\neg \delta$. The atom $\Gamma \cap \Sigma$ is therefore consistent with $\chi(\delta,\vec{\theta})$, so there exists a network $\mathcal{N} = (G,L,S_F,S_P)$ containing a node $u$ with $L(u) = \Gamma \cap \Sigma$, such that $G$ is a finite tree rooted at $u$, and the focus $\sharp_\chi \vec{\theta}$ at $u$ is eventually finished in $\mathcal{N}$. But then $\bigwedge(\Gamma \cap \Sigma)$ must be one of the disjuncts in $\delta$, and  $\bigwedge(\Gamma \cap \Sigma) \in \Gamma$ so $\delta \in \Gamma$, contradicting  that $\neg\delta \in \Gamma$.  

We shall prove something more general: for every potential $\sharp_\chi \vec{\theta}$-deferral   $\psi(x,\vec{q})$ and for every atom $A$ such that the set $A \cup \{\psi(\delta,\vec{\theta})\}$ is consistent, there exists a  network $\mathcal{N} = (G,L,S_F,S_P)$ containing a node $u$ with $L(u) = A$, such that $G$ is a finite tree rooted at $u$, $S_P = \emptyset$, and such that the deferral $\psi(x,\vec{q})$ at $u$ is eventually finished in $\mathcal{N}$. We prove this by induction on the complexity of the formula $\psi(x,\vec{q})$.

If $\psi(x,\vec{q}) = x$ then the assumption amounts to saying that $A$ is consistent with $\delta$, which can only be the case if $\bigwedge A$ is one of the disjuncts of $\delta$. So in this case, it follows by definition of $\delta$ that a suitable network exists for $A$. 

If $\psi(x,\vec{q}) = \gamma_1(x,\vec{q}) \vee \gamma_2(x,\vec{q})$, then at least one of the formulas $\gamma_1(\delta,\vec{\theta})$ or  $ \gamma_2(\delta,\vec{\theta})$ must be consistent with $A$. Applying the induction hypothesis to this disjunct we obtain the required network for $A$, since if the $\sharp_\chi \vec{\theta}$-deferral $\gamma_i(x,\vec{q}) $ with $i \in \{1,2\}$ is eventually finished at $u$ in $\mathcal{N}$ then clearly so is  $\gamma_1(x,\vec{q}) \vee \gamma_2(x,\vec{q})$.

For the case of a potential deferral $\gamma \wedge \psi(x,\vec{q})$ where $x$ does not appear in $\gamma$, if $\gamma \wedge \psi(\delta,\vec{\theta})$ is consistent with $A$ then $\gamma \in A$ and $A$ is consistent with $\psi(\delta,\vec{\theta})$. So by the induction hypothesis we find an appropriate network $\mathcal{N}$ the underlying graph of which is a tree rooted at a node $u$ labelled with $A$, such that the deferral $\psi(x,\vec{q})$ is eventually finished at $u$ in $\mathcal{N}$. Since $\gamma \in L(u)$, the deferral $\gamma \wedge \psi(x,\vec{q})$ is also eventually finished at $u$ in $\mathcal{N}$. 

Now suppose $\psi(x,\vec{q}) = \fnabla \{\psi_1(x,\vec{q}),...,\psi_m(x,\vec{q})\}$, and suppose $A$ is consistent with the formula  $\fnabla \{\psi_1(\delta,\vec{\theta}),...,\psi_m(\delta,\vec{\theta})\}$. Let $\Gamma$ be a maximal consistent set extending the set $A$ and containing the formula $\fnabla \{\psi_1(\delta,\vec{\theta}),...,\psi_m(\delta,\vec{\theta})\}$. For each formula $\fdia \varphi \in A$, pick a successor $\Theta_\varphi$ of $\Gamma$ in the canonical model.  Note that every such successor contains one of the formulas $\psi_j(\delta,\vec{\theta})$, so let $$c : \{\varphi \mid \fdia \varphi \in A\} \to \{1,...,m\}$$ be a choice function such that $\psi_{c(\varphi)}(\delta,\vec{\theta}) \in \Theta_\varphi$ for each $\varphi$. Furthermore, for each $j \in \{1,...,m\}$, pick a successor $\Psi_j$ of $\Gamma$ in the canonical model with $\psi_j(\delta,\vec{\theta}) \in \Psi_j$, which exists since $\fdia\psi_j(\delta,\vec{\theta}) \in \Gamma $. 
Now, for each pair $(\varphi,i) \in \{ \varphi \mid \fdia \varphi \in A\} \times \{1,...,\dfrl\}$ we define a network $\mathcal{N}^\varphi_i$ as follows: if $c(\varphi) = \psi_j(x,\vec{q})$ is a potential $\sharp_\chi \vec{\theta}$-deferral, then we  define $\mathcal{N}^\varphi_i$ to be an arbitrarily chosen network $(G^\varphi_i,L^\varphi_i,(S_F)^\varphi_i,(S_P)^\varphi_i)$ such that  $G^\varphi_i$ is a finite tree rooted at some node $w^i_\varphi$, $L_i^\varphi(w^i_\varphi) = \Theta_\varphi \cap \Sigma$, $(S_P)^\varphi_i = \emptyset$ and the deferral $\psi_j(x,\vec{q})$ is eventually finished at $w_i^\varphi$ in $\mathcal{N}^\varphi_i$. Such a network exists by the induction hypothesis on $\psi_j(x,\vec{q})$, since $\psi_j(\delta,\vec{\theta}) \in \Theta_\varphi$ and so is consistent with $\Theta_\varphi \cap \Sigma$. Otherwise, we define $\mathcal{N}_i^\varphi = (G^\varphi_i,L^\varphi_i,(S_F)^\varphi_i,(S_P)^\varphi_i)$ by setting $G^\varphi_i$ to consist of a single node $w_i^\varphi$ and the empty set of edges, $(S_F)^\varphi_i = (S_P)^\varphi_i = \emptyset$, and $L_i^\varphi(w_i^\varphi) = \Theta_\varphi \cap \Sigma$.

Similarly, for each $j \in \{1,...,m\}$, we define a network $\mathcal{N}_j' = (G_j',L_j',(S_F)_j', (S_P)_j')$ by a case distinction: if $\psi_j(x,\vec{q})$ is a potential $\sharp_\chi \vec{\theta}$-deferral,  then we  define $\mathcal{N}_j'$ to be an arbitrarily chosen network such that  $G'_j$ is a finite tree rooted at some node $w_j'$, $L_j'(w'_j) = \Psi_j \cap \Sigma$, $(S_P)_j' = \emptyset$ and the deferral $\psi_j(x,\vec{q})$ is eventually finished at $w_j'$ in $\mathcal{N}_j'$. Such a network exists by the induction hypothesis on $\psi_j(x,\vec{q})$, since $\psi_j(\delta,\vec{\varphi}) \in \Psi_j$ and so is consistent with $\Psi_j \cap \Sigma$. Otherwise, we define $\mathcal{N}_j' $ by setting $G_j'$ to consist of a single node $w_j'$ and the empty set of edges, $(S_F)_j' = (S_P)_j' = \emptyset$, and $L_j'(w_j') = \Psi_j' \cap \Sigma$.

We can of course assume without loss of generality that the networks $\mathcal{N}_j'$ for $j \in \{1,...,m\}$ and $\mathcal{N}^\varphi_i$ for $\fdia \varphi \in A$ and $i \in \{1,...,\dfrl\}$ are pairwise disjoint. We form a new network $\mathcal{N}'' = (G'',L'',S_F'',S_P'')$ as follows: the set $\underline{G}''$ consists of all the nodes of each network $\mathcal{N}_j'$ for $j \in \{1,...,m\}$ and $\mathcal{N}^\varphi_i$ for $\fdia \varphi \in A$ and $i \in \{1,...,\dfrl\}$ together with a fresh node $u$, and we set $v \edge{G''} v'$ iff:
\begin{itemize}
\item  either for some $j \in \{1,...,m\}$ we have $v,v' \in \underline{G}_j'$ and $v \edge{G_j'} v'$, or
\item  for some $\fdia \varphi \in A$ and $i \in \{1,...,\dfrl\}$ we have $v,v' \in \underline{G}_i^\varphi$ and $v \edge{G_i^\varphi} v'$, or
\item  for some $j \in \{1,...,m\}$ we have  $v = u$ and $v' = w_j'$, or 
\item for some  $\fdia \varphi \in A$ and $i \in \{1,...,\dfrl\}$ we have $v = u$ and $v' = w_i^\varphi$. 
\end{itemize}
See Figure \ref{f:network} below.
\begin{figure}[h]
\begin{center}
\setlength{\unitlength}{0.5cm}
\begin{picture}(16,13)(0,0)
\put(-2.7,-14){\includegraphics[width = 10cm]{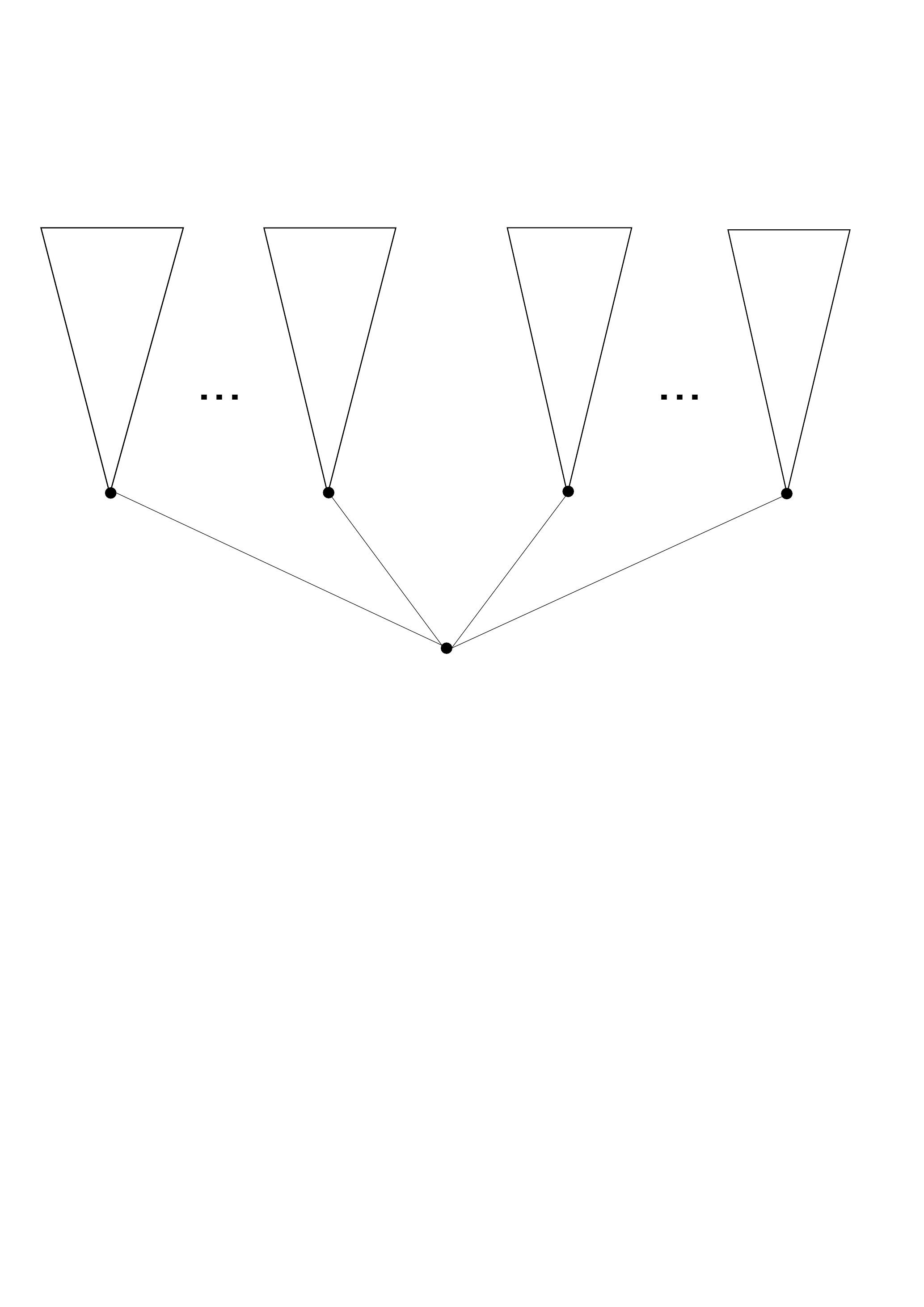}}
\put(6.8,-0.7){$u$}
\put(0.4,10.5){$\mathcal{N}'_1,...,\mathcal{N}_m'$}
\put(8,10.5){$\mathcal{N}_i^\varphi,\; \fdia \varphi \in A \; \& \; 1 \leq i \leq \dfrl$}
\end{picture}
\end{center}
\caption{\label{f:network} The graph $G''$ }
\end{figure}

The labelling $L''$ is defined by extending the labelling functions of the networks $\mathcal{N}_j'$ for $j \in \{1,...,m\}$ and $\mathcal{N}^\varphi_i$ for $\fdia \varphi \in A$ and $i \in \{1,...,\dfrl\}$ by putting  $L''(u) = A$. Finally we set:
 $$ S_F'' = \bigcup \{(S_F)_i^\varphi \mid \fdia \varphi \in A \; \& \; i \in \{1,...,\dfrl\}\} \cup \bigcup \{ S_j' \mid j \in \{1,...,m\}\} \cup \{u\}$$
and $S_P'' = \emptyset$.
\begin{claim}
\label{isnetwork}
The structure $\mathcal{N}''$ is a finite $\Sigma$-network and $G''$ is a tree rooted at $u$. 
\end{claim}
\begin{proof}[Proof of Claim \ref{isnetwork}]
It is immediate from the construction that $G''$ is a tree rooted at $u$. To check that $\mathcal{N}''$ is a $\Sigma$-network,
we check that the required conditions hold for each node $v$ in $\mathcal{N}''$. We have a few cases to consider:

\textbf{Case 1}: $v = u$. Since $u$ has no predecessors in $G''$ the condition for modalities $\bbox$ holds trivially. Suppose $\fbox \varphi \in L''(u) = A$. Then $\fbox \varphi \in \Gamma$, so $\varphi \in \Theta$ for each successor of $\Gamma$ in the canonical model, and since each successor of $u$ is labelled with $\Theta \cap \Sigma$ for some such successor $\Theta$ the conclusion follows. For the conditions involving the sets $S_F''$ and $S_P''$, the latter is vacuous since $u \notin S_P'' = \emptyset$. For the condition for $S_F''$, it suffices to note that $\varphi \in L''(w_i^\varphi) = \Theta_\varphi$ for each $\fdia \varphi \in A$ and $i \in \{1,...,\dfrl\}$.

\textbf{Case 2}: $v$ belongs to one of the networks $\mathcal{N}_j'$ for $j \in \{1,...,m\}$. We divide this case into subcases.

\textbf{Case 2a}: $j$ is such that $\psi_j(x,\vec{q})$ is not a potential $\sharp_\chi \vec{\theta}$-deferral. Then $v = w_j'$, and since $w_j'$ has no successors in $G''$ the clause for $\fbox$ is trivial. The only predecessor of $w_j'$ in $G''$ is $u$ which is labelled $A$, which is equal to $\Gamma \cap \Sigma$ and we recall that $\Gamma$ is a predecessor of $\Psi_j$ in the canonical model. Hence, if $\bbox \varphi \in L''(w_j') = L_j'(w_j') = \Theta_j \cap \Sigma$ then $\varphi \in L''(u) = \Gamma \cap \Sigma$, as required. The clause for the set $S_P'' = \emptyset$ is vacuous, and the clause for $S_F''$ holds because $w_j' \notin S_F''$.

\textbf{Case 2b}: $j$ is such that  $\psi_j(x,\vec{q})$ is a potential deferral. If $v = w_j'$ then the only successor of $w_j'$ is $u$, so the clause for $\bbox$ is treated as in Case 2a. Furthermore, all successors of $w_j'$ in $\mathcal{N}''$ are successors of $w_j'$ in the network $\mathcal{N}_j'$, so the clause for $\fbox$ holds in this case because $\mathcal{N}_j'$ was a $\Sigma$-network. If $v \neq w_j'$ then all successors (predecessors) of $v$ in $\mathcal{N}''$ are successors (predecessors) of $v$ in $\mathcal{N}_j'$,   so the clause for $\fbox$ ( for $\bbox$) holds  because $\mathcal{N}_j'$ was a $\Sigma$-network. The clause for the set $S_P'' = \emptyset$ is vacuous, and the clause for $S_F''$ holds because $\mathcal{N}_j'$ was a $\Sigma$-network.

\textbf{Case 3}: $v$ belongs to one of the networks $\mathcal{N}_i^\varphi$ for $i \in \{1,...,\dfrl\}$ and $\fdia \varphi \in A$. This case is similar to Case 2.
\end{proof}

\begin{claim}
\label{deferral}
The deferral $\fnabla \{\psi_1(\delta,\vec{\theta}),...,\psi_m(\delta,\vec{\theta})\}$ is eventually finished at the node $u$ in $\mathcal{N}'$.
\end{claim}
\begin{proof}[Proof of Claim \ref{deferral}]
We define a relation $$Z \subseteq (\{w_j' \mid 1 \leq j \leq m\} \cup \{w_i^\varphi \mid 1 \leq i \leq \dfrl \; \& \; \fdia \varphi \in A\}) \times \{\psi_1(x,\vec{q}),...,\psi_m(x,\vec{q})\}$$
by setting:
$$Z = \{(w_j,\psi_j(x,\vec{q})) \mid 1 \leq j \leq m\} \cup \{(w_i^\varphi,c(\varphi)) \}\mid 1 \leq i \leq \dfrl \; \& \; \fdia \varphi \in A\}$$
It follows from the construction of $\mathcal{N}''$ that this is a full relation between the successors of $u$ and the formulas $\{\psi_1(\delta,\vec{\theta}),...,\psi_m(\delta,\vec{\theta})\}$. If $(v,\psi_j(x,\vec{q})) \in Z$ then $L''(v)$ is of the form $\Theta \cap \Sigma$ where $\psi_j(\delta,\vec{\theta}) \in \Theta$. It is clear that $\delta \vdash \sharp_\chi \vec{\theta}$ so $\psi_j(\delta,\vec{\theta}) \vdash \psi_j(\sharp_\chi \vec{\theta},\vec{\theta})$. Hence $\psi_j(\sharp_\chi \vec{\theta},\vec{\theta}) \in \Theta$, and since $\psi_j(\sharp_\chi \vec{\theta},\vec{\theta}) \in \Sigma$ we get $\psi_j(\sharp_\chi \vec{\theta},\vec{\theta}) \in L''(v)$. Furthermore, it is clear from the construction that if $(v,\psi_j(x,\vec{q})) \in Z$ then the deferral $\psi_j(x,\vec{q})$ is eventually finished at $v$. We are here using the obvious fact that $\mathcal{N}_j' \sqsubseteq \mathcal{N}''$ for each $j \in \{1,...,m\}$ and $\mathcal{N}^\varphi_i \sqsubseteq \mathcal{N}''$ for each $i \in \{1,...,\dfrl\}$ and $\fdia \varphi \in A$, which means we can apply Proposition \ref{p:stayfinished}. 
\end{proof}
With the claims \ref{isnetwork} and \ref{deferral} in place, the proof is finished. 
\end{proof}

\begin{prop}
\label{p:stronginductionlemma}
Let $A$ be an atom containing $\psi(\sharp_\chi \vec{\theta},\vec{\theta})$ where $\psi(x,\vec{q})$ is a potential $\sharp_\chi \vec{\theta}$-deferral. If $\chi$ is forward-looking then there exists a finite  network $\mathcal{N}$ containing a node $u$ with $L(u) = A$, such that $G$ is a tree rooted at $u$, and in which the deferral $\psi(x,\vec{q})$ is eventually finished. Similarly, if $\chi$ is backward-looking then there exists a finite  network $\mathcal{N}$ containing a node $u$ with $L(u) = A$,  such that $G^{\mathit{op}}$ is a tree rooted at $u$, and in which the deferral $\psi(x,\vec{q})$ is eventually finished.
\end{prop}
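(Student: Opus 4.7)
The plan is to obtain Proposition \ref{p:stronginductionlemma} essentially for free from a stronger inductive claim that is actually proved inside the proof of Proposition \ref{p:inductionlemma}. Recall that (treating $\chi$ as forward-looking; the backward-looking case is symmetric) that proof introduces a specific formula $\delta$ and then establishes, by induction on the complexity of the deferral, the following intermediate statement: for every potential $\sharp_\chi \vec{\theta}$-deferral $\psi(x,\vec{q})$ and every atom $A$ with $A \cup \{\psi(\delta,\vec{\theta})\}$ consistent, there exists a finite network rooted at a node $u$ with $L(u) = A$, whose underlying graph is a tree with $S_P = \emptyset$, and in which the deferral $\psi(x,\vec{q})$ is eventually finished at $u$. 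This is already exactly the shape of network demanded in the conclusion of Proposition \ref{p:stronginductionlemma}.

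Hence it suffices to translate the hypothesis of Proposition \ref{p:stronginductionlemma}, namely that $\psi(\sharp_\chi \vec{\theta},\vec{\theta}) \in A$, into the hypothesis of that intermediate statement, namely that $A \cup \{\psi(\delta,\vec{\theta})\}$ is consistent. The key is the provable equivalence $\vdash \sharp_\chi \vec{\theta} \leftrightarrow \delta$. One direction, $\vdash \sharp_\chi \vec{\theta} \rightarrow \delta$, was derived via the Kozen--Park induction rule inside the proof of Proposition \ref{p:inductionlemma}; the converse $\vdash \delta \rightarrow \sharp_\chi \vec{\theta}$ is immediate, since every disjunct $\bigwedge A'$ of $\delta$ satisfies $\sharp_\chi \vec{\theta} \in A'$ and so $\bigwedge A' \vdash \sharp_\chi \vec{\theta}$. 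Applying Replacement of Equivalents with this equivalence, substituted for the recursion variable inside $\psi(x,\vec{q})$ (after the parameter substitution $\vec{q} \mapsto \vec{\theta}$), yields $\vdash \psi(\sharp_\chi \vec{\theta},\vec{\theta}) \leftrightarrow \psi(\delta,\vec{\theta})$. Since $A$ is consistent and contains $\psi(\sharp_\chi \vec{\theta},\vec{\theta})$, the set $A \cup \{\psi(\delta,\vec{\theta})\}$ is consistent, and the intermediate statement supplies the required network.

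For $\chi$ backward-looking the argument is exactly the same, using the symmetric $\delta$ together with the symmetric version of the intermediate statement (with $S_F = \emptyset$ and $G^{\mathit{op}}$ a tree rooted at $u$) which the proof of Proposition \ref{p:inductionlemma} produces in that case. There is no genuine obstacle here: the work was already done inside the proof of Proposition \ref{p:inductionlemma}, and the only thing to verify explicitly is that, in the presence of Kozen's axioms and Replacement of Equivalents, the condition $\psi(\sharp_\chi \vec{\theta},\vec{\theta}) \in A$ is strong enough to recover the consistency of $A$ with $\psi(\delta,\vec{\theta})$.
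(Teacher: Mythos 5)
Your proof is correct, but it reaches the conclusion by a different (and arguably cleaner) route than the paper. The paper proves Proposition \ref{p:stronginductionlemma} by running a fresh induction on the complexity of the deferral $\psi(x,\vec{q})$, using Proposition \ref{p:inductionlemma} only to discharge the base case $\psi(x,\vec{q}) = x$ and then remarking that the inductive steps ``follow similar reasoning'' to the proof of that proposition. You instead avoid repeating any induction: you isolate the stronger intermediate claim that the proof of Proposition \ref{p:inductionlemma} actually establishes (the existence of a suitable tree-shaped network for every atom consistent with $\psi(\delta,\vec{\theta})$, for \emph{every} potential deferral $\psi$), and you bridge the hypotheses via the provable equivalence $\vdash \sharp_\chi\vec{\theta} \leftrightarrow \delta$ --- whose forward direction is exactly the output of the Kozen--Park induction rule in that proof and whose converse is immediate from the shape of $\delta$. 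Combined with Replacement of Equivalents (or simply monotonicity of $\psi$ in $x$, which is all you need for the one direction $\psi(\sharp_\chi\vec{\theta},\vec{\theta}) \vdash \psi(\delta,\vec{\theta})$), this turns membership of $\psi(\sharp_\chi\vec{\theta},\vec{\theta})$ in the consistent set $A$ into consistency of $A$ with $\psi(\delta,\vec{\theta})$, and the intermediate claim does the rest. The trade-off is purely presentational: your argument is a genuine corollary-style reduction that makes explicit what the paper leaves implicit, but it depends on treating an internal claim of another proof as a citable lemma (which is legitimate here, since the paper states and proves that claim in full generality before specialising it). The only point to make explicit is that the equivalence $\vdash \sharp_\chi\vec{\theta} \leftrightarrow \delta$ becomes available only once the proof of Proposition \ref{p:inductionlemma} is complete, since its forward direction is obtained by applying the induction rule to $\vdash \chi(\delta,\vec{\theta}) \rightarrow \delta$; as you use it only afterwards, there is no circularity.
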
 
\begin{proof}
For the forward-looking case, we prove by induction on the complexity of a potential $\sharp_\chi \vec{\theta}$-deferral $\psi(x,\vec{q})$ that if an atom $A$ contains  the formula $\psi(\sharp_\chi \vec{\theta},\vec{\theta})$, then there exists a  network $\mathcal{N} = (G,L,S_F,S_P)$ containing a node $u$ with $L(u) = A$, such that $G$ is a finite tree rooted at $u$, $S_P = \emptyset$, and the deferral $\psi(x,\vec{\theta})$ at $u$ is eventually finished in $\mathcal{N}$. The base case of the induction where $\psi(x,\vec{q}) = x$ is taken care of by Proposition \ref{p:inductionlemma}, and the rest of the induction follows similar reasoning as in the proof of Proposition \ref{p:inductionlemma}.

The backwards-looking case is handled in a  similar manner. 
\end{proof}

\section{Construction of a perfect network}

We now have the tools in place to show how to construct a perfect network for any atom.

\begin{prop}
\label{p:onlyhead}
For every finite and anticonfluent network $\mathcal{N}$, there exists a finite and anticonfluent network $\mathcal{N}'$ with $\mathcal{N} \sqsubseteq \mathcal{N}'$, and such that every $\fdia$-defect in $\mathcal{N}'$ is a head node. 

Similarly, for every  finite and anticonfluent network $\mathcal{N}$, there exists a finite and anticonfluent network $\mathcal{N}'$ with $\mathcal{N} \sqsubseteq \mathcal{N}'$, and such that every $\bdia$-defect in $\mathcal{N}'$ is a tail node. 
\end{prop}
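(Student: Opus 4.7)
The plan is to prove the first half by iterating Proposition \ref{p:removefpdefects} over the finitely many offending nodes of the original network, and to obtain the second half by the obvious symmetric argument. Write $\mathcal{N} = (G, L, S_F, S_P)$ and let $U$ be the set of nodes $u \in \underline{G}$ such that $u$ has at least one successor in $G$ and $u \notin S_F$. Since $G$ is finite, $U$ is finite; enumerate it as $u_1, \dots, u_n$. Set $\mathcal{N}_0 = \mathcal{N}$ and, for each $i = 1, \dots, n$, let $\mathcal{N}_i = (G_i, L_i, S_F^i, S_P^i)$ be the finite and anticonfluent network supplied by Proposition \ref{p:removefpdefects} applied at $u_i$ to $\mathcal{N}_{i-1}$, so that $\mathcal{N}_{i-1} \sqsubseteq \mathcal{N}_i$ and $u_i \in S_F^i$. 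Take $\mathcal{N}' = \mathcal{N}_n$; transitivity of $\sqsubseteq$ then yields $\mathcal{N} \sqsubseteq \mathcal{N}'$.

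The key structural observation that makes the induction work is that the construction in Proposition \ref{p:removefpdefects} introduces new nodes only as fresh immediate successors of the single processed node, and assigns no outgoing edges to any of them. Consequently, at each step $i$, every node in $\underline{G}_i \setminus \underline{G}_{i-1}$ is a head node of $\mathcal{N}_i$, and for every $v \in \underline{G}_{i-1}$ distinct from $u_i$, the set of successors of $v$ in $G_i$ equals the set of successors of $v$ in $G_{i-1}$. By a straightforward induction on $i$, the set of non-head nodes of $\mathcal{N}_i$ coincides with the set of non-head nodes of $\mathcal{N}$.

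To finish, note that each non-head node of $\mathcal{N}$ is either already in $S_F$ (and hence in $S_F^n$, using $S_F \subseteq S_F^1 \subseteq \dots \subseteq S_F^n$) or is one of the $u_i$ (and hence in $S_F^n$ by construction). Thus every non-head node of $\mathcal{N}'$ lies in $S_F^n$, which is exactly the statement that every $\fdia$-defect of $\mathcal{N}'$ is a head node. The second half of the proposition is handled by the mirror-image argument: enumerate the nodes of $\underline{G}$ that have at least one predecessor and are not in $S_P$, and iteratively apply the dual clause of Proposition \ref{p:removefpdefects}, which adds only fresh tail nodes as predecessors of the processed node.

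There is essentially no serious obstacle here; the only point requiring care is ensuring that Proposition \ref{p:removefpdefects} can be reapplied at each stage, which is immediate because that proposition hands us back a finite and anticonfluent network, and that processing $u_i$ cannot silently turn some other old head node into a new non-head non-saturated node — and this is precisely guaranteed by the fact that all freshly added nodes are themselves head nodes of the new network.
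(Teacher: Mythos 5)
Your proposal is correct and follows essentially the same route as the paper: both iterate the single-node saturation construction of Proposition \ref{p:removefpdefects} over the finitely many non-head $\fdia$-defects, relying on the key observation that all freshly added nodes are head nodes so no new non-head defects appear (the paper phrases this as a strictly decreasing count of non-head $\fdia$-defects, you as an invariance of the set of non-head nodes along a fixed enumeration). The dual case is handled identically in both.
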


\begin{proof}
Again, we focus on the forward-looking case since the backward-looking case is proved by a similar argument. 
We prove that, for every finite and anticonfluent network $\mathcal{N}$ such that some $\fdia$-defects are not head nodes, there exists a finite and anticonfluent network $\mathcal{N}'$ with $\mathcal{N} \sqsubseteq \mathcal{N}'$, and such that the number of $\fdia$-defects in $\mathcal{N}'$ that are not head nodes is strictly smaller than the number of $\fdia$-defects in $\mathcal{N}$ that are not head nodes. This clearly implies the result, since there are only finitely many $\fdia$-defects in a finite network.

To prove the result, let $u$ be an $\fdia$-defect in a finite and anticonfluent network $\mathcal{N}$, such that $u$ is not a head node. We construct a (finite and anticonfluent) network $\mathcal{N}'$ by similar reasoning as in the proof of Proposition \ref{p:removefpdefects}: we pick a maximal consistent set $\Gamma$ such that $L(u) = \Gamma \cap \Sigma$, and for each formula $\fdia \varphi \in L(u)$, pick a successor $\Theta_\varphi$ of $\Gamma$ in the canonical model containing $\varphi$. Pick a distinct fresh object $v^\varphi_j$ of for each $\fdia \varphi \in L(u)$ and each $j \in \{1,...,\dfrl\}$, and define the graph $G'$ by taking its nodes to be the union of the set of nodes in $G$ together with all the fresh objects $v^\varphi_j$, and setting $w \edge{G'} w'$ iff either $w,w' \in G$ and $w \edge{G'} w'$, or $w = u$ and $w' = v^\varphi_j$ for some $\varphi,j$.   We set  $L'(v^\varphi_j) = \Theta_\varphi \cap \Sigma$ for $\fdia \varphi \in L(u)$ and $j \in \{1,...,\dfrl\}$, and for $w \in G$ we set $L'(w) = L(w)$. Finally, we set $S_P' = S_P$, and $S_F' = S_F \cup \{u\}$. As before, $\mathcal{N}'$ is a network and $\mathcal{N} \sqsubseteq \mathcal{N}'$, and $u$ is not an $\fdia$-defect in $\mathcal{N}'$. Furthermore, since the only new nodes we've added in $\mathcal{N}'$ are head nodes, and since $\mathcal{N} \sqsubseteq \mathcal{N}'$ and hence every node in $\mathcal{N}$ that is not an $\fdia$-defect is  not an $\fdia$-defect in $\mathcal{N}'$ either, the number of $\fdia$-defects in $\mathcal{N}'$ that are not head nodes is strictly smaller than the number of $\fdia$-defects in $\mathcal{N}$ that are not head nodes as required. 
\end{proof}

\begin{prop}
\label{p:singledef}
Let $\Sigma$ be a Fischer-Ladner closed finite subset of $\languagesym_\Delta(\Prop)$ where $\Delta$ is a set of guarded and disjunctive fixpoint connectives. 
Let $\mathcal{N}$ be a finite  and anticonfluent $\Sigma$-network, and let $d$ be a $\sharp_\chi \vec{\theta}$-deferral at some node $u$ in $\mathcal{N}$. Then there exists a finite  and anticonfluent network $\mathcal{N}' $ such that $\mathcal{N} \sqsubseteq \mathcal{N}'$ and such that $d$ is eventually finished at $u$ in $\mathcal{N}'$. 
\end{prop}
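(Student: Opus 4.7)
The plan is to prove the proposition by structural induction on the deferral $d$, treating only the forward-looking case since the backward-looking one is handled symmetrically by interchanging the roles of forward and backward notions. The two key tools will be Proposition \ref{p:stronginductionlemma}, which produces from any atom a tree-shaped finite network finishing a prescribed deferral at its root, and Proposition \ref{p:amalgamation}, which combines several extensions of $\mathcal{N}$ into one whenever their upper cones are pairwise disjoint.

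The inductive cases for $d$ a disjunction or conjunction reduce immediately to a syntactically smaller deferral at the same node $u$: since $d(\sharp_\chi\vec{\theta},\vec{\theta}) \in L(u)$ and $L(u)$ is a $\Sigma$-atom, the appropriate sub-formula lies in $L(u)$ and the induction hypothesis at $u$ on that sub-deferral suffices. For $d = \fnabla\Gamma$, the inclusion $d(\sharp_\chi\vec{\theta},\vec{\theta}) \in L(u)$ together with Fischer--Ladner closure forces $\fdia\gamma^* \in L(u)$ and $\fbox\bigvee_{\gamma\in\Gamma}\gamma^* \in L(u)$, writing $\gamma^*$ for $\gamma(\sharp_\chi\vec{\theta},\vec{\theta})$. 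When $u$ is already forward-saturated in $\mathcal{N}$ we pick a full relation $Z$ witnessing the label-matching part of Definition \ref{d:finished} using existing successors of $u$; when $u \notin S_F^{\mathcal{N}}$ we first saturate $u$ in the style of Proposition \ref{p:removefpdefects}, taking care that the successor atoms introduced also contain the required $\gamma^*$. For each pair $(s,\gamma) \in Z$ with $\gamma$ a proper sub-deferral the induction hypothesis applied at $s$ yields a finite anticonfluent extension of $\mathcal{N}$ over fresh nodes above $s$; the anticonfluence of $\mathcal{N}$ then supplies the disjoint-upper-cones hypothesis of Proposition \ref{p:amalgamation}, and the amalgamated network finishes $d$ at $u$ by appeal to Proposition \ref{p:stayfinished}.

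The base case $d = x$ is dispatched by applying Proposition \ref{p:stronginductionlemma} directly to $L(u)$: this supplies a finite tree-shaped network $\mathcal{M}$ rooted at $v$ with $L(v) = L(u)$, $S_P^{\mathcal{M}} = \emptyset$, in which $x$ is eventually finished at $v$; identifying $v$ with $u$ and taking the disjoint union with $\mathcal{N}$ outside $u$ produces the desired $\mathcal{N}'$, with anticonfluence inherited from the tree shape of $\mathcal{M}$ together with the absence of back-edges from $\mathcal{M}\setminus\{u\}$ into $\mathcal{N}$. The hard part will be the bookkeeping when $u$ already lies in $S_F^{\mathcal{N}}$: we cannot add fresh successors to $u$ in that case, so we must push the recursion one layer deeper to the existing successors of $u$, and maintaining the disjoint-upper-cones hypothesis of Proposition \ref{p:amalgamation} at every recursive application relies crucially on the anticonfluence of $\mathcal{N}$. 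Threading this through the base case $d = x$, where one effectively has to recurse on $\chi(x,\vec{q})$ whose structural complexity exceeds that of $x$, is the most delicate point and will require a secondary case analysis on the top-level shape of the guarded connective $\chi$.
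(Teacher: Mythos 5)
Your proposal correctly identifies the main tools (Propositions \ref{p:stronginductionlemma}, \ref{p:amalgamation} and \ref{p:stayfinished}) and correctly locates the difficulty, but it does not resolve it, and the induction as you have set it up does not terminate. A structural induction on the deferral $d$ breaks down precisely at $d = x$: unfolding to $\chi(x,\vec{q})$ increases structural complexity, and after descending through a $\fnabla$ to a successor one may again face the deferral $x$ there, so no case analysis on the top-level shape of $\chi$ yields a decreasing measure. The paper's proof instead takes as its \emph{primary} induction the greatest distance from $u$ to a head node, which is well defined because $\mathcal{N}$ is finite and anticonfluent, and which strictly decreases when passing to a successor; the structural induction on the deferral is only secondary (Claim \ref{c:guarded}), with the $\fnabla$ case serving as its base case and invoking the outer hypothesis at the successors. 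Guardedness of the connectives in $\Delta$ is exactly what guarantees that every recursion through $x$ passes through a modal operator and hence decreases the distance.

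The second missing piece is the case of a head node $u \in S_F$. There you can neither add fresh successors to $u$ (that would violate $\mathcal{N} \sqsubseteq \mathcal{N}'$) nor recurse to existing ones (there are none), so neither branch of your plan applies. The paper handles this by a purely proof-theoretic argument: since $\fbox\bot, \bbox\bot \in \Sigma$, a forward-saturated head node must satisfy $\fbox\bot \in L(u)$, and one derives $\fbox\bot \wedge \psi(\sharp_\chi\vec{\theta},\vec{\theta}) \vdash \psi(\bot,\vec{\theta})$ for every guarded deferral $\psi(x,\vec{q})$, whence the deferral is \emph{already} eventually finished in $\mathcal{N}$ with no extension needed. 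Relatedly, your treatment of $\fnabla\Gamma$ at a node $u \notin S_F$ by saturating on the fly is unnecessary in the paper's setup: one first applies Proposition \ref{p:onlyhead} so that, without loss of generality, every non-head node already lies in $S_F$. Without the distance induction and the $\fbox\bot$ argument your proof cannot be completed as written.
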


\begin{proof}
For forward-looking connectives we prove the result by induction on the greatest distance of the node $u$ to a head node. More precisely, the induction is on the greatest natural number $k$ that is equal to the length of a path: $$u \edge{G} v_1 \edge{G} ... \edge{G} v_n$$ such that $v_n$ is a head node, if $u$ is not itself a head node, and $k = 0$ otherwise. This is well defined because $\mathcal{N}$ is finite and anticonfluent, and furthermore since $\mathcal{N}$ is anti-confluent any successor of $u$ has shorter greatest distance to a head node than $u$.  Similarly for backward-looking connectives we prove the result by induction on the maximal length of a path to the node $u$ from a tail node. 

The precise induction hypothesis in the forward-looking case is: for each deferral $d$ at a node $u$ in $\mathcal{N}$, there exists a network $\mathcal{N}'$ such that: 
\begin{itemize}
\item $\mathcal{N} \sqsubseteq \mathcal{N}'$, 
\item $\mathcal{N}$ is downwards cofinal in $\mathcal{N}'$,
\item $d$ is eventually finished at $u$ in $\mathcal{N}'$, 
\item $\mathcal{N} \equp{u} \mathcal{N}'$.
\end{itemize}
Note that we can assume w.l.o.g. that the only $\fdia$-defects in $\mathcal{N}$ are head nodes, since otherwise we can apply Proposition \ref{p:onlyhead} to extend $\mathcal{N}$ to a finite and anticonfluent  network with this property, that has $\mathcal{N}$ as a subnetwork. 

Similarly, the induction hypothesis in the backward-looking case is: for each deferral $d$ at a node $u$ in $\mathcal{N}$, there exists a network $\mathcal{N}'$ such that: 
\begin{itemize}
\item $\mathcal{N} \sqsubseteq \mathcal{N}'$, 
\item $\mathcal{N}$ is upwards cofinal in $\mathcal{N}'$,
\item $d$ is eventually finished at $u$ in $\mathcal{N}'$, 
\item $\mathcal{N} \eqdown{u} \mathcal{N}'$.
\end{itemize}
Note that we can assume in this case w.l.o.g. that the only $\bdia$-defects in $\mathcal{N}$ are tail nodes, since otherwise we can apply Proposition \ref{p:onlyhead} to extend $\mathcal{N}$ to a finite and anticonfluent  network with this property, that has $\mathcal{N}$ as a subnetwork. 

We focus on the forward-looking case since the other case is proved in an entirely analogous manner. 
In the base case of distance $0$, if $u$ is a head node with $u \notin S_F$ the result follows from Proposition \ref{p:stronginductionlemma}. To be precise: by Proposition \ref{p:stronginductionlemma} there exists a finite network $\mathcal{N}' = (G',L',S_F',S_P')$ such that $G'$ is a tree rooted at a node $u'$ with $L'(u') = L(u)$, $S_P' = \emptyset$ and  such that the deferral $d$ is eventually finished at $u'$ in $\mathcal{N}'$. We assume w.l.o.g. that in fact $u = u'$ and $G \setminus \{u\} \cap G' = \emptyset$. We construct a new network $\mathcal{N}'' = (G'',L'')$ by taking $\underline{G}'' = \underline{G} \cup \underline{G'}$, and set $v \edge{G''} v'$ iff either $v,v' \in \underline{G}$ and $v \edge{G} v'$, or $v,v' \in \underline{G}'$ and $v \edge{G'} v'$. The labelling function $L''$ is defined as the union of $L$ and $L'$. It is easy to see that $G''$ is finite and anticonfluent, and since $u \notin S_F$ we have $\mathcal{N} \sqsubseteq \mathcal{N}''$. Also note that $\mathcal{N}$ is clearly downwards cofinal in $\mathcal{N}''$ and that $\mathcal{N} \equp{u} \mathcal{N}''$.  Furthermore, since $S_P' = \emptyset$ we also have $\mathcal{N}' \sqsubseteq \mathcal{N}''$, and so it follows from Proposition \ref{p:stayfinished} that the deferral $d$ is eventually finished at $u$ in the network $\mathcal{N}''$.

In the case of a head node $u \in S_F$, note that we must have $\fbox \bot \in L(u)$ since $u$ has no successors but is in $S_F$, and $\mathcal{N}$ is a network  -- so $\fdia \top \in L(u)$ would require $u$ to have at least one successor. (This little detail is in fact why we required $\fbox \bot, \bbox \bot \in \Sigma$ in the definition of Fischer-Ladner closure). We claim that the deferral $d$ is already eventually finished in $\mathcal{N}$. To prove this, we first show that for any deferral at $u$ of the form $\psi(x,\vec{q})$ such that every occurrence of $x$ in $\psi(x,\vec{q})$ is guarded, we have: 
$$\fbox \bot \wedge \psi(\sharp_\chi \vec{\theta},\vec{\theta}) \vdash \psi(\bot,\vec{\theta})$$  We prove this by induction on the complexity of the deferral $d$, treating formulas of the form \\ $\fnabla \{\psi_1(x,\vec{q}),...,\psi_m(x,\vec{q})\}$ as the base case of the induction. For this case, since we are assuming that $\fnabla \{\psi_1(x,\vec{q}),...,\psi_m(x,\vec{q})\}$ is a deferral and so must contain some occurrence of the variable $x$, we must have $\{\psi_1(x,\vec{q}),...,\psi_m(x,\vec{q})\} \neq \emptyset$, hence clearly $$\fnabla \{\psi_1(\sharp_\chi \vec{\theta},\vec{\theta}),...,\psi_m(\sharp_\chi \vec{\theta},\vec{\theta})\} \vdash \fdia \top$$ So we get:
\begin{eqnarray*}
 \fbox \bot \wedge \fnabla \{\psi_1(\sharp_\chi \vec{\theta},\vec{\theta}),...,\psi_m(\sharp_\chi \vec{\theta},\vec{\theta})\} & \vdash & \fbox \bot \wedge \fdia \top \\
& \vdash & \bot \\
& \vdash & \fnabla \{\psi_1(\bot,\vec{\theta}),...,\psi_m(\bot,\vec{\theta})\}
\end{eqnarray*}
as required. 

Supposing that the induction hypothesis holds for the deferral $\psi(x,\vec{q})$, it is clear that the claim holds also for the deferral $\gamma \wedge \psi(x,\vec{q})$ where $x$ does not appear in $\gamma$: we have $\fbox \bot \wedge \psi(\sharp_\chi \vec{\theta},\vec{\theta}) \vdash \psi(\bot,\vec{\theta})$ and so
$$\fbox \bot \wedge \gamma \wedge \psi(\sharp_\chi \vec{\theta},\vec{\theta}) \vdash \gamma \wedge \psi(\bot,\vec{\theta})$$
as required. Lastly, if the induction hypothesis holds for the formulas $\psi_1(x,\vec{q})$ and $\psi_2(x,\vec{q})$ and these are both deferrals, then we have  $\fbox \bot \wedge \psi_1(\sharp_\chi \vec{\theta},\vec{\theta}) \vdash \psi_1(\bot,\vec{\theta})$ and $\fbox \bot \wedge \psi_2(\sharp_\chi \vec{\theta},\vec{\theta}) \vdash \psi_1(\bot,\vec{\theta})$, so we get
  $$\fbox \bot \wedge (\psi_1(\sharp_\chi \vec{\theta},\vec{\theta}) \vee \psi_2(\sharp_\chi \vec{\theta},\vec{\theta})) \vdash \psi_1(\bot,\vec{\theta})$$ 
by elementary propositional logic. The case where only one of the formulas $\psi_1(x,\vec{q})$, $\psi_2(x,\vec{q})$ is a potential deferral is equally simple, and left to the reader.

We now prove, by induction on the complexity of a deferral $d$, that $d$ is eventually finished at $u$ in $\mathcal{N}$.  For $d = x$, we have $\sharp_\chi \vec{\theta} \in L(u)$ and so $\chi(\sharp_\chi \vec{\theta},\vec{\theta}) \in L(u)$, so $\chi(x,\vec{q})$ is a $\sharp_\chi\vec{\theta}$-deferral at $u$ in which the variable $x$ appears only in guarded positions. Hence we get, by the previous claim, that $\fbox \bot \wedge \chi(\sharp_\chi \vec{\theta},\vec{\theta}) \vdash \chi(\bot,\vec{\theta})$, so $\chi(\bot,\vec{\theta}) \in L(u)$ and so the deferral $x$ is indeed eventually finished at $u$. If  the induction hypothesis holds for the deferral $\psi(x,\vec{q})$, then it holds also for the deferral $\gamma \wedge \psi(x,\vec{q})$ where $x$ does not appear in $\gamma$, since if $\gamma \wedge \psi(\sharp_\chi \vec{\theta},\vec{\theta}) \in L(u)$ then $\gamma \in L(u)$ and $ \psi(\sharp_\chi \vec{\theta},\vec{\theta}) \in L(u)$, so by the induction hypothesis the deferral $\psi(x,\vec{q})$ is eventually finished at $u$ and hence so is $\gamma \wedge \psi(x,\vec{q})$. The induction step for a deferral of the form $\psi_1(x,\vec{q}) \vee \psi_2(x,\vec{q})$ follows  since if $\psi_1(\sharp_\chi \vec{\theta},\vec{\theta}) \vee \psi_2(\sharp_\chi \vec{\theta},\vec{\theta}) \in L(u)$ then for some $i \in \{1,2\}$ we have $\psi_i(\sharp_\chi \vec{\theta},\vec{\theta}) \in L(u)$, and we can apply the induction hypothesis. Finally, the case for a deferral of the form $\fnabla \{\psi_1(x,\vec{q}),...,\psi_m(x,\vec{q})\}$ holds vacuously, since $\fbox \bot \in L(u)$ and so we can never have $\fnabla \{\psi_1(\sharp_\chi \vec{\theta},\vec{\theta}),...,\psi_m(\sharp_\chi \vec{\theta},\vec{\theta})\} \in L(u)$: just as before we have $\{\psi_1(x,\vec{q}),...,\psi_m(x,\vec{q})\} \neq \emptyset$ and we can show with the same argument that $\fnabla \{\psi_1(\sharp_\chi \vec{\theta},\vec{\theta}),...,\psi_m(\sharp_\chi \vec{\theta},\vec{\theta})\}$ is inconsistent with $\fbox \bot$.

Now let $u$ be an arbitrary node with maximal distance $k > 0$ to a head node, and suppose that the induction hypothesis holds for all nodes $u'$ of maximal distance less than $k$ to a head node. We shall first prove the following: 

\begin{claim}
\label{c:guarded}
For every deferral $d$ at $u$ in which every occurrence of the variable $x$ is guarded,  there exists a network $\mathcal{N}'$ such that $\mathcal{N} \sqsubseteq \mathcal{N}'$,
 $\mathcal{N}$ is downwards cofinal in $\mathcal{N}'$,
 $d$ is eventually finished at $u$ in $\mathcal{N}'$, 
and $\mathcal{N} \equp{u} \mathcal{N}'$.
\end{claim}
\begin{proof}[Proof of Claim \ref{c:guarded}] We prove this by induction on the complexity of $\psi(x,\vec{q})$, where we treat modal formulas as the base case of the induction.  

We start with the case of $d = \fnabla \{\psi_1(x,\vec{q}),...,\psi_m(x,\vec{q})\}$ (base case of the induction). Since $u$ is not a head node, it is not an $\fdia$-defect, and so $u \in S_F$. So for each formula $\fdia \varphi \in L(u)$ and each $j \in \{1,...,\dfrl\}$, there exists a successor node $v_j^\varphi$ of $u$ such that $\varphi \in L(v_j^\varphi)$, $L(v_1^\varphi) = ... = L(v_\dfrl^\varphi)$ and $v_i^\varphi \neq v_j^\rho$ whenever $i \neq j$ or $\varphi \neq \rho$.  In particular, for each $j \in \{1,...,m\}$ and $i \in \{1,...,\dfrl\}$, there corresponds a successor $v_i^{\psi_j(\sharp_\chi \vec{\theta},\vec{\theta})}$.  We define a surjective map $f : W \to  \{\psi_1(x,\vec{q}),...,\psi_m(x,\vec{q})\}$, where $W$ is the set of successors of $u$, such that $\psi_i(\sharp_\chi \vec{\theta},\vec{\theta}) \in L(w)$ whenever $f(w) = \psi_i(x,\vec{q})$, as follows:
\begin{itemize}
\item If $w$ is of the form $v_i^{\psi_j(\sharp_\chi \vec{\theta},\vec{\theta})}$ for $i \in \{1,...,\dfrl\}$ such that $\eta(i) = \psi_j(x,\vec{q})$, then set $f(w) = \psi_j(x,\vec{q})$.
\item Otherwise, set $f(w)$ to be some arbitrarily chosen formula $\psi_j(x,\vec{q})$ such that $\psi_j(\sharp_\chi \vec{\theta},\vec{\theta}) \in L(w)$   -- this must exist since $w$ is a successor of $u$ and $\fbox (\psi_1(\sharp_\chi \vec{\theta},\vec{\theta}) \vee ... \vee\psi_m(\sharp_\chi \vec{\theta},\vec{\theta})) \in L(u)$.
\end{itemize} 
Let $Z$ be the graph of $f$. Let $W' \subseteq W$ denote the set of elements $w \in W$ such that $f(w)$ is a potential deferral. For each $w \in W'$ we can apply the induction hypothesis on the maximal distance from $w$ to a head node, and find a finite, anticonfluent network $\mathcal{N}_w$ with  $\mathcal{N} \sqsubseteq \mathcal{N}_w$, $\mathcal{N}_w \equp{w} \mathcal{N}$,  $\mathcal{N}$ is downwards cofinal in $\mathcal{N}_w$ and the deferral $f(w)$ is eventually finished at $w$ in $\mathcal{N}_w$. Also, since $\mathcal{N}$ is anticonfluent and the members of $W'$ are all successors of $u$, we can assume that the sets $\upgen{\mathcal{N}_w}{w}$ are pairwise disjoint: this holds since the sets $\upgen{\mathcal{N}_w}{w} \cap \underline{G}$ for $w \in W'$ are pairwise disjoint because $\mathcal{N}$ was anticonfluent, and we can just replace the parts of each network $\mathcal{N}_w$ outside of $\mathcal{N}$ with isomorphic copies. We can thus apply Proposition \ref{p:amalgamation} to the set of networks $\{\mathcal{N}_w \mid w \in W'\}$ and find a single network $\mathcal{N}' $ with $\mathcal{N} \sqsubseteq \mathcal{N}'$, $\mathcal{N}_w \sqsubseteq \mathcal{N}'$ for each $w$ such that $f(w)$ is a potential deferral (hence the deferral $f(w)$ is eventually finished at $w$ in $\mathcal{N}'$ by Proposition \ref{p:stayfinished}), $\mathcal{N} \equp{S'} \mathcal{N}'$,  and $\mathcal{N}$ is downwards cofinal in $\mathcal{N}'$. It clearly follows that $\mathcal{N} \equp{u} \mathcal{N}'$, since the elements of $W'$ are all successors of $u$. Finally, the deferral $\fnabla \{\psi_1(x,\vec{q}),...,\psi_m(x,\vec{q})\}$ is eventually finished at $u$ in $\mathcal{N}'$, witnessed by the relation $Z$.

If $d = \psi_1(x,\vec{q}) \vee \psi_2(x,\vec{q})$ and the induction hypothesis holds for $\psi_1(x,\vec{q})$ and $\psi_2(x,\vec{q})$ then $\psi_i(\sharp_\chi \vec{\theta},\vec{\theta}) \in L(u)$ for some $i \in \{1,2\}$. If $\psi_i(x,\vec{q})$ is not a deferral then we are done, otherwise we apply the induction hypothesis to find a finite anticonfluent network $\mathcal{N}'$ with $\mathcal{N} \sqsubseteq \mathcal{N}'$ satisfying all the constraints of the induction hypothesis and such that the deferral $\psi_i(x,\vec{q})$ is eventually finished at $u$ in $\mathcal{N}'$. Hence the deferral $d$ is also eventually finished at $u$ in $\mathcal{N}'$.

If $d = \gamma \wedge \psi(x,\vec{q})$ where $x$ does not appear in  $\gamma$ then we have $\gamma \in L(u)$ and $\psi(\sharp_\chi \vec{\theta},\vec{\theta}) \in L(u)$, so if the induction hypothesis holds for $\psi(x,\vec{q})$ then there is a finite anticonfluent network $\mathcal{N}'$ with $\mathcal{N} \sqsubseteq \mathcal{N}'$, satisfying all the constraints of the induction hypothesis, and such that the deferral $\psi_(x,\vec{q})$ is eventually finished at $u$ in $\mathcal{N}'$. So the deferral $ \gamma \wedge \psi(x,\vec{q})$ is also eventually finished at $u$ in $\mathcal{N}'$.
 This completes the induction and thus the proof of the Claim.
\end{proof}
With Claim \ref{c:guarded} in place, we finish the proof by induction on the complexity of the deferral $d$ (in which $x$ is not necessarily guarded). If $d = x$, then $\chi(\sharp_\chi \vec{\theta},\vec{\theta}) \in L(u)$, so $\chi(x,\vec{q})$ is a $\sharp_\chi \vec{\theta}$-deferral at $u$. Since every occurrence of $x$ in the formula $\chi(x,\vec{q})$ is guarded (because we assumed $\Delta$ was a set of guarded fixpoint connectives), Claim \ref{c:guarded} applies and so there exists a suitable finite and anticonfluent network $\mathcal{N}'$ extending $\mathcal{N}$ and satisfying all the requirements of the induction hypothesis, in which the deferral $\chi(x,\vec{q})$ is eventually finished at $u$. Hence the deferral $x$ is also eventually finished at $u$ in $\mathcal{N}'$, as required. The rest of the induction follows precisely the same reasoning as in the proof of Claim \ref{c:guarded}, so we omit it.
\end{proof}

\begin{prop}
\label{p:alldef}
Let $\mathcal{N}$ be a finite and anticonfluent network. Then there exists a  finite and anticonfluent network $\mathcal{N}' $ such that $\mathcal{N} \sqsubseteq \mathcal{N}'$ and such that none of the nodes belonging to $\mathcal{N}$ are $\fdia$-defects or $\bdia$-defects in $\mathcal{N}'$, and none of the nodes in $\mathcal{N}$ have any $\mu$-defects in $\mathcal{N}'$.
\end{prop}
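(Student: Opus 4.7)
The plan is to resolve the finitely many defects attached to nodes of $\mathcal{N}$ one at a time, producing a chain
$$\mathcal{N} = \mathcal{N}_0 \sqsubseteq \mathcal{N}_1 \sqsubseteq \cdots \sqsubseteq \mathcal{N}_k$$
of finite anticonfluent $\Sigma$-networks, and then setting $\mathcal{N}' = \mathcal{N}_k$. Write $\mathcal{N} = (G, L, S_F, S_P)$. Since $\underline{G}$ is finite and $\Sigma$ is finite, the collection of items to be addressed is finite: for each $u \in \underline{G}$, at most one $\fdia$-defect and one $\bdia$-defect, together with, for each $\sharp_\chi \vec{\theta} \in \Sigma$ and each potential $\sharp_\chi \vec{\theta}$-deferral $\psi(x,\vec{q}) \in D(\Sigma)$ with $\psi(\sharp_\chi \vec{\theta},\vec{\theta}) \in L(u)$, one candidate $\mu$-defect. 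Fix an arbitrary enumeration of this finite list.

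At step $i+1$, consider the next item. If it is an $\fdia$-defect at $u \in \underline{G}$, apply Proposition \ref{p:removefpdefects} to obtain a finite anticonfluent $\mathcal{N}_{i+1}$ with $\mathcal{N}_i \sqsubseteq \mathcal{N}_{i+1}$ in which $u \in S_F^{i+1}$; symmetrically for $\bdia$-defects using the second half of that proposition. If it is a $\sharp_\chi \vec{\theta}$-deferral $d$ at $u$, apply Proposition \ref{p:singledef} to $\mathcal{N}_i$, yielding a finite anticonfluent $\mathcal{N}_{i+1}$ with $\mathcal{N}_i \sqsubseteq \mathcal{N}_{i+1}$ in which $d$ is eventually finished at $u$. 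If the item is already resolved in $\mathcal{N}_i$, set $\mathcal{N}_{i+1} = \mathcal{N}_i$.

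The essential bookkeeping point is that resolutions survive further extension steps. For modal defects this is immediate from the definition of $\sqsubseteq$: $S_F^i \subseteq S_F^j$ and $S_P^i \subseteq S_P^j$ whenever $i \leq j$, so once $u$ is placed in $S_F$ (resp.\ $S_P$) it can never become an $\fdia$- (resp.\ $\bdia$-) defect again. For $\mu$-defects, Proposition \ref{p:stayfinished} guarantees that any deferral eventually finished at $u$ in $\mathcal{N}_i$ remains eventually finished in every $\mathcal{N}_j$ with $j \geq i$. Moreover, since $\mathcal{N} \sqsubseteq \mathcal{N}_i$ forces $L_i(u) = L(u)$ for all $u \in \underline{G}$, the set of $\sharp_\chi \vec{\theta}$-deferrals present at a node of $\underline{G}$ is the same in every $\mathcal{N}_i$, so our initial enumeration is indeed exhaustive. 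After processing the entire list, $\mathcal{N}' = \mathcal{N}_k$ has all the required properties.

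I do not anticipate any genuine difficulty here: the hard work has already been packaged into Propositions \ref{p:removefpdefects} and \ref{p:singledef}, and into the monotonicity provided by Proposition \ref{p:stayfinished}. The only thing to watch is that each elementary extension step really respects $\sqsubseteq$ (not merely $\subseteq$), so that a node already made saturated is not later given fresh successors or predecessors, and that finiteness and anticonfluence are both preserved by every individual step --- properties explicitly built into the statements of the two propositions we are invoking.
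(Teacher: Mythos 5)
Your proposal is correct and follows essentially the same route as the paper: enumerate the finitely many defects attached to nodes of $\mathcal{N}$, discharge them one at a time via Propositions \ref{p:removefpdefects} and \ref{p:singledef}, and use Proposition \ref{p:stayfinished} (together with the monotonicity of $S_F$ and $S_P$ under $\sqsubseteq$) to ensure resolved defects are not reintroduced. Your additional bookkeeping remarks (labels of old nodes are preserved, so the initial enumeration is exhaustive) are sound and merely make explicit what the paper leaves implicit.
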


\begin{proof}
Just list all the finitely many defects $d_1,...,d_k$ appearing in $\mathcal{N}$, and repeatedly apply Propositions \ref{p:removefpdefects} and \ref{p:singledef} to construct a chain of networks $\mathcal{N} \sqsubseteq \mathcal{N}_1 \sqsubseteq ... \sqsubseteq \mathcal{N}_k$ removing these defects one by one (and relying on Proposition \ref{p:stayfinished} to make sure that the defects removed at some stage are not reintroduced in later stages). None of the nodes  in the network $\mathcal{N}_k$ can then be $\fdia$-defects or $\bdia$-defects, and none of the $\mu$-defects left in $\mathcal{N}_k$ can belong to nodes in $\mathcal{N}$.
\end{proof}

\begin{prop}
\label{p:perfect}
Any $\Sigma$-atom $A$ belongs to the label of some node in a perfect network.
\end{prop}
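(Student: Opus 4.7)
My plan is to build a perfect network containing a node labelled $A$ as the union of an increasing chain of finite anticonfluent $\Sigma$-networks, obtained by iterating Proposition \ref{p:alldef}. Concretely, let $\mathcal{N}_0$ be the one-point prenetwork whose only node $u_0$ is labelled $A$, with $S_F = S_P = \emptyset$ and no edges; both prenetwork clauses for $\fbox,\bbox$ and both network clauses for $S_F,S_P$ are vacuous, so $\mathcal{N}_0$ is a finite anticonfluent $\Sigma$-network. Then, given a finite anticonfluent $\Sigma$-network $\mathcal{N}_n$, I would apply Proposition \ref{p:alldef} to obtain a finite anticonfluent $\Sigma$-network $\mathcal{N}_{n+1}$ with $\mathcal{N}_n \sqsubseteq \mathcal{N}_{n+1}$ such that no node from $\underline{G}_n$ is an $\fdia$-defect, a $\bdia$-defect, or carries a $\mu$-defect in $\mathcal{N}_{n+1}$.

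Having built the chain $\mathcal{N}_0 \sqsubseteq \mathcal{N}_1 \sqsubseteq \mathcal{N}_2 \sqsubseteq \cdots$, I set $\mathcal{N}_\omega = \bigcup_{n<\omega} \mathcal{N}_n$. Label compatibility is automatic from the $\sqsubseteq$-chain structure, so the union is a well-defined $\Sigma$-prenetwork, and it clearly contains a node (namely $u_0$) labelled $A$.

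The main obstacle is to verify that $\mathcal{N}_\omega$ is actually a $\Sigma$-network and that it is perfect. For the network clauses, consider any node $u \in \underline{G}_\omega$. Then $u \in \underline{G}_n$ for some $n$, and one checks that if $u \in S_F^\omega$ then $u \in S_F^m$ for some $m$; the defining property of $\sqsubseteq$ then guarantees that the successors of $u$ in $\mathcal{N}_\omega$ coincide with those in $\mathcal{N}_m$, so the $\fdia$-saturation witnesses existing at stage $m$ remain witnesses in the limit. The case $u \in S_P^\omega$ is symmetric. For perfection, fix any node $u \in \underline{G}_\omega$ and let $n$ be the least stage at which $u$ appears. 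By construction $u$ is neither an $\fdia$- nor a $\bdia$-defect in $\mathcal{N}_{n+1}$, i.e.\ $u \in S_F^{n+1} \cap S_P^{n+1}$, and hence $u \in S_F^\omega \cap S_P^\omega$, so $u$ is not a modal defect in $\mathcal{N}_\omega$ either. Every $\sharp_\chi \vec{\theta}$-deferral at $u$ in $\mathcal{N}_\omega$ is a deferral at $u$ in $\mathcal{N}_{n+1}$ as well (labels agree along the chain), and it is eventually finished there; applying Proposition \ref{p:stayfinished} along each step $\mathcal{N}_{n+1} \sqsubseteq \mathcal{N}_{n+2} \sqsubseteq \cdots \sqsubseteq \mathcal{N}_\omega$ shows it remains eventually finished in $\mathcal{N}_\omega$, so $u$ carries no $\mu$-defect. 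Thus $\mathcal{N}_\omega$ is a perfect $\Sigma$-network and $A = L(u_0)$, which completes the proof.

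The only delicate point is ensuring that Proposition \ref{p:stayfinished}, stated for finite networks, really does transfer finishedness across the infinite union; this is taken care of by noting that for each $u$ the structure relevant to its deferrals (its label, and its successors or predecessors once $u \in S_F$ or $u \in S_P$) is frozen from stage $n+1$ onward, so the witness data for finishedness at stage $n+1$ is literally the same witness data at every later stage and therefore in $\mathcal{N}_\omega$.
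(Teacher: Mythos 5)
Your proposal is correct and follows essentially the same route as the paper: iterate Proposition \ref{p:alldef} to build a chain $\mathcal{N}_0 \sqsubseteq \mathcal{N}_1 \sqsubseteq \cdots$ starting from the one-point network labelled $A$, take the union, and use Proposition \ref{p:stayfinished} together with the freezing of successors/predecessors of saturated nodes to conclude that no defect survives in the limit. The extra care you take in checking that the union is a network and that finishedness transfers to the infinite union is a welcome elaboration of details the paper leaves implicit, but it is not a different argument.
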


\begin{proof}
First, we define a network $\mathcal{N}_0 = (G_0,L_0,S_F^0,S_P^0)$ by setting $G_0$ to be a singleton $u$ with empty set of edges, $L_0(u) = A$, $S_F^0 = S_P^0 = \emptyset$. Given that we have defined the network $\mathcal{N}_i$ for $i \in \omega$, let $\mathcal{N}_{i + 1}$ be a network with $\mathcal{N}_i \sqsubseteq \mathcal{N}_{i + 1}$ and such that none of the nodes in $\mathcal{N}_i$ are  $\fdia$-defects, or $\bdia$-defects, or contain any $\mu$-defects, in $\mathcal{N}_{i + 1}$. Such an extension is guaranteed to exist by Proposition \ref{p:alldef}. By this inductive procedure we find an infinite chain:
$$\mathcal{N}_0 \sqsubseteq \mathcal{N}_1 \sqsubseteq \mathcal{N}_2 \sqsubseteq ...$$
Set $\mathcal{N}_\omega = \bigcup_{j \in \omega} \mathcal{N}_j$. It is not hard to see that $\mathcal{N}_i \sqsubseteq \mathcal{N}_\omega$ for all $i \in \omega$. So if a node $v$ in $\mathcal{N}_\omega$ is an $\fdia$-defect, or a $\bdia$-defect, or contains a $\mu$-defect, then by Proposition \ref{p:stayfinished} this defect must be present in $\mathcal{N}_{i + 1}$, where $i$ is the smallest index for which the node $u$ appears in $\mathcal{N}_i$ . But this is a contradiction by definition of $\mathcal{N}_{i + 1}$, so we are done. 
\end{proof}

We can now prove our main theorem:

\begin{proof}[Proof of Theorem \ref{t:main}]
Suppose $\nvdash \varphi$. Then $\neg \varphi$ is consistent. Let $\Sigma$ be the Fischer-Ladner closure of $\neg\varphi$ and let $\Gamma$ be a maximal consistent set containing $\neg\varphi$. By Proposition \ref{p:perfect} there is a perfect $\Sigma$-network $\mathcal{N}$ and a node $u$ in $\mathcal{N}$ labelled $\Gamma \cap \Sigma$. By Proposition \ref{p:ismodel} we thus get $\bbS_\mathcal{N},u \Vdash \neg \varphi$, so $\nVdash \varphi$ as required.
\end{proof}

\section{Removing the guards}
\label{guardedness}
Finally, we show how to remove the assumption of guardedness used in Theorem \ref{t:main}.

\begin{prop}
\label{p:guardification}
For each disjunctive fixpoint connective $\chi$, there exists a guarded disjunctive fixpoint connective $\gamma$ such that:
$$\logicsym_{\{\chi,\gamma\}} \vdash \sharp_\chi \vec{z}\leftrightarrow  \sharp_\gamma \vec{z}$$
\end{prop}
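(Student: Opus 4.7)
The plan is to split $\chi$ syntactically into a guarded part and an unguarded part, and then to use the Kozen--Park induction rule to show that the unguarded part can be absorbed into the guarded one without changing the least fixpoint. Concretely, I define, by structural recursion on the disjunctive grammar, a formula $\tilde\chi(x,\vec q)$ (which will be $\gamma$) together with an auxiliary $x$-free formula $\alpha_\chi(\vec q)$ by: $\tilde\bot=\bot$, $\tilde\top=\top$, $\tilde x=\bot$, $\widetilde{\theta\wedge\varphi}=\theta\wedge\tilde\varphi$, $\widetilde{\varphi_1\vee\varphi_2}=\tilde\varphi_1\vee\tilde\varphi_2$, $\widetilde{\fnabla\Gamma}=\fnabla\Gamma$ and the analogous clause for $\bnabla$; dually, $\alpha_\bot=\alpha_\top=\alpha_{\fnabla\Gamma}=\alpha_{\bnabla\Gamma}=\bot$, $\alpha_x=\top$, $\alpha_{\theta\wedge\varphi}=\theta\wedge\alpha_\varphi$ and $\alpha_{\varphi_1\vee\varphi_2}=\alpha_{\varphi_1}\vee\alpha_{\varphi_2}$. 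Intuitively, $\tilde\chi$ is obtained from $\chi$ by replacing each \emph{unguarded} occurrence of $x$ by $\bot$, while $\alpha_\chi$ records the propositional context along which those replacements occur.

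A routine structural induction then establishes three properties: (i) $\tilde\chi$ is again disjunctive, of the same orientation as $\chi$, and every occurrence of $x$ in $\tilde\chi$ is guarded; (ii) $\alpha_\chi$ lies in $\languagesym(\mathtt Q)$, i.e.\ contains no $x$; and (iii) the equivalence $\chi(x,\vec q)\leftrightarrow\tilde\chi(x,\vec q)\vee(\alpha_\chi(\vec q)\wedge x)$ is derivable in $\logicsym$ by propositional reasoning and distributivity. Setting $\gamma:=\tilde\chi$, the two halves of the required equivalence follow by two applications of the induction rule. From $\gamma\vdash\chi$ (a consequence of (iii) and monotonicity) together with the pre-fixpoint axiom for $\chi$ we obtain $\gamma(\sharp_\chi\vec z,\vec z)\vdash\chi(\sharp_\chi\vec z,\vec z)\vdash\sharp_\chi\vec z$, and the induction rule for $\gamma$ delivers $\sharp_\gamma\vec z\vdash\sharp_\chi\vec z$. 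For the converse, substituting $\sharp_\gamma\vec z$ for $x$ in (iii) gives $\chi(\sharp_\gamma\vec z,\vec z)\vdash\tilde\chi(\sharp_\gamma\vec z,\vec z)\vee(\alpha_\chi(\vec z)\wedge\sharp_\gamma\vec z)$; the first disjunct collapses to $\sharp_\gamma\vec z$ by the pre-fixpoint axiom for $\gamma$, the second is trivially below $\sharp_\gamma\vec z$, and the induction rule for $\chi$ then yields $\sharp_\chi\vec z\vdash\sharp_\gamma\vec z$.

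The only genuine content lies in verifying (i): one has to check that the recursion preserves the disjunctive grammar and the forward/backward orientation, and in particular that in the clause $\widetilde{\theta\wedge\varphi}=\theta\wedge\tilde\varphi$ the side formula $\theta\wedge\alpha_\varphi$ contributed to $\alpha_\chi$ really is $x$-free, which uses precisely the defining side-condition of the disjunctive grammar that $x$ does not occur in $\theta$. I do not expect this to pose any real obstacle, since the construction only removes occurrences of $x$ and leaves all modal subformulas untouched; the whole result is then a syntactic decomposition combined with two straightforward invocations of the Kozen--Park rule.
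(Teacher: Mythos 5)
Your proposal is correct and takes essentially the same route as the paper: the paper rewrites $\chi(x,\vec{q})$ as $(x \wedge \gamma_1(x,\vec{q})) \vee \gamma_2(x,\vec{q})$ with $x$ guarded in both parts and sets $\gamma = \gamma_2$, which is exactly your decomposition $\tilde\chi \vee (\alpha_\chi \wedge x)$ (with $\gamma_1 = \alpha_\chi$, a fortiori guarded since it is $x$-free), and the two applications of the Kozen--Park rule you spell out are precisely the ``routine exercise in fixpoint logic'' that the paper delegates to a citation. Your write-up simply makes explicit the structural recursion and the derivations that the paper leaves as standard.
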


\begin{proof}
By a standard argument, the formula $\chi(x,\vec{q})$ can be rewritten equivalently (with respect to provable equivalence in the system $\logicsym$) as a formula of the form:
$$(x \wedge \gamma_1(x,\vec{q})) \vee \gamma_2(x,\vec{q})$$
where $x$ is guarded in $\gamma_1,\gamma_2$ and $\gamma_1,\gamma_2$ are disjunctive. It is a routine exercise in fixpoint logic to show that: $$\logicsym_{\{\chi,\gamma\}} \vdash \sharp_\chi \vec{z}\leftrightarrow  \sharp_{\gamma_2}\vec{z}.$$
See \cite{enqv:comp16a} for more details.
\end{proof}

\begin{defi}
A set of fixpoint connectives $\Delta$ is said to be a \emph{guardification} of $\Gamma$ if:
$$\Delta = \{\gamma_2(x,\vec{q}) \mid \gamma(x,\vec{q}) \in \Gamma\}$$
where, for each $\gamma \in \Gamma$, $\gamma_1(x,\vec{q}),\gamma_2(x,\vec{q})$ are formulas in which $x$ is guarded, such that 
$$\logicsym \vdash \gamma(x,\vec{q}) \leftrightarrow (x \wedge \gamma_1(x,\vec{q})) \vee \gamma_2(x,\vec{q})$$
\end{defi}

\begin{prop}
\label{p:gconservative}
Let $\Delta$ be a guardification of $\Gamma$. Then the logic $\logicsym_{\Gamma \cup \Delta}$ is a conservative extension of $\logicsym_{\Gamma}$. 
\end{prop}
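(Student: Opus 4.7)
My plan is to reduce this syntactic conservativity statement to an algebraic expansion property, using the algebraic completeness theorem (Proposition \ref{p:algcompleteness}). Specifically, it suffices to show that every $\logicsym_\Gamma$-algebra $\alg$ extends to a $\logicsym_{\Gamma \cup \Delta}$-algebra $\alg'$ on the same carrier, with identical boolean, modal, and $\sharp_\chi$-operations for every $\chi \in \Gamma$. Granting this, if $\varphi \in \languagesym_\Gamma(\Prop)$ and $\logicsym_\Gamma \nvdash \varphi$, algebraic completeness yields some $\logicsym_\Gamma$-algebra $\alg$ with $\alg \not\vDash \varphi$; since $\varphi$ contains no $\sharp_{\gamma_2}$ for $\gamma_2 \in \Delta$, an easy induction on formula complexity shows that its interpretation in $\alg'$ coincides with its interpretation in $\alg$, so $\alg' \not\vDash \varphi$, giving $\logicsym_{\Gamma \cup \Delta} \nvdash \varphi$ by algebraic soundness of the larger system.

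The expansion is built as follows. For each $\gamma \in \Gamma$, pick decomposition formulas $\gamma_1, \gamma_2$ with $x$ guarded in both and with $\logicsym \vdash \gamma(x,\vec{q}) \leftrightarrow (x \wedge \gamma_1(x,\vec{q})) \vee \gamma_2(x,\vec{q})$, as guaranteed by the definition of guardification. Then define
$$(\sharp_{\gamma_2})_{\alg'}(\vec{b}) \;:=\; (\sharp_\gamma)_{\alg}(\vec{b})$$
for every tuple $\vec{b}$. The crucial observation is that the maps $\gamma_\alg(-, \vec{b})$ and $(\gamma_2)_\alg(-, \vec{b})$ have exactly the same pre-fixpoints, and hence the same least pre-fixpoint. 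One direction is immediate since $\gamma_2 \leq (x \wedge \gamma_1) \vee \gamma_2$ holds in $\alg$ by the provable equivalence, so any pre-fixpoint of $\gamma_\alg(-, \vec{b})$ is a pre-fixpoint of $(\gamma_2)_\alg(-, \vec{b})$. Conversely, if $(\gamma_2)_\alg(a, \vec{b}) \leq a$, then $a \wedge (\gamma_1)_\alg(a, \vec{b}) \leq a$ trivially, and therefore
$$\gamma_\alg(a, \vec{b}) \;=\; (a \wedge (\gamma_1)_\alg(a, \vec{b})) \vee (\gamma_2)_\alg(a, \vec{b}) \;\leq\; a.$$
So $(\sharp_{\gamma_2})_{\alg'}(\vec{b})$ is indeed the least pre-fixpoint of $(\gamma_2)_\alg(-,\vec{b})$, which verifies both the pre-fixpoint axiom and the Kozen--Park induction rule for the new connective $\sharp_{\gamma_2}$.

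With this, $\alg'$ satisfies all axioms and rules of $\logicsym_{\Gamma \cup \Delta}$: the propositional, modal, and converse axioms are inherited from $\alg$; the fixpoint principles for $\chi \in \Gamma$ are unchanged; and those for $\gamma_2 \in \Delta$ have just been established. I do not expect any real obstacle here---the argument is essentially a direct algebraic observation that guardification does not disturb least pre-fixpoints. The only point requiring a moment's care is confirming that the semantic interpretation of a formula in $\languagesym_\Gamma(\Prop)$ is truly unchanged under expansion, but this is forced by the fact that the definition of the operations used by such formulas (boolean, modal, and $\sharp_\chi$ for $\chi \in \Gamma$) is literally copied from $\alg$ to $\alg'$.
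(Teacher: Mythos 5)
Your proposal is correct and follows essentially the same route as the paper: both reduce conservativity to the algebraic completeness theorem and then verify that the least pre-fixpoint of $\chi_\alg(-,\vec{b})$ also serves as the least pre-fixpoint of $(\gamma_2)_\alg(-,\vec{b})$, via exactly the computation $(a \wedge (\gamma_1)_\alg(a,\vec{b})) \vee (\gamma_2)_\alg(a,\vec{b}) \leq a$. Your observation that the two maps share \emph{all} pre-fixpoints is a slightly cleaner packaging of the same inequalities, but there is no substantive difference.
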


\begin{proof}
We claim that every $\logicsym_\Gamma$-algebra is also a $\logicsym_{\Gamma \cup \Delta}$-algebra, from which the result follows by algebraic completeness of both logics. 
To prove the claim, clearly the map $\chi_\alg(-,\vec{b})$ has a least pre-fixpoint if $\chi(x,\vec{q}) \in \Gamma$ since $\alg$ was a $\logicsym_\Gamma$-algebra. For $\gamma_2(x,\vec{q}) \in \Delta$, pick some $\chi(x,\vec{q}) \in \Gamma$ such that $\logicsym \vdash \chi(x,\vec{q}) \leftrightarrow (x \wedge \gamma_1(x,\vec{q}) \vee \gamma_2(x,\vec{q})$ and pick a tuple $\vec{b} \in A^n$. Since $\alg$ was a $\logicsym_\Gamma$-algebra, the map $\chi_\alg(-,\vec{b})$ has a least pre-fixpoint in $\alg$, call it $l$. We claim that $l$ is a least pre-fixpoint of the map $(\gamma_2)_\alg(-,\vec{b})$ as well. 

To see that $l$ is a pre-fixpoint, it suffices to note that $\chi_\alg(l,\vec{b}) \leq l$ implies $(\gamma_2)_\alg(l,\vec{b}) \leq l$ since $(\gamma_2)_\alg(l,\vec{b}) \leq  \chi_\alg(l,\vec{b})$.
To show that $l$ is the least pre-fixpoint, let $a$ be any pre-fixpoint for $(\gamma_2)_\alg(-,\vec{b})$. To show that $l \leq a$, we only need to show that $a$ is a pre-fixpoint for the map $\chi_\alg(-,\vec{b})$ as well. We know that: $$\logicsym \vdash \chi(x,\vec{q}) \leftrightarrow (x \wedge \gamma_1(x,\vec{q})) \vee \gamma_2(x,\vec{q})$$ 
and so, since $\alg$ was a $\logicsym$-algebra, we get:
\begin{eqnarray*}
\chi_\alg(a,\vec{b}) & = &  (a \wedge_\alg (\gamma_1)_\alg(a,\vec{b})) \vee_\alg (\gamma_2)_\alg(a,\vec{b}) \\
& \leq &  a \vee_\alg (\gamma_2)_\alg(a,\vec{b}) \\
& \leq &  a \vee_\alg a \\
& =  & a
\end{eqnarray*}
as required. 
\end{proof}

\begin{prop}
\label{p:gtransfercompleteness}
Suppose that $\Gamma$ is a set of fixpoint connectives, and suppose that $\Delta$ is a guardification of $\Gamma$. If $\logicsym_\Delta$ is Kripke complete, then so is $\logicsym_\Gamma$.
\end{prop}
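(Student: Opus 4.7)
The plan is to set up a translation $(\cdot)^{*}: \languagesym_{\Gamma}(\Prop) \to \languagesym_{\Delta}(\Prop)$ that replaces each unguarded fixpoint $\sharp_\chi$ by its guarded counterpart in $\Delta$, run the completeness argument for $\logicsym_{\Delta}$ on the translated formula, and then pull the proof back to $\logicsym_{\Gamma}$ via the conservativity result of Proposition~\ref{p:gconservative}. Concretely, for each $\chi(x,\vec{q}) \in \Gamma$ the definition of guardification fixes a specific $\gamma_{\chi}(x,\vec{q}) \in \Delta$ together with a $\gamma_{1}^{\chi}$ such that $\logicsym \vdash \chi(x,\vec{q}) \leftrightarrow (x \wedge \gamma_{1}^{\chi}(x,\vec{q})) \vee \gamma_{\chi}(x,\vec{q})$. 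I would then define $(\cdot)^{*}$ recursively so that it is the identity on propositional variables, commutes with the booleans and with $\fdia,\bdia$, and sends $\sharp_{\chi}(\theta_{1},\ldots,\theta_{n})$ to $\sharp_{\gamma_{\chi}}(\theta_{1}^{*},\ldots,\theta_{n}^{*})$.

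Given a Kripke valid formula $\varphi \in \languagesym_{\Gamma}(\Prop)$, the first substantive step is to show $\Vdash \varphi \leftrightarrow \varphi^{*}$. This reduces, by a straightforward induction on the structure of $\varphi$ (and the obvious substitution lemma), to the fact that in every Kripke model $\bbS$ one has $\tset{\sharp_{\chi}\vec{z}}_{\bbS} = \tset{\sharp_{\gamma_{\chi}}\vec{z}}_{\bbS}$. This in turn is exactly the content of Proposition~\ref{p:guardification} combined with soundness: the algebraic argument sketched there shows the least pre-fixpoints of $\chi_{\bbS}$ and $(\gamma_{\chi})_{\bbS}$ coincide in the complete lattice $\funP(W)$. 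Hence $\Vdash \varphi^{*}$, and since $\varphi^{*} \in \languagesym_{\Delta}(\Prop)$ and $\logicsym_{\Delta}$ is assumed Kripke complete, we obtain $\logicsym_{\Delta} \vdash \varphi^{*}$.

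To finish, I would transport this derivation into the larger logic $\logicsym_{\Gamma \cup \Delta}$, which trivially extends $\logicsym_{\Delta}$, so $\logicsym_{\Gamma \cup \Delta} \vdash \varphi^{*}$. Applying Proposition~\ref{p:guardification} inside $\logicsym_{\Gamma \cup \Delta}$ together with the Replacement of Equivalents rule yields $\logicsym_{\Gamma \cup \Delta} \vdash \varphi \leftrightarrow \varphi^{*}$, and therefore $\logicsym_{\Gamma \cup \Delta} \vdash \varphi$. Since $\varphi$ itself lies in $\languagesym_{\Gamma}(\Prop)$, Proposition~\ref{p:gconservative} finally gives $\logicsym_{\Gamma} \vdash \varphi$, as required.

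This is essentially a clean transfer argument rather than a deep new construction, and the only place where one must be careful is in bookkeeping which logic/language each formula and derivation lives in at each stage, and in checking that the semantic equivalence $\tset{\sharp_{\chi}\vec{z}}_{\bbS} = \tset{\sharp_{\gamma_{\chi}}\vec{z}}_{\bbS}$ is indeed already available from Proposition~\ref{p:guardification} rather than requiring a fresh fixpoint-theoretic argument. I expect the bookkeeping step—in particular the verification that the recursively defined translation commutes with substitution of parameters, which is needed for the induction in paragraph two to go through cleanly—to be the most error-prone part, but it is entirely routine.
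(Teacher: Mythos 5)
Your proposal is correct and follows essentially the same route as the paper: the same translation $t$ commuting with booleans and modalities and replacing $\sharp_\chi$ by its guarded counterpart, the same appeal to Proposition~\ref{p:guardification} plus soundness of $\logicsym_{\Gamma\cup\Delta}$ for the semantic equivalence $\Vdash \varphi \leftrightarrow t(\varphi)$, the same inductive derivation of $\logicsym_{\Gamma\cup\Delta} \vdash \varphi \leftrightarrow t(\varphi)$, and the same final appeal to conservativity (Proposition~\ref{p:gconservative}). No substantive differences.
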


\begin{proof}
Suppose the formula $\varphi \in \languagesym_\Gamma$ is valid on all Kripke frames.  Let $\Delta$ be a guardification of $\Gamma$, and for each fixpoint connective $\chi \in \Gamma$, let $t(\chi)$ be some  connective in $\Delta$ such that $\logicsym_{\Gamma \cup \Delta} \vdash \sharp_\chi \vec{z} \leftrightarrow \sharp_{t(\chi)} \vec{z}$.  We define a translation $t : \languagesym_\Gamma \to \languagesym_\Delta$ as follows: for a purely boolean formula $\pi$ (i.e. with no occurrences of modal or fixpoint operators) we set $t(\pi) = \pi$, and we let the translation $t$ commute with boolean connectives and modal operators. Finally, we set $t(\sharp_\chi (\theta_1,...,\theta_n)) = \sharp_{t(\chi)}(t(\theta_1),...,t(\theta_n))$. A straightforward induction  on $\varphi$, using soundness of the system $\logicsym_{\Gamma \cup \Delta}$ for the fixpoint case, shows that $\varphi$ and $t(\varphi)$ are semantically equivalent (w.r.t. the Kripke semantics), so $t(\varphi)$ is also valid on all Kripke frames.  By completeness of $\logicsym_\Delta$, we get $\logicsym_\Delta \vdash t(\varphi)$, and so clearly $\logicsym_{\Gamma \cup \Delta} \vdash t(\varphi)$.

We claim that:
$$\logicsym_{\Gamma \cup \Delta} \vdash \varphi \leftrightarrow t(\varphi)$$
If we can prove this we are done, since we then get $\logicsym_{\Gamma \cup \Delta} \vdash \varphi $ and so $\logicsym_{\Gamma} \vdash \varphi $ by Proposition \ref{p:gconservative}. 

The claim is proved by induction on the complexity of the formula $\varphi$. The purely propositional case is trivial, and modal operators and boolean connectives are both handled by just unfolding the definition of the map $t$ and applying the induction hypothesis. For $\varphi = \sharp_\chi (\theta_1,...,\theta_n)$, where $\chi \in \Gamma$, we have $\logicsym_{\Gamma \cup \Delta} \vdash \sharp_\chi \vec{z} \leftrightarrow \sharp_{t(\chi)} \vec{z}$ by definition of $t$, and  by the induction hypothesis on $\theta_1,...,\theta_n$ we have $\logicsym_{\Gamma \cup \Delta}\vdash  \theta_i \leftrightarrow t(\theta_i)$ for $1 \leq i \leq n$. So we get  
$$\logicsym_{\Gamma \cup \Delta} \vdash \sharp_\chi (\theta_1,...,\theta_n) \leftrightarrow \sharp_{t(\chi)} (t(\theta_1),...,t(\theta_n))$$
and we are done since  $t(\sharp_\chi (\theta_1,...,\theta_n)) = \sharp_{t(\chi)}(t(\theta_1),...,t(\theta_n))$.
\end{proof}

Putting these results together, we get a completeness result for all disjunctive fixpoint connectives, guarded or no:

\begin{theo}
\label{t:extendedguarded}
Let $\Gamma$ be any set of disjunctive fixpoint connectives. Then the logic $\logicsym_\Gamma$ is Kripke-complete. 
\end{theo}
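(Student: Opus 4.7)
The plan is to deduce this directly by combining the three preceding propositions of this section with the main theorem (Theorem \ref{t:main}). The route is essentially mechanical: given an arbitrary set $\Gamma$ of disjunctive fixpoint connectives, I would first invoke Proposition \ref{p:guardification} to produce, for each $\chi \in \Gamma$, a guarded disjunctive connective $\gamma_2$ such that the two corresponding fixpoint terms are provably equivalent in $\logicsym_{\{\chi,\gamma_2\}}$. Collecting these $\gamma_2$'s gives a guardification $\Delta$ of $\Gamma$ in the sense of the definition above Proposition \ref{p:gconservative}.

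Next I would observe the crucial structural point: $\Delta$ is not merely a set of guarded fixpoint connectives but a set of \emph{guarded disjunctive} fixpoint connectives. This is exactly what the statement of Proposition \ref{p:guardification} guarantees, since the decomposition $\chi(x,\vec{q}) \leftrightarrow (x \wedge \gamma_1(x,\vec{q})) \vee \gamma_2(x,\vec{q})$ is produced with $\gamma_1,\gamma_2$ themselves disjunctive and with $x$ guarded in both. Hence $\Delta$ falls within the scope of Theorem \ref{t:main}, which yields Kripke completeness of $\logicsym_\Delta$.

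Finally I would apply Proposition \ref{p:gtransfercompleteness}: since $\Delta$ is a guardification of $\Gamma$ and $\logicsym_\Delta$ is Kripke-complete, so is $\logicsym_\Gamma$. No new combinatorial or algebraic work is required; the proof is a single line chaining Proposition \ref{p:guardification}, Theorem \ref{t:main}, and Proposition \ref{p:gtransfercompleteness}. There is no real obstacle here, since all the heavy lifting has been done: the completeness proof for the guarded case in Theorem \ref{t:main} supplies the model construction, Proposition \ref{p:gconservative} (used inside Proposition \ref{p:gtransfercompleteness}) supplies the conservativity needed to move statements back to the unguarded language, and Proposition \ref{p:guardification} supplies the syntactic rewriting.
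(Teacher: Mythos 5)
Your proposal is correct and follows exactly the same route as the paper's own proof: obtain a guardification $\Delta$ via Proposition \ref{p:guardification}, apply Theorem \ref{t:main} to $\logicsym_\Delta$, and transfer completeness back with Proposition \ref{p:gtransfercompleteness}. Your explicit remark that $\Delta$ consists of \emph{guarded disjunctive} connectives (so that Theorem \ref{t:main} actually applies) is a point the paper leaves implicit, but it is the same argument.
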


\begin{proof}
By Proposition \ref{p:guardification}, $\Gamma$ has a guardification $\Delta$, and by Theorem \ref{t:main} the logic $\logicsym_\Delta$ is Kripke complete. So by Proposition \ref{p:gtransfercompleteness}, $\logicsym_\Gamma$ is Kripke complete too.
\end{proof}

%


\section{Future research}
We conclude by listing a few topics for future research:
\begin{itemize}
\item Can the  method for axiomatizing arbitrary flat fixpoint logics via a ``simulation'' by disjunctive systems of fixpoint equations introduced in \cite{sant:comp10} be adapted to this setting? This could be quite problematic, since we required our disjunctive formulas to be either forward or backward looking. It is clear that allowing fixpoint connectives with a mix of both modalties results in an increased expressive power compared to the logic with only forward and backwards looking connectives, so simulating arbitrary connectives by disjunctive ones should not be possible. The key issue to address here is to find a more general notion of ``disjunctive formula'', which is restrictive enough so that the completeness proof still goes through, but allows mixing of forward and backwards modalities.
\item Can we do without the constraint of ``disjunctiveness'' altogether, and just prove completeness for the Kozen axioms for arbitrary flat modal fixpoint logics? This question has still not been resolved even in the case of flat fixpoint logics with only forward-looking modalities as studied in \cite{sant:comp10}. Perhaps one could solve the problem by further exploiting the use of ``foci'' for model constructions.
\item With regards to the previous question, recent  work by Afshari and Leigh \cite{afsh:cut17}, in which completeness is proved for the $\mu$-calculus without using disjunctive normal forms, might provide clues towards a solution. However, their proof does make use of the fact that a consistent formula $\gamma \wedge \mu x. \varphi(x)$ can be ``strengthened'' to a consistent formula $\gamma \wedge \mu x. \varphi(\neg \gamma \wedge x)$, and flat fixpoint languages need not be closed under such strengthenings in general.
\item Our completeness proof implies that the free algebras of logics $\logicsym_\Gamma$, where $\Gamma$ is a disjunctive set of fixpoint connectives, are constructive. Is there a purely algebraic proof of this? We know that finitary $\cO$-adjoints will not work, but this does not entail that no argument by purely algebraic reasoning could.
\item Finally, we hope to continue this line of work to study the problem of axiomatizing enriched modal $\mu$-calculi and fragments of such logics more generally. In particular, a challenge would be to prove completeness of Kozen's axioms for the full two-way $\mu$-calculus and for the hybrid $\mu$-calculus. Related work here is \cite{tamu:smal13}, where an axiomatization of the hybrid $\mu$-calculus is provided, relying on a small model theorem. However, this is quite different from Kozen's axiom system, and the rule for fixpoints directly refers to the bound on the size of finite satisfying models. An immediate thing to check is whether our completeness proof  for flat fixpoint logics with converse can also be adapted to axiomatize flat hybrid fixpoint logics.
\end{itemize}

{
\bibliographystyle{plain}
\bibliography{mu,extra,ml-book,nabla}
}

\end{document}